\documentclass[12pt,reqno]{amsart}

\usepackage{enumerate}
\usepackage{graphicx}
\usepackage{epsfig}

\usepackage{xcolor}

\usepackage{amssymb,amsmath,amsthm,amsfonts}

\usepackage[letterpaper, margin=1in]{geometry} 

\usepackage {latexsym}

\usepackage{bbm}

\hyphenation{di-men-sion-al}
\linespread{1.4}
\allowdisplaybreaks


\newcommand{\alphaS}{\alpha_{\mbox{\tiny{\textrm{S}}}}}

\newcommand{\nc}{\newcommand}

\nc{\sV}{\mathbf{s}}

\nc{\AV}{\mathbf{A}}
\nc{\BV}{\mathbf{B}}
\nc{\DV}{\mathbf{D}}
\nc{\EV}{\mathbf{E}}

\nc{\nullV}{\boldsymbol{0}}

\nc{\ellV}{\vec{\ell}}

\nc{\Cset}{\mathbb{C}}
\nc{\Nset}{\mathbb{N}}
\nc{\Rset}{\mathbb{R}}
\nc{\RR}{\mathbb{R}}
\nc{\Sset}{\mathbb{S}}
\nc{\Zset}{\mathbb{Z}}

\nc{\mEL}{m_{\mathrm{e}}}
\nc{\mPR}{m_{\mathrm{p}}}
\nc{\mZ}{m_{\mathrm{z}}}

\nc{\les}{\lesssim}
\nc{\nit}{\noindent}
\nc{\nn}{\nonumber}
\nc{\D}{\partial}
\nc{\diff}[2]{\frac{d #1}{d #2}}
\nc{\diffn}[3]{\frac{d^{#3} #1}{d {#2}^{#3}}}
\nc{\pdiff}[2]{\frac{\partial #1}{\partial #2}}
\nc{\pdiffn}[3]{\frac{\partial^{#3} #1}{\partial{#2}^{#3}}}
\nc{\abs}[1] {\lvert #1 \rvert}
\nc{\cAc}{{\cal A}_c}
\nc{\cE}{{\mathcal E}}
\nc{\cF}{{\cal F}}
\nc{\cP}{{\cal P}}
\nc{\cV}{{\cal V}}
\nc{\cQ}{{\cal Q}}
\nc{\cGin}{{\cal G}_{\rm in}}
\nc{\cGout}{{\cal G}_{\rm out}}
\nc{\cO}{{\cal O}}
\nc{\Lav}{{\cal L}_{\rm av}}
\nc{\cL}{{\cal L}}
\nc{\cB}{{\cal B}}
\nc{\cZ}{{\cal Z}}
\nc{\cR}{{\cal R}}
\nc{\cT}{{\cal T}}
\nc{\cY}{{\cal Y}}
\nc{\cX}{{\cal X}}
\nc{\cXT}{{{\cal X}(T)}}
\nc{\cBT}{{{\cal B}(T)}}
\nc{\vD}{{\vec \mathcal{D}}}
\nc{\efield}{\mathcal{E}}
\nc{\mE}{\mathcal{E}}
\nc{\vE}{{\vec \efield}}
\nc{\vB}{{\vec \mathcal{B}}}
\nc{\vH}{{\vec \mathcal{H}}}
\nc{\F}{  \mathcal{F} }
\nc{\ty}{{\tilde y}}
\nc{\tu}{{\tilde u}}
\nc{\tV}{{\tilde V}}
\nc{\Pc}{{\bf P_c}}
\nc{\bx}{{\bf x}}
\nc{\bX}{{\bf X}}
\nc{\bXYZ}{{\bf XYZ}}
\nc{\bY}{{\bf Y}}
\nc{\bF}{{\bf F}}
\nc{\bS}{{\bf S}}
\nc{\dV}{{\delta V}}
\nc{\dE}{{\delta E}}
\nc{\TT}{{\Theta}}
\nc{\dPsi}{{\delta\Psi}}
\nc{\order}{{\cal O}}
\nc{\Rout}{R_{\rm out}}
\nc{\eplus}{e_+}
\nc{\eminus}{e_-}
\nc{\epm}{e_\pm}

\nc{\vnabla}{{\vec\nabla}}
\nc{\G}{\Gamma}
\nc{\w}{\omega}
\nc{\mh}{h}
\nc{\mg}{g}
\nc{\vphi}{\varphi}
\nc{\tlambda}{\tilde\lambda}
\nc{\be}{\begin{equation}}
\nc{\ee}{\end{equation}}
\nc{\ba}{\begin{eqnarray}}
\nc{\ea}{\end{eqnarray}}

\nc{\g}{\gamma}
\nc{\ol}{\overline}
\nc{\n}{\nu}

\newtheorem{thm}{Theorem}[section]
\newtheorem{lemm}[thm]{Lemma}
\newtheorem{prop}[thm]{Proposition}
\newtheorem{cor}[thm]{Corollary}

\newtheorem{rmk}[thm]{Remark}

\newtheorem{asmps}[thm]{Assumptions}

\nc{\pr}{\partial_r}
\nc{\pt}{\partial_t}
\nc{\pT}{\partial_T}
\nc{\pz}{\partial_z}

\nc{\la}{\langle}
\nc{\ra}{\rangle}
\nc{\infint}{\int_{-\infty}^{\infty}}
\nc{\halfwidth}{6.5cm}
\nc{\figwidth}{10cm}
\newcommand{\f}{\frac}

\nc{\nlayers}{L} \nc{\nsectors}{M}
\nc{\indicator}{\mathbf{1}}
\nc{\Rhole}{R_{\rm hole}}
\nc{\Rring}{R_{\rm ring}}
\nc{\neff}{n_{\rm eff}}
\nc{\Frem}{F_{\rm rem}}
\nc{\R}{\mathbb R}
\nc{\C}{\mathbb C}
\nc{\Z}{\mathbb Z}
\nc{\DD}{\Delta}
\nc{\cD}{\mathcal D}
\nc{\lnorm}{\left\|}
\nc{\rnorm}{\right\|}
\nc{\rnormp}{\right\|_{\ell^{p,\eps}}}
\nc{\rar}{\rightarrow}
\nc{\mR}{\mathcal R}
\nc{\oo}{\"o}   

\sloppy
\nc{\os}{\overset{o}}

\nc{\eps}{\epsilon}
\nc{\veps}{\varepsilon}

\nc{\dd}{{\mathrm{d}}}

\nc{\mNULL}{{C_\alpha^{}}}
\nc{\phiNULL}{{C_\beta^{\prime}}}

\nc{\beq}{\begin{equation}}
\nc{\eeq}{\end{equation}}

%
%

\begin{document}

\title[General-relativistic hydrogen]{On general-relativistic hydrogen and hydrogenic ions}

\author[Kiessling, Tahvildar-Zadeh, Toprak]{Michael K.-H. Kiessling, A. Shadi Tahvildar-Zadeh, Ebru Toprak}

 \address{Department of Mathematics \\
Rutgers University \\
Piscataway, NJ 08854, U.S.A.}
\email{miki@math.rutgers.edu}
 \address{Department of Mathematics \\
Rutgers University \\
Piscataway, NJ 08854, U.S.A.}
\email{shadit@math.rutgers.edu}
\address{Department of Mathematics \\
Rutgers University \\
Piscataway, NJ 08854, U.S.A.}
\email{et400@math.rutgers.edu}

\begin{abstract}
 This paper studies how the static non-linear electromagnetic-vacuum spacetime of a point nucleus 
with negative bare mass affects the self-adjointness of the {general-relativistic} Dirac Hamiltonian for a test electron, 
without and with an anomalous magnetic moment.
 The study interpolates between the previously studied extreme cases of a test electron in (a) the Reissner--Weyl--Nordstr\"om
spacetime ({Maxwell's electromagnetic vacuum}), which sports a very strong curvature singularity with negative infinite bare mass,
and (b) the Hoffmann spacetime (Born or Born--Infeld's electromagnetic vacuum) with vanishing bare mass, 
which features the mildest possible curvature singularity.
 The main conclusion reached is: on electrostatic spacetimes of a point nucleus with a strictly negative bare mass
(which may be $-\infty$) essential self-adjointness fails unless the radial electric field diverges sufficiently fast at 
the nucleus \emph{and} the anomalous magnetic moment of the electron is taken into account. 
 Thus on the Hoffmann spacetime with (strictly) negative bare mass the Dirac Hamiltonian of a test electron, with or without 
anomalous magnetic moment, is not essentially self-adjoint.
 All these operators have self-adjoint extensions, though, with the usual essential spectrum $(-\infty,-\mEL c^2]\cup[\mEL c^2,\infty)$
and an infinite discrete spectrum located in the gap $(-\mEL c^2,\mEL c^2)$.\vspace{-30pt}
\end{abstract}


\date{Corrected version of Aug. 28, 2020; Printed: \today\\ \copyright (2020): The authors. Reproduction of this article for non-commercial purposes is 
permitted.}

\maketitle

\section{Introduction}

\subsection{State of Affairs}

 In non-relativistic physics, whether Newtonian mechanics or quantum mechanics, 
the gravitational and the electrical attraction between a point electron and a point proton
obey the same mathematical {force} law and only their coupling strengths differ --- though by a lot:
 If (in Gaussian units) $e$ denotes the elementary charge, $\mEL$ the empirical mass of the electron and 
$\mPR$ the one of the proton, and $G$ is Newton's constant of universal gravitation, then
$\frac{G\mPR\mEL}{e^2}=:\gamma_{pe}\approx 4.5\cdot 10^{-40}$. 
 Thus, in such theories gravity is an extremely weak pair interaction between electron and proton, 
compared to electricity, indeed so weak
that it is hard to imagine how any experimental study of the hydrogen atom's spectrum could possibly reveal its effects
--- assuming that non-relativistic quantum mechanics predicts the effect accurately enough for all practical purposes.
 Explicitly, the bound state spectrum of hydrogen in non-relativistic QM is readily obtained from the familiar Bohr 
formula through the replacement $e^2\mapsto e^2 + G\mEL\mPR$, viz.
\begin{equation}\label{BOHRspec}
\frac{1}{\mEL c^2} E_n^{\mbox{\tiny{Bohr}}}(Z,N;\gamma_{pe})\Big|_{Z=1,N=0}\Big.
= -\frac12 \alphaS^2 \frac{(1 + \gamma_{pe})^2}{1+\eps\ } \frac{1}{n^2},\qquad n\in\Nset;
\end{equation}
here, $\eps:=\mEL/\mPR\approx 1/1836$, and $\alphaS:={e^2}/{\hbar c}\approx 1/137.036$ is Sommerfeld's fine
structure constant, where $\hbar$ is the Planck constant divided by $2\pi$, and $c$ the speed of light.
 Note that $\frac{\mEL}{1+\eps} = \frac{\mEL\mPR}{\mEL +\mPR}$ is the reduced mass of the electron-proton system.
 We see that each Bohr level is lowered by a factor $\approx 1+ 10^{-39}$ compared
to the result for purely electrical Coulomb interaction, $ E_n^{\mbox{\tiny{Bohr}}}(1,0;0)$.
 This effect is almost 30 orders of magnitude smaller than the best spectral resolution achieved today.

 The non-relativistic gravitational effects on the spectrum of a hydrogenic ion are
slightly more pronounced, though not in any significant way. 
 To obtain the Bohr spectrum of a hydrogenic ion, replace the proton charge $e\mapsto Ze$ 
and the proton mass\footnote{Here, {$A(Z,N)\approx Z+N$, roughly} 
the number of nucleons in a nucleus.\vspace{-5pt}}  
{$\mPR\mapsto A(Z,N)\mPR$,  with $Z\in\Nset$, $N\in\{0,1,2,...\}$, and $A(Z,N)\geq Z$
(N.B.: $Z\leq 118$ and $N<200$ in the currently known chart of the nuclids, and
$Z\leq A(Z,N) \leq 3Z$ for the known long-lived nuclei; for hydrogen: $A(1,0)=1$).}
 Thus (\ref{BOHRspec}) is the $Z=1$ \&\ $N=0$ special case of
\begin{equation}\label{BOHRspecZ}
\frac{1}{\mEL c^2} E_n^{\mbox{\tiny{Bohr}}}(Z,N;\gamma_{pe}) 
=  -\frac12 \alphaS^2
\frac{\big(Z + \gamma_{pe} A(Z,N)\big)^2}{1+\eps/A(Z,N)} \frac{1}{n^2},\qquad n\in\Nset.
\end{equation}
 So $E_n^{\mbox{\tiny{Bohr}}}(Z,N;\gamma_{pe})$ differs from $E_n^{\mbox{\tiny{Bohr}}}(Z,N;0)$
by not more than $3\cdot 10^{-39} E_n^{\mbox{\tiny{Bohr}}}(Z,N;0)$.

\begin{rmk}
We recall that the Bohr model yields the same energy spectrum for hydrogenic atoms / ions as does the 
Schr\"odinger Hamiltonian.
 We also recall that the spectrum in the Born--Oppen\-heimer approximation (the electron is treated
as a \emph{test particle} in the Coulomb field of a fixed nucleus) 
is recovered by letting $\mPR\to\infty$, equivalently $\eps\to 0$ in (\ref{BOHRspec}), (\ref{BOHRspecZ}).
\end{rmk}

 The anticipated tininess of the gravitational effect in the hydrogen spectrum is also the reason why Sommerfeld \cite{Somm} 
did not generalize his special-relativistic calculations of the hydrogen spectrum (in the Born--Oppenheimer approximation) 
to the freshly created general-relativistic setting.
 In fact, Sommerfeld had consulted with Einstein prior to publication of \cite{Somm} whether it would be advisable to include the 
general-relativistic effects, but Einstein advised against it \cite{EtoS}, stating that the quantitative results would essentially
 agree with Sommerfeld's fine structure formula obtained by invoking only special relativity (for the kinetic energy of the 
electron), and only Coulomb electricity for the interaction between electron and proton; see also \cite{SommBOOK}.
 
 However, relativistic electricity (read: electromagnetism) and gravity (read: spacetime curvature) are no longer 
mathematically identical structures, and so their relative contributions to the atomic spectra cannot obviously be 
estimated merely in terms of a comparison of their coupling constants. 
 Sure enough, not long after Sommerfeld published his work on the relativistic hydrogen fine structure he was criticized 
by Wereide \cite{W} for not having mathematically demonstrated that general-relativistic effects were indeed so 
tiny as to be negligible.
 Eventually, Vallarta in his MIT Ph.D. thesis (the main results are published in \cite{V}) supplied mathematically definitive
estimates of the general-relativistic effects in the Bohr--Sommerfeld-type spectrum of hydrogen.
 Vallarta considered a test electron in the Reissner--Weyl--Nordstr\"om (RWN) spacetime with a naked timelike singularity, 
equipped with the electric charge of the proton, and the ADM mass equated with the empirical 
proton mass, and applied the Bohr--Sommerfeld quantization rules to the bound electron orbits.
 He concluded that the relativistic gravitational effects were immeasurably tiny, and so he did not even bother to actually 
compute their corrections to Sommerfeld's fine structure spectrum, although he could have done so with the help of perturbation theory.
 Such computations were done recently, for circular orbits, by Dreifus in her honors thesis at Rutgers \cite{Dr}.

 Even though Vallarta's estimates and Dreifus' perturbative computations have produced quantitatively tiny general-relativistic 
corrections to the special-relativistic Sommerfeld fine-structure spectrum of hydrogen using Bohr--Sommerfeld-type quantization,
it would be quite a mistake to {now} conclude from this that general relativity would always manifest itself only in 
form of a tiny perturbation of special-relativistic atomic spectral results. 
 As emphasized already, the general theory of relativity reveals that gravity is not a weaker attractive `clone' of electromagnetism, 
but a completely different `force of nature.' 
 There is little doubt nowadays that general relativity correctly predicts that nature is capable of forming black holes which 
can swallow unlimited amounts of matter as long as supplies will last. 
 Intuitively, therefore, one would be inclined to {suspect} that general relativity {should have} 
a destabilizing effect in the theory of large-$Z$ atoms. 
 At the very least one might expect a worsening of the spectral `large-$Z$ catastrophe' in the special-relativistic ($G=0$) 
Bohr--Sommerfeld theory of hydrogenic ions, where it occurs when the nuclear charge number $Z$ exceeds $1/\alphaS$
{and the bottom drops out from under the energy functional {because the electrical attraction
overpowers the angular momentum barrier of the circular motion}; see our Appendix~{A}}. 

 Curiously, `switching on {relativistic} gravity' instead \emph{removes} this `large-$Z$ catastrophe' of
 the special-relativistic Bohr--Sommerfeld-type model for {hydrogenic ions}.
 Namely, in our Appendix~{A} we show that for each $Z\in\Nset$ there is a unique Bohr--Sommerfeld-type spectrum
of the general-relativistic hydrogenic ion obtained from a minimum-energy variational principle.
 By contrast, in the special-relativistic Bohr--Sommerfeld-type model of a hydrogenic ion, 
the pertinent minimum-energy variational principle has no lower bound when $Z>1/\alphaS$. 

 The just mentioned catastrophe at $Z = 1/\alphaS$ in the special-relativistic Bohr--Sommerfeld model of hydrogenic ions
has a counterpart in the spectral theory of the special-relativistic Dirac Hamiltonian for a hydrogenic atom/ion \cite{Rose,Thaller,GreinerETal},
where there is also an earlier catastrophe at $Z = \sqrt{3}/2\alphaS$.
 We recall that this Dirac operator is essentially self-adjoint 
only if\footnote{Allowing $Z\in \Rset_+$ essential self-adjointness holds for $Z\leq  {\sqrt{3}}/{2\alphaS}$.}
$Z\leq 118$, yet it has a (unique) analytical extension (to $Z\in\Cset$) which is self-adjoint also when 
$Z\in\{119,...,137\}$, but the analytical 
extension is no longer self-adjoint 
when\footnote{Allowing $Z\in \Rset_+$ the analytical extension is self-adjoint for $Z\leq 1/\alphaS\approx 137.036$.\vspace{-5pt}}
$Z>137$; cf. \cite{WeiA,Narnhofer,Thaller,EsLo}. 
 As pointed out by Narnhofer \cite{Narnhofer}, for $Z {\geq 119}$ 
the deficiency indices of the Dirac operator restricted to a fixed angular momentum subspace
are $(1,1)$, so there always exist self-adjoint extensions of the formal Dirac operator for hydrogenic ions, 
but for $Z>137$ it is not clear which one, if any, is physically distinguished. 

\begin{rmk}
We recall that Sommerfeld's fine-structure formula for the energy spectrum of hydrogen agrees with the 
hydrogen spectrum obtained with Dirac's special-relativistic wave equation for an electron in the Coulomb
field of a fixed proton.
 The assignment of angular momentum quantum numbers in Sommerfeld's calculations of
course does not agree with the spectral formula of the special-relativistic Dirac Hamiltonian, for Sommerfeld
did not incorporate any form of electron spin. 
 The subtle reason for this remarkable coincidence of the Sommerfeld and Dirac energy spectra for hydrogen is nicely explained
in \cite{Keppeler}.
\end{rmk}

 Since the Bohr--Sommerfeld theory of the spectra of hydrogenic ions exactly captures their quantum-mechanical energy spectra 
in both the non-relativistic (Schr\"odinger) setting and in the special-relativistic (Dirac) setting for $Z\alphaS \leq 1$,
and since general relativity has a regularizing effect on the Bohr--Sommerfeld theory, 
\emph{at this point} it certainly would seem reasonable to expect that general relativity will have a regularizing 
effect also in the Dirac theory of hydrogenic spectra. 
 However, the opposite is true!

 Namely, as discovered by Cohen and Powers \cite{CP}, the general-relativistic Dirac Hamiltonian \cite{Erwin},
\cite{BC} for hydrogen differs dramatically from the familiar special-relativistic Dirac Hamiltonian for hydrogen.
 More precisely, like Vallarta, so also Cohen and Powers modelled general relativistic hydrogen as consisting of
a \emph{test electron} in the static Reissner--Weyl--Nordstr\"om spacetime of a fixed point proton. 
 While the special-relativistic Dirac Hamiltonian for hydrogenic ions is essentially self-adjoint 
(on the domain $C^\infty_c(\Rset^3\backslash\{0\})^4$) for all $Z\leq 118$ \cite{Narnhofer,Thaller}, 
Cohen and Powers discovered that the general-relativistic hydrogen Hamiltonian is not --- it has uncountably many 
self-adjoint extensions; the same conclusion holds for all $Z>0$ in the hydrogenic problem.
 By Stone's theorem, each one of these is the generator of a different unitary evolution, so
the question becomes: which one (if any) is the physically correct self-adjoint extension?
 If general-relativistic effects in the spectrum are immeasurably small, then 
empirical spectral data for hydrogen will not help to find the answer; cf. \cite{Parker}.

 The essential spectrum of any self-adjoint extension of the 
general-relativistic Dirac Hamiltonian of an electron in the RWN spacetime of a nucleus
was determined in \cite{Belgiorno}, and in Appendix C of \cite{BMB} Belgiorno et al. showed that there 
are infinitely many bound states in the gap $(-\mEL c^2, \mEL c^2)$ of the essential spectrum. 
 As far as we are aware, it is not known whether any of these point spectra converges to the Sommerfeld fine 
structure spectrum when $G\searrow 0$.
 
\begin{rmk}
 At the end of the day, the findings of Cohen and Powers vindicate the earlier expressed intuition
that general relativity might worsen the spectral `large-$Z$ catastrophe' in the special-relativistic ($G=0$) 
treatment of hydrogenic ions, except that this turns out to be true for the Dirac theory of the energy spectra, 
not for the Bohr--Sommerfeld theory. 
 Specifically, the first `large-$Z$ catastrophe' in the special-relativistic Dirac theory of hydrogenic ions (i.e.
the loss of essential self-adjointness when the nuclear charge number $Z$ exceeds the critical value $\sqrt{3}/2\alphaS$)
is worsened, with the critical $Z$-value reduced to $0$ if $G>0$.
\end{rmk}

Now, the Dirac Hamiltonian for a point electron in an externally generated magnetostatic induction field 
$\BV(\sV) = \nabla\times\AV(\sV)$ automatically endows the electron with a magnetic moment of magnitude 
$\mu_{\mbox{\tiny{Bohr}}} = \frac{1}{4\pi}\frac{h e}{\mEL c}$ and a $g$ factor of $2$. 
 Empirically, the electron does seem to have a magnetic moment which differs slightly from the Bohr magneton, though,
and the difference is known as its anomalous magnetic moment $\mu_a$.
 Using perturbative QED it has been computed in terms of a truncated power series in powers of $\alphaS$ (and $\log \alphaS$). 
 Interestingly, the leading order term in the expansion of the anomalous magnetic moment $\mu_a$ is independent of $\hbar$ and reads 
$\mu_{\mbox{\tiny{class}}} = \frac{1}{4\pi}\frac{e^3}{\mEL c^2}$, which we call \emph{the classical magnetic moment of the
electron}.
 It already gives a very accurate value for the anomalous magnetic moment of the electron.

  It has been known for a long time that the addition of an anomalous magnetic moment operator to the Dirac Hamiltonian of a test
electron with purely electrostatic interactions removes both of the spectral `large-$Z$ catastrophes,' in the sense that it produces 
an essentially self-adjoint Hamiltonian for the electron of any hydrogenic ion \cite{Beh,GST}, independently of the strength of the 
non-vanishing anomalous magnetic moment. 
 More recently Belgiorno, Martellini, and Baldicchi \cite{BMB} showed that the Dirac operator with anomalous magnetic moment is essentially 
self-adjoint in the naked {RWN} geometry only if $|\mu_a| \geq \frac{ 3}{2} \frac{ \sqrt{G} \hbar}{c}$.
 Since 
\be 
 \frac{ \sqrt{G} \hbar}{c} = \sqrt{\frac{G\mEL^2}{e^2}} {\frac{ \hbar c}{ e^2}  \frac{e^3}{\mEL c^2} }
=
  \sqrt{\gamma_{pe}\eps} \frac{1}{ \alpha_s} 4 \pi \mu_{\text{class}} ,
\ee
the requirement
$|\mu_a| \geq \frac{ 3}{2} \frac{ \sqrt{G} \hbar}{c}$ corresponds to {$|\mu_a| \gtrsim 1.3 \cdot 10^{-18} \mu_{\text{class}} $,
which is manifestly satisfied by the empirical value 
$|\mu_a| \approx \mu_{\text{class}}$} of the electron's anomalous magnetic moment.

 While general relativity therefore does not have a catastrophic effect in the spectral theory of physical hydrogenic ions thanks to 
the sufficiently large empirical value of the anomalous magnetic moment of the electron, it still would have a catastrophic 
effect if the empirical value were much smaller. 
 In this mathematical sense `switching on general-relativistic gravity' is generally \emph{not} a {harmless} weak perturbation \cite{Kato} 
of the  essentially self-adjoint special-relativistic Dirac operator with Coulomb electricity and anomalous magnetic moment!

 The RWN spacetime of a proton has a number of suspicious features, though {\cite{Weyl}}.
 In particular, it has a very strong curvature singularity at its center. 
 Also, its electrostatic field energy is infinite, but it has a finite positive ADM mass (which is identified with the
mass $\mPR$ of the proton). 
 This suggests that the RWN spacetime singularity sports a negative infinite bare mass; indeed this can be computed
using the Hawking mass formula for the mass in the immediate vicinity of the naked point singularity of the RWN spacetime of
the proton.

 The origin of the divergent field energy is long known: the same divergence occurs in flat spacetime, namely point charges in
Lorentz electrodynamics have an infinite self-field energy. 
 And since therefore the energy-momentum-stress tensor of the Maxwell--Lorentz fields with a point charge source is not locally
integrable over any vicinity of the point charge, coupling it via Einstein's equations to the Ricci curvature of spacetime
will inevitably cause very strong spacetime singularities, no matter how tiny the gravitational coupling constant is.
 This suggests that the problems may go away if one works with an electromagnetic field theory of non-linear electromagnetic 
vacua which give rise to an energy-momentum-stress tensor of the electrostatic 
field with a point charge source which \emph{is} globally integrable.

 Prominent examples are the Born and the Born--Infeld vacuum laws; we recall that they coincide in the electrostatic limit.
 As Born found \cite{BornA}, the electrostatic potential field of a point charge in a Born(--Infeld) vacuum 
is bounded and Lipschitz continuous.
 It now follows from quite general results about the spherically symmetric special-relativistic Dirac Hamiltonian \cite{KSWW,Thaller}
that for a test electron in the electrostatic Born field of a point nucleus of charge $Ze$ the Hamiltonian
is essentially self-adjoint for $Z\in\Rset$.
 
 The question thus becomes whether the elimination of the infinite electrostatic self-field energy problem with the help of
some non-linear vacuum law such as the Born(--Infeld) law suffices to guarantee an essentially self-adjoint Dirac Hamiltonian 
for a test electron {also} in the general-relativistic spacetime of a point nucleus.
 Since the electrostatic spacetime of a stable nucleus should have an ADM mass identical to $A(Z,N)\mPR$, 
if the energy-momentum-stress tensor of the electrostatic field with a point charge source 
is integrable then also the bare mass of the central singularity has to be finite. 
 To avoid a black hole, it has to be non-positive.
 However, a non-positive bare mass of the central singularity alone is not sufficient to avoid a black hole; 
further conditions need to be met, but they can.

 Balasubramanian in his Ph.D. thesis \cite{Moulik} showed that the Dirac Hamiltonian for an electron in the Hoffmann spacetime 
\cite{Hoffmann} of a point nucleus with \emph{zero bare mass} is essentially self-adjoint for all $Z\in\Nset$.
 He actually showed it for a larger class of similar black-hole-free electrostatic spacetimes \cite{TZ}, all 
having \emph{zero bare mass}.

\subsection{Terra incognita}
 The works \cite{CP}, \cite{BMB}, and \cite{Moulik} concern
two `opposite' endpoints of a large multi-parameter family of black-hole-free electrostatic spacetimes which in a sense
interpolate between the two extreme cases.
 Thus their results have left open the question of what happens in black-hole-free electrostatic spacetimes with 
either integrable field energy-momentum-stress tensor, yet with a finite strictly negative bare mass at their center, or
with non-integrable field energy density function $\veps(r)$, and thus with a negative infinite bare mass at their center, 
yet with a radial mass function
\begin{equation}\label{mOFr}
m(r) := M_{\mbox{\tiny{ADM}}}- \tfrac{1}{c^2} \int_r^\infty \veps(s)4\pi s^2\dd{s}
\end{equation}
which diverges to $-\infty$ as $r\downarrow 0$ slower than the RWN mass function
\begin{equation}\label{mOFrRWN}
m_{\mbox{\tiny{RWN}}}(r) := M_{\mbox{\tiny{ADM}}} - \tfrac12\tfrac{Z^2e^2}{c^2}\tfrac1r ;
\end{equation}
recall that $M_{\mbox{\tiny{ADM}}} = A(Z,N)\mPR$ for a nucleus of charge $Ze$.
 Will the Dirac operator of a test electron in any of these spacetimes 
be more similar to the RWN case or to the Hoffmann case with vanishing bare mass?
 Or will there be a critical strictly negative borderline value of the bare mass where a switch-over happens?
 And what is the influence of the electron's anomalous magnetic moment?
 These questions do not have an obvious answer.

\begin{rmk}
 We remark that a strictly negative bare mass of the nucleus {seems} hard to avoid theoretically. 
 Recall that in non-perturbative renormalized QED, which needs an UV cutoff, the bare mass of the electron is 
strictly negative [and even $-\infty$ in perturbative QED], and similar conclusions are to be expected for
nuclei due to their electric charges.
\end{rmk}

\subsection{This Paper}
 In this paper we study the Dirac operator on a class of electrostatic spacetimes which includes those studied in \cite{TZ}
as well as the RWN spacetime with naked singularity.
 We show that whenever the bare mass of the central singularity of the electrostatic black-hole-free 
spacetime of a point nucleus is strictly negative, possibly negatively infinite,
 then the Dirac Hamiltonian for hydrogen / hydrogenic ions {without anomalous magnetic moment of the electron}
has uncountably many self-adjoint extensions. 
 Any of these self-adjoint extensions has purely absolutely continuous spectrum in $(-\infty,-\mEL c^2)\cup (\mEL c^2, +\infty)$, 
its closure being the essential spectrum, plus
a discrete spectrum with infinitely many eigenvalues located in the gap $(-\mEL c^2,\mEL c^2)$ of the essential spectrum.
 Which one of these, if any, is the physically correct one is an open question.
 This demonstrates that the local non-integrability of the field energy-momentum-stress tensor over any neighborhood of the point charge,
a feature of the RWN spacetime, is not the only source of trouble for the Dirac Hamiltonian of general-relativistic hydrogen.

We also address the question whether the addition of a sufficiently large anomalous magnetic moment operator to the 
Dirac Hamiltonian for such hydrogenic ions will result in an essentially self-adjoint operator in all cases studied here 
where essential self-adjointness fails without such an anomalous magnetic moment.
 In particular, in all spacetimes studied here the curvature singularity is milder than the one of the RWN spacetime, so that
 one might expect a lowered threshold value for the strength of the electron's anomalous magnetic moment.
 Interestingly, the situation is more complicated!

 Namely, while we find that there is a family of electromagnetic vacuum laws for which essential self-adjointness of the 
Dirac operator on the pertinent spacetime of a nucleus holds when the test electron exhibits any anomalous magnetic moment,
no matter how small, there also is another family --- which includes the {Born and}
Born--Infeld vacuum laws --- for which the addition of an anomalous magnetic moment operator of any strength, no matter how large, 
is not sufficient to obtain an essentially self-adjoint Dirac Hamiltonian.
 Explicitly, this means that the Dirac operator for a test electron in the Hoffmann spacetime of a nucleus with 
(inevitably finite) negative bare mass is not essentially self-adjoint, with or without the anomalous magnetic moment of the electron. 
\newpage

The rest of the paper is structured as follows:

In section \ref{sec:EST} we stipulate the class of electrostatic spacetimes considered in this paper;
with some technical details relegated to Appendix~B.

 In section \ref{sec:Dirac} we discuss the Dirac operator for a test electron in the type of electrostatic spacetime
defined in section \ref{sec:EST}. 
 The section is devided into two subsections, one devoted to test electrons without, and one to test electrons with
anomalous magnetic moment.

In section \ref{sec:SUMM} we offer an outlook on open questions to be addressed in some future work.

In Appendix~{A} we {explain the generally regularizing effect of 
general relativity in the Bohr--Sommerfeld type theory of quantized circular orbits.}

\section{Electrostatic spacetimes with {negative} bare mass {and no horizon}}\label{sec:EST}

 The electrostatic spacetimes discussed in this paper are equipped 
with an electromagnetic vacuum law derived from a Lagrangian density which is a function of
the two invariants of the Faraday field tensor $\boldsymbol{F}$.
 As shown {already} in \cite{TZ}, the spherically symmetric, static, asymptotically flat ones among them which
are topologically identical to `$\Rset^{1,3}$ minus a timelike line,' equivalently $\Rset\times (\Rset^3\setminus\{0\})$,
and covered by a single global chart of `spherical coordinates' 
$(t,r,\vartheta,\varphi)\in \Rset\times\Rset_+\times[0,\pi]\times[0,2\pi)$, have a metric given by the line element
\begin{equation}\label{dsSQR}
ds^2 = - f^2(r) c^2 \dd t^2+ \frac{1}{f^2(r)} \dd r^2 + r^2 \dd\Omega^2, \qquad  f^2(r) = 1 - 2\frac{G}{c^2}\frac{m(r)}{r} .
\end{equation}
 Here, $r$ is the so-called area radius of a spherical orbit; i.e., every point in the stipulated spacetime is an element of
a unique orbit under a Killing vector flow corresponding to the $SO(3)$ symmetry, and this orbit is a scaled copy of 
$\Sset^2$ with area $A =: 4\pi r^2$, \emph{defining} $r>0$. 
 Next, $\dd\Omega^2 = \dd\vartheta^2 +\sin^2\vartheta\dd\varphi^2$ is the line element on $\Sset^2$. 
 Moreover, 
\begin{equation}
 m(r)c^2 =  {M}c^2 - \cE(r)
\end{equation}
is the radial mass function, where $M$ is the ADM mass $M_{\mbox{\tiny{ADM}}}$ and  
$\cE(r)$ is the electrostatic field energy outside a ball of surface area $4\pi r^2$. 
 The field energy function $r\mapsto\cE(r)$ is strictly positive and monotone decreasing to $0$. 
 
 For the class of models studied in \cite{TZ} and here, $\cE(r)$ turns out to be
independent of $G$ and, hence, identical to the corresponding flat-space formula.
 Thus, for instance, in Maxwell--Lorentz electrodynamics, if $\sV\in\Rset^3$ is a point in flat space and $s:=|\sV|$, and 
$\EV(\sV)\in\Rset^3$ denotes the electric field strength vector at $\sV$ of a point nucleus located at $\nullV$, then 
$\cE(r) = \frac{1}{8\pi}\int_r^\infty |\EV(\sV)|^2\, 4\pi s^2\dd s$ with $|\EV(\sV)| = |\EV|(s) = {Ze}/{s^2}$.
\newpage

 Here are two well-known examples of such spacetimes.

 First, for the RWN spacetime of a nucleus of charge $Ze$, we have
\begin{equation}
\cE(r) =  \frac{1}{8\pi} \int_r^\infty \frac{Z^2e^2}{s^4} 4\pi s^2 \dd s = \frac12\frac{Z^2e^2}{r}.
\end{equation}
 Clearly, $m(r) \searrow -\infty$ as $r\searrow 0$, but we also want to have a spacetime without a black hole. 
 The RWN spacetime features a black hole if there is at least one value of $r>0$ for which $f^2(r) =0$. 
 Since $f^2(r)$ is a quadratic polynomial in $1/r$, its zeros are formally given by 
\begin{equation}
 r_\pm = \tfrac{GM}{c^2} \left( 1\pm \sqrt{1- \tfrac{Z^2e^2}{GM^2}}\right),
 \end{equation}
and this is real \emph{if and only if} $\tfrac{Z^2e^2}{GM^2}\leq 1$. 
 However, for the known nuclei
$M_{\mbox{\tiny{ADM}}} = A(Z,N)\mPR$, with $Z\leq A(Z,N) {\leq} 3Z$, and $\tfrac{e^2}{G\mPR^2}\approx 5\cdot 10^{36}$, 
so $f^2(r)$ is never zero and we are deep in the naked singularity sector of the RWN spacetimes. 

 Second, for the Hoffmann spacetime of a nucleus of charge $Ze$, one has
\begin{equation}
\cE(r) =  \frac{b^2}{4\pi}\! \int_r^\infty\!\! \Bigl(\sqrt{1 +\tfrac{1}{b^2}\tfrac{Z^2e^2}{s^4} } -1 \Bigr)4\pi s^2 \dd s 
\sim \left\{\!\!
\begin{array}{rl}
\frac12\frac{Z^2e^2}{r}  &\mbox{as}\; r\to\infty,\\
 b^{\frac12}(Ze)^{\frac32}\left( \frac16 \mathrm{B}\left(\frac14,\frac14\right) -
 \left(\frac{b}{Ze}\right)^{\frac12} r\right) &\mbox{as}\;  r\to 0, 
\end{array}
\right.\hspace{-10pt}
\end{equation}
where B$(x,y)$ is Euler's Beta function, and $b>0$ Born's field strength constant.
 In order not to have a black hole, the radial mass function $m(r) = M - \frac{1}{c^2}\cE(r)$ must have a non-positive limit 
when $r\searrow 0$. 
 For assume that $m(0) >0$, then $f^2(r) \searrow -\infty$ as $r\searrow 0$, while $f^2(r)\to 1$ as $r\nearrow\infty$, which
means that there is at least one real $r>0$ at which $f^2(r)$ vanishes. 
 This implies a lower bound on $b$, namely 
\begin{equation}\label{MbareNOTpos}
{\forall\ Z:}\quad 
b \geq \frac{A(Z,N)^2\mPR^2 c^4}{ Z^3e^3\left( \frac16 \mathrm{B}\left(\frac14,\frac14\right)\right)^2}{,}
\end{equation}
{which is} a necessary condition for not to have a black hole in the spacetime {whatever the value of $Z$.
 Since $Z\geq 1$ and $A(Z,N)\leq 3Z$, replacing $A(Z,N)/Z^3$ by $9$ at r.h.s.(\ref{MbareNOTpos}) yields a lower
bound on $b$, uniformly in $Z$.}
 A sufficient condition would guarantee that as long as $m(r)\geq 0$, then $2\frac{G}{c^2}\frac{m(r)}{r} < 1$. 
 When $m(0) =0$ we can state the following sufficient criterion for not having a black hole in the spacetime,
based on the fact that one can show that $m(r)/r > 0$ is bounded and monotonic decreasing, with limit as $r\searrow 0$
given by $bZe$, with $b$ given by r.h.s(\ref{MbareNOTpos}). 
 Thus, if $m(0)=0$ then we have no black hole if $bZe < c^4/2G$, but $b$ is given by r.h.s(\ref{MbareNOTpos})
when $m(0)=0$, and this yields the necessary and sufficient condition
\begin{equation}\label{MbareNOTposZ}\textstyle 
 1 < \frac12 \left( \frac16 \mathrm{B}\left(\frac14,\frac14\right)\right)^2 \frac{Z^2}{A(Z,N)^2}\frac{e^2}{G \mPR^2}
\end{equation}
for the absence of a black hole when $m(0)=0$, {given $Z$}. 
 For all the known nuclei the condition is clearly met, i.e. we are once again deep in the naked singularity sector, 
this time of the Hoffmann spacetimes.
 Lastly, $f^2(r)$ increases when $b$ increases from r.h.s(\ref{MbareNOTpos}) and all other parameters are kept fixed, 
so it follows that we stay in the naked singularity sector of the Hoffmann spacetimes if the central singularity has 
negative bare mass $m(0)<0$. 

 The RWN spacetime and the Hoffmann spacetime with zero bare mass, both in their naked singularity sectors, 
may be seen as the extreme members of {the Hoffmann family of spacetimes with a naked singularity of
negative bare mass, which is included in a larger} family of electrostatic spacetimes with naked singularity and 
negative bare mass discussed in this paper; {see Appendix~\ref{app:electrovac}}.
 In the next section we formulate the Dirac operator for a test electron in 
such spacetimes, then state and prove our theorems about {these Dirac operators}.

\section{The Dirac {Hamiltonian} for hydrogen and hydrogenic ions}\label{sec:Dirac}

\subsection{Test electron without anomalous magnetic moment}

 Due to the spherical symmetry and static character of the spacetimes, the 
Dirac operator $H$ of a test electron in the curved space whose line element $\dd s^2$ is given by (\ref{dsSQR}) 
separates in the spherical coordinates and their default spin frame \cite{CP}. 
 More precisely,  $H$ is a direct sum of so-called partial-wave Dirac operators
$H^{\mbox{\tiny{rad}}}_{k}$ which act on two-dimensional bi-spinor subspaces.
 This reduces the spectral problem to studying the family of radial Dirac operators
$H^{\mbox{\tiny{rad}}}_{k} := \mEL c^2 K^{\mbox{\tiny{}}}_{k}$, $k\in\Zset\backslash\{0\}$, with
\begin{align}\label{Hrad}
K^{\mbox{\tiny{}}}_{k} :=
\left[\begin{array}{cc} f(r) - \frac{e}{\mEL c^2} \phi(r) &
  \frac{\hbar}{\mEL c}\left[ \frac{k}{r} f(r) - f^2(r) \partial_r \right] \\
 \frac{\hbar}{\mEL c}\left[ \frac{k}{r} f(r) + f^2(r) \partial_r \right]
& -  f(r) - \frac{e}{\mEL c^2}  \phi(r) \end{array}\right],
\end{align}
acting on $g(r):=\big(g_1(r) , g_2(r)\big)^T$, {with $g_1$ and $g_2$ $C^\infty$ 
functions compactly supported away from $r=0$},  equipped with a weighted $L^2$ norm given by
\be \label{fnorm}
\| g\|^2 := \int_{0}^{\infty} \frac{1}{f^2(r)}  \Big( |g_1(r)|^2 + |g_2(r)|^2 \Big) \dd r. 
\ee
 Note that $K^{\mbox{\tiny{}}}_{k}$ is physically dimensionless.

 To state and prove our theorems we {make a few assumptions on $f(r)$ and $\phi(r)$ which are satisfied 
by the finite-bare-mass electromagnetic spacetimes in \cite{TZ}, but also by some 
more general spacetimes with negative infinite bare mass, of which the RWN spacetime in its naked singularity sector is but one member}
{(See Appendix~\ref{app:electrovac})}.
 By (\ref{dsSQR}), assumptions on $f(r)$ are equivalent to assumptions on $m(r)$. 
\newpage

\begin{asmps}\label{massASS}
$\phantom{nix}$

\begin{itemize}
\item $m(r)$ is continuous;
\item {${m(r)}/{r} < {c^2}/{2G}$; } 
\item $m(r)\sim -\mNULL r^{- \alpha} $ as  $r \searrow 0$,  where $\alpha \geq 0$ and $ \mNULL >0$;
\item $m(r) \to M >0$ as $ r \rightarrow \infty$.
\end{itemize}
\end{asmps}
 With the above specification of the radial mass function, one has  $f^2(r) \neq 0$ for all $r$, and
\begin{equation} \label{fasmp}
f^2(r)\sim  2\frac{G}{c^2} \frac{\mNULL}{r^{1+ \alpha}}\quad\mbox{as}\quad r \rightarrow 0^+;
 \qquad\qquad f^2(r) \to 1 \quad\mbox{as}\quad  r\rightarrow \infty.
\end{equation}

\begin{rmk}
 Our first two assumptions on $m(r)$ are equivalent to ruling out black holes in spacetimes with the line element (\ref{dsSQR}).
 So our spacetimes feature a charged naked singularity.
 By the third and fourth assumptions on $m(r)$, the naked singularity has a negative
bare mass ($\lim_{r\downarrow 0} m(r)$), which is finite only for $\alpha =0$, 
in which case $m(0) =-C_0$. 
\end{rmk}

 The function $\phi(r)$ is the potential of the electrostatic field generated by the nucleus, in the 
sense that the radial component $E(r) = -\pr\phi(r)$.
 We will make the following assumptions.
\begin{asmps}\label{potASS}
$\phantom{nix}$

\begin{itemize}
\item $\phi(r)$ is continuously differentiable;
\item $\phi(r) \sim {Ze}/{r}$ as $r \rightarrow \infty$; 
\item $\phi(r) \sim {C^{\prime\prime}_{\beta}+}C^{\prime}_{\beta}r^{-\beta}$ around zero, with $\beta\leq 1$.
\end{itemize}
\end{asmps}

\begin{rmk}
 The second assumption on $\phi(r)$ expresses Gauss' law in our asymptotically flat spacetimes.
\end{rmk}
\begin{rmk}
 We note that  $\beta =1$ for the RWN spacetime.
\end{rmk}

 The following is the first main theorem of this section. 

\begin{thm}  \label{th:self} 
Under the stated assumptions on $m(r)$ and $\phi(r)$ the operator $H^{\mbox{\tiny{rad}}}_{k}$ has uncountably many self-adjoint 
extensions, $\forall k\in\Zset\backslash\{0\}$.
\end{thm}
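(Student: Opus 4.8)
The plan is to prove the theorem by computing the deficiency indices of the minimal symmetric operator associated with $K_k=H^{\mbox{\tiny{rad}}}_k/(\mEL c^2)$ on the weighted space $L^2\big((0,\infty),f^{-2}\,\dd r\big)^2$ and showing they equal $(1,1)$. Because all coefficients $f$, $\phi$, $k/r$ are real, complex conjugation $g\mapsto\bar g$ intertwines solutions of $K_k g=zg$ with those for $\bar z$, so $K_k$ is real with respect to a natural conjugation and its deficiency indices are automatically equal; it therefore suffices to show the common index is at least $1$. Once that is known, the self-adjoint extensions are parametrized by the unitary group of the (common) deficiency subspace, which is at least $U(1)\cong S^1$ and hence uncountable. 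By Weyl's limit-point/limit-circle theory for first-order (Dirac-type) systems, the deficiency index equals $N_0+N_\infty-2$, where $N_0$ (resp. $N_\infty$) is the number of linearly independent solutions of $K_k g=zg$ (for one fixed non-real $z$) that lie in the weighted $L^2$ near $r=0$ (resp. near $r=\infty$), with $1\le N_0,N_\infty\le 2$. Thus everything reduces to classifying the two singular endpoints.

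The key simplification is to remove the weight by the change of independent variable $x=\int_{r_1}^{r}f^{-2}(s)\,\dd s$, so that $f^2\pr=\partial_x$ and $f^{-2}\,\dd r=\dd x$; this turns the weighted norm into the standard $L^2(\dd x)$ norm, and the eigenvalue problem $K_k g=zg$ into a canonical symmetric system $\mu\,\partial_x g=A(x;z)\,g$, with $\mu=\hbar/(\mEL c)$ and
\be
A(x;z)=\begin{pmatrix} -\mu\,\tfrac{k}{r}f & f+\tfrac{e}{\mEL c^2}\phi+z \\[2pt] f-\tfrac{e}{\mEL c^2}\phi-z & \mu\,\tfrac{k}{r}f \end{pmatrix},\qquad r=r(x).
\ee
The decisive point is that, by the stated assumptions, $f^{-2}(r)\sim \tfrac{c^2}{2G\mNULL}\,r^{1+\alpha}$ is integrable at $r=0$, so the singular endpoint $r=0$ is mapped to a \emph{finite} value $x_0$ of the new variable, near which $r\sim\mathrm{const}\cdot(x-x_0)^{1/(2+\alpha)}$.

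Near this regularized endpoint I would estimate the entries of $A(x;z)$ using $f(r)\sim\mathrm{const}\cdot r^{-(1+\alpha)/2}$ and $\phi(r)\sim C''_\beta+\phiNULL\,r^{-\beta}$ with $\beta\le 1$. In the $x$-variable these become $f\sim(x-x_0)^{-(1+\alpha)/(2(2+\alpha))}$, $\tfrac{k}{r}f\sim(x-x_0)^{-(3+\alpha)/(2(2+\alpha))}$, and $\phi\sim(x-x_0)^{-\beta/(2+\alpha)}$, and a short computation shows that every one of these exponents is strictly less than $1$ for all $\alpha\ge 0$ and all $\beta\le 1$ (for instance $(3+\alpha)/(2(2+\alpha))<1\iff\alpha>-1$, while $\beta/(2+\alpha)\le \tfrac12$). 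Hence $A(\cdot\,;z)$ is integrable near $x_0$, and a Gronwall estimate shows that \emph{every} solution of $\mu\,\partial_x g=Ag$ stays bounded as $x\to x_0$; being bounded on the finite interval $(x_0,x_0+\delta)$, both solutions are square integrable there. Therefore $r=0$ is in the limit-circle case, $N_0=2$.

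It remains to treat $r=\infty$, which corresponds to $x\to\infty$ since $f^2\to 1$. Here the system is an asymptotically flat Dirac--Coulomb system ($f\to 1$, $\phi\sim Ze/r$), which is in the limit-point case by the standard theory for the spherically symmetric special-relativistic Dirac operator; thus $N_\infty=1$. Combining, the deficiency index is $N_0+N_\infty-2=2+1-2=1$, the indices are $(1,1)$, and the self-adjoint extensions form a one-parameter $U(1)$ family, which is uncountable, proving the theorem. The conceptual heart of the argument is the observation that the \emph{vanishing} of the weight $f^{-2}\sim r^{1+\alpha}$ at the singularity (equivalently the blow-up of $f$) forces the limit-circle alternative at $r=0$ for every $\alpha\ge 0$; the main technical obstacle is the careful endpoint bookkeeping --- verifying the limit-point property at infinity rigorously, confirming the $L^1$ bounds on $A$ uniformly over the full admissible range of $\alpha$ and $\beta$, and checking that after the change of variable $K_k$ really is a canonical symmetric system so that the deficiency-index formula and the equality of indices genuinely apply.
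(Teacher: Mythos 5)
Your argument is correct, and its analytic core is identical to the paper's: you perform the same change of variables \eqref{eq:cv} (it is precisely the integrability of $f^{-2}\sim \frac{c^2}{2G\mNULL}r^{1+\alpha}$ at $r=0$ that maps the singularity to a finite endpoint), and your exponent bookkeeping for $f$, $kf/r$, $\phi$ in the $x$-variable reproduces exactly the asymptotics of $a(x)$, $c(x)$, $b(x)$ recorded after \eqref{Ktilde}; in both proofs the hypotheses $\alpha\geq 0$, $\beta\leq 1$ enter only through the local integrability of these coefficients at the transformed origin. Where you genuinely differ is the functional-analytic packaging. The paper never invokes the limit-point/limit-circle alternative or von Neumann's theorem: it builds the extensions by hand, using the integral representations \eqref{ga}--\eqref{gb} to show that every $g\in\cD(\widetilde{K}_k^*)$ has boundary values at $x=0$, computing the boundary form $[g,h]=g_2(0)\overline{h_1(0)}-g_1(0)\overline{h_2(0)}$ (the contribution at $\infty$ vanishing because $g,h\in L^2$), and verifying Weidmann's three conditions for the domains $\cD_\theta$ of \eqref{Domdef} --- one self-adjoint extension per $\theta\in[0,\pi)$. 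You instead classify the endpoints (limit circle at the finite endpoint because the coefficient matrix is $L^1$ there; limit point at $\infty$), read off deficiency indices $(1,1)$, and quote the $U(1)$ parametrization. Both are sound; note only that your appeal to ``standard theory'' at $\infty$ should be backed by a citation or by the one-line Wronskian argument (for a Dirac-type system with identity weight, two independent solutions square-integrable near $\infty$ would contradict the constancy of $u^{T}Jv$ via Cauchy--Schwarz, since the interval has infinite length), and that your Gronwall step is exactly where the paper needs $a,b\in L^2_{loc}$, $c\in L^1_{loc}$ near $0$. The trade-off: your route is shorter and delivers the $(1,1)$ indices directly, a fact the paper only records in the remark following its proof; the paper's route produces the explicit boundary conditions $g_1(0)\sin\theta+g_2(0)\cos\theta=0$, which are not a luxury --- they are used downstream, e.g. in the boundary term $\sin(2\theta)\left(|g_1(0)|^2+|g_2(0)|^2\right)$ in the proof of Lemma~\ref{essK} and in the definition of $H_\theta$ preceding Theorem~\ref{th:esspec} --- so an abstract deficiency-index proof would have to be supplemented by that computation anyway before the spectral analysis of Theorems~\ref{th:esspec} and \ref{thm:eigenvalues} could proceed.
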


\begin{rmk} Inspection of our proof will reveal that we can generalize our Theorem \ref{th:self} 
and still conclude, with the same proof, that the operator ${H}_k^{rad}$ has multiple self-adjoint extensions if we allow
$\alpha > -1$ and $ \beta < 1+\frac{\alpha}{2}$.
 However, for $\alpha<0$ the mass function $m(r)$ is monotone decreasing in a right neighborhood of $r=0$, which disqualifies
it from the roster of mass functions for the electrostatic spacetimes considered in \cite{TZ}, and their
generalization considered here. 
 Also, { $\beta >1$ does not occur in our electrostatic spacetimes; see Appendix~\ref{app:electrovac}}.
\end{rmk} 
\begin{rmk}{
 Since the third bullet point in Assumption \ref{potASS} covers electric potentials which are bounded at the origin, as
well as those which diverge like an inverse power law when $r\searrow 0$, it is natural to suspect that the conclusion
of our Theorem \ref{th:self} will also hold if we allow $\phi(r)$ to diverge, but less strongly than an inverse power law, 
e.g. logarithmically, when $r\searrow 0$. 
 The proof of the so-modified Theorem \ref{th:self} requires only minor adjustments.}
\end{rmk} 

\begin{proof}[Proof of Theorem~\ref{th:self}]
 Under our assumptions on the mass function, we can change variables $r\mapsto x$ as follows, 
\begin{align} \label{eq:cv}
{\frac{dr}{dx}= \frac{\hbar}{\mEL c}f^2(r(x)) ,}
\end{align}
and study 
\begin{align} \label{Ktilde}
\widetilde{K}_k  &= 
{
\left[\begin{array}{cc} f(r(x))  -\frac{e}{\mEL c^2} \phi(r(x)) &  \frac{\hbar} {\mEL c}\frac{k}{r(x)} f(r(x)) -\frac{d}{dx} \\
\frac{\hbar} {\mEL c}\frac{k}{r(x)} f(r(x))  +    \frac{d}{dx} & - f(r(x)) -\frac{e}{\mEL c^2} \phi(r(x)) 
	\end{array}\right] } \\
	 & =: 
 \left[\begin{array}{cc}  a(x) - b(x)  & {k}c(x) - \frac{d}{dx} \\   {k}c(x) + \frac{d}{dx} &  - a(x) - b(x) 
	\end{array}\right] 
\end{align}
with the inner product 
\begin{align}\label{inner}
 \la g,h   \ra : =   \int_{0}^{\infty} \Big( g_1(r(x)) \bar{h}_1(r(x)) + g_2(r(x)) \bar{h}_2(r(x)) \Big) dx {.}
\end{align}

 Let $C$ denote a generic constant.
 One can show that as $ r \rightarrow 0^+$ $( x \rightarrow 0^{+})$ we have
$r\sim {C}x^{ \frac{1}{2+\alpha} }$, and as $ r \rightarrow \infty $ $( x \rightarrow \infty)$, we have $ r \sim x $.
 Therefore, $a(x) \sim  {C}x^{-\frac{1+\alpha}{4 + 2 \alpha}} $ and
$c(x) \sim  {C}x^{-\frac{3+\alpha}{4+2 \alpha}} $ as $x \rightarrow 0^{+} $,
and $a(x) \sim 1 $ and $c(x) \sim x^{ - 1} $ as $x \rightarrow \infty $. 
 One further has  $b(x)\sim  {C}x^{-\frac{\beta}{2+ \alpha}} $ as  $x \rightarrow 0^{+} $, and 
$b(x) \sim  x^{-1} $ as  $x \rightarrow \infty $. 

 Let $\widetilde{K}^*_k$ be the adjoint operator of $\widetilde{K}_k$. 
 Recall that $\cD(\widetilde{K}_k)$ is all $C^{\infty}$ functions of compact support in $(0,\infty)$,
 and $\cD(\widetilde{K}^{0*}_k)$ includes the functions $f$ so that $f$ and $f^{\prime}$ are integrable in any 
compact subset of $[0,\infty)$, see Section~XIII in \cite{DS}. 
 We define the following  sesquilinear form on $\cD(\widetilde{K}^*_k)$: 
\be
 [g, h]: = \la \widetilde{K}_k ^* g, h \ra - \la g, \widetilde{K}_k ^*h \ra ,
\ee
where $ \la \cdot, \cdot  \ra$ is defined as in \eqref{inner}. 
 By Theorem~4.1 in \cite{WeiC}, $ \widetilde{K}^*_k |_{\cD}$ is a self-adjoint extension of $\widetilde{K}_k$ iff 
 \begin{enumerate}
 \item[$i)$]$ \cD(\widetilde{K}_k) \subset  \cD \subset \cD( \widetilde{K}^*_k)$
\item[$ii)$]  $[g, h] =0$ for all $g,h \in \cD$ 
 \item[$iii)$] if $g \in \cD( \widetilde{K}^*_k)$ and  $[g, h] =0$ holds for every $h \in \cD$ then $g \in \cD$. 
 \end{enumerate}

 We first start considering the spaces in which $ [g, h]=0$. 
 Take $ g \in \cD(\widetilde{K}_k ^*)$, i.e. $ \widetilde{K}_k ^* g  = \psi $ for some $\psi \in L^2([0, \infty)$. 
 Since $ \cD(\widetilde{K}_k) \subset C_c^{1}([0, \infty))$, $g \in AC([x_1,x_2])$ for each $ 0 \leq x_1 <x_2< \infty$, and therefore we can 
integrate to get
\begin{align}\label{ga}
g_1(x) = e^{-\mu(x)} \Big( g_1(0) + \int_0^{x} e^{+\mu(y)} [(  a(y) +  b(y) )g_2(y) + \psi_2(y)] dy \Big),  \\
g_2(x) = e^{+\mu(x)} \Big( g_2(0) + \int_0^{x} e^{-\mu(y)} [(  a(y) -  b(y) )g_1(y) - \psi_1(y)] dy \Big)  
\label{gb}
\end{align}
for each $ x \leq x_2 < \infty$, where $\mu(x) = \int_0^{x} {k}c(y) dy \sim C x^{\frac{1+\alpha}{4+ 2 \alpha}} \rightarrow 0$ 
as  $x \rightarrow 0$.
 To have $g_1$, $g_2$ to be defined we also need   $a(x), b(x) \in L^2 ([0, x_2])$, $x_2 < \infty$.
\begin{rmk}\label{alphabeta}
 For our electrostatic spacetimes, $\alpha\geq 0$ and $\beta\leq 1$, so we do have $a(x), b(x) \in L^2 ([0, x_2])$, $x_2 < \infty$.
 If we drop the `electrostatic' requirement and also allow negative $\alpha >-1$, then this requires $ \beta < 1 + \frac{\alpha}{2}$. 
 Cf. our earlier remark.
\end{rmk}

By integration by parts, we have
\begin{align}
[g, h] 
& = \lim_{x_1 \rightarrow 0} \lim_{x_2 \rightarrow \infty}  \int_{x_1}^{x_2} {\Big(} ( \widetilde{K}_k ^* g)_1 \overline{h_2} - 
(\widetilde{K}_k ^* g)_2 \overline{h_1} + g_1 \overline{( \widetilde{K}_k ^*h)_2} - g_2 \overline{( \widetilde{K}_k ^*h)_1} \ra \Big) 
{dx}
\\
& = \lim_{x_1 \rightarrow 0} \lim_{x_2 \rightarrow \infty} \Big[ g_1(x_2) \overline{h_2(x_2)} - g_2(x_2) \overline{h_1(x_2)} -  
g_1(x_1) \overline{h_2(x_1)} + g_2(x_1)\overline{h_1(x_1)} \Big] \\   & = g_2(0) \overline{h_1(0)} - g_1(0) \overline{h_2(0)}.
\end{align}
 We used \eqref{ga}, \eqref{gb}, and the fact that $g,h \in L^2([0,\infty))$  to obtain the last equality. 

 This suggests that any symmetric extension requires $g_2(0) \overline{h_1(0)} - g_1(0) \overline{h_2(0)}= 0$.
 Taking $g=h$, one can see this is true if $g_1(0)$ is a real multiple of $g_2(0)$, or vice versa.
 Therefore, for any $ 0 \leq \theta < \pi$, 
\begin{align} \label{Domdef}
{\widetilde{K}}_{k;\theta} =  
\widetilde{K}_k ^*| \cD_\theta,\,\, \text{where}  \,\ 
\cD_\theta = \{g\in \cD(\widetilde{K}_k ^*): g_1(0) \sin \theta + g_2(0) \cos \theta =0\} 
\end{align}
gives a symmetric extension, cf. \cite{CP}. 
 Note that $\cD_\theta$ satisfies both the conditions $i)$ and $ii)$. Condition $iii)$ is also clearly satisfied. Let $h \in \cD_\theta$
  \begin{align}
 [g,h] =0 \iff g_2(0) \overline{h_1(0)} - g_1(0) \overline{h_2(0)} =0 \iff
 \frac{\overline{h_2(0)}}{\overline{h_1(0)} } = \frac{g_2(0)}{ g_1(0)} = - \tan \theta \in \R.
  \end{align} 

 This finishes the proof.  \end{proof}

\begin{rmk}
It is worth mentioning that,  since $[g,h]$  is the difference of two positive rank-one bilinear forms, 
$\widetilde{K}_{k}$, and thus $K_{k}$, has deficiency indices $(1,1)$.
\end{rmk}

 Recall that in the partial wave decomposition the Dirac Hamiltonian $H$ is a direct sum of operators 
$H_{k}^{\text{rad}}= {\mEL c^2} K_{k}^{\text{}}$ which act on two-dimensional bi-spinor subspaces.
 Having shown that these have self-adjoint extensions $K ^{\mbox{\tiny{}}}_{k,\theta}$ for $ 0 \leq \theta < \pi$,
we now define $ H_\theta$, as the direct sum of the 
$H_{k;\theta}^{\text{rad}}:={\mEL c^2} K^{\mbox{\tiny{}}}_{k,\theta}$.
 We are ready to state our next main theorem of this section.
\begin{thm}\label{th:esspec} Under Assumptions~\ref{massASS} and \ref{potASS}, for hydrogenic ions we have

\begin{itemize}
\item[(a)]
  The essential spectrum $\sigma_{ess}(H_{\theta}) = (-\infty,-\mEL c^2]\cup [\mEL c^2, \infty)$;
\item[(b)] $ H_{\theta}$ has purely absolutely continuous spectrum in $(-\infty,-\mEL c^2 ) \cup (\mEL c^2, \infty)$;
\item[(c)] the singular continuous spectrum $\sigma_{sc} (H_{\theta})=  \emptyset$.
\end{itemize}
\end{thm}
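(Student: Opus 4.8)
The plan is to study the spectral type of each self-adjoint partial-wave operator $\widetilde{K}_{k;\theta}$ individually and then assemble $H_\theta$, which is a direct sum over $k\in\Z\setminus\{0\}$, so that its essential and absolutely continuous spectra are the (closures of the) unions of the corresponding pieces. Working in the variable $x$ from the proof of Theorem~\ref{th:self}, each $\widetilde{K}_k$ is a one-dimensional Dirac system on $L^2([0,\infty);dx)$ with two singular endpoints. The deficiency-index computation recorded after Theorem~\ref{th:self} shows that $x=0$ is a limit-circle endpoint; such an endpoint contributes neither essential spectrum nor continuous spectrum, and the boundary datum $\theta$ can only produce discrete eigenvalues. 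Hence the entire continuous spectrum is controlled by the endpoint $x=\infty$, where $a(x)\to 1$ while $a(x)-1$, $b(x)$, and $c(x)$ all decay like $x^{-1}$.

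For part (a) I would invoke the decomposition principle: cutting $[0,\infty)$ at an interior point $x_0$ and imposing a separated boundary condition there leaves the essential spectrum unchanged. On $[0,x_0]$ the limit-circle endpoint gives a compact resolvent and hence empty essential spectrum. On $[x_0,\infty)$ I would write $\widetilde{K}_k=\widetilde{K}_k^{\mathrm{free}}+V(x)$, with free part $\left[\begin{smallmatrix}1&-d/dx\\ d/dx&-1\end{smallmatrix}\right]$ whose spectrum is $(-\infty,-1]\cup[1,\infty)$, and with $V(x)\to 0$ as $x\to\infty$. Since multiplication by a matrix vanishing at infinity is relatively compact with respect to the free Dirac operator, the essential spectrum is preserved; rescaling by $\mEL c^2$ yields $\sigma_{ess}(H_\theta)=(-\infty,-\mEL c^2]\cup[\mEL c^2,\infty)$.

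For part (b) I would use Gilbert--Pearson subordinacy theory for Dirac systems: the spectrum is purely absolutely continuous on any set of $\lambda$ for which, for a.e.\ such $\lambda$, no solution of $\widetilde{K}_k g=\lambda g$ is subordinate at $x=\infty$. Recasting the eigenvalue problem as $g'=A(x,\lambda)g$, the matrix $A(x,\lambda)$ tends to a constant whose eigenvalues are $\pm i\sqrt{\lambda^2-1}$ for $|\lambda|>1$, so the unperturbed solutions are bounded and oscillatory. An asymptotic-integration (Levinson-type) analysis should then show that every solution remains bounded and oscillatory and that none is subordinate, giving purely absolutely continuous spectrum on $(-\infty,-\mEL c^2)\cup(\mEL c^2,\infty)$. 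Part (c) follows immediately: on $(-\infty,-\mEL c^2)\cup(\mEL c^2,\infty)$ the spectrum is purely a.c.\ by (b); the thresholds $\pm\mEL c^2$ form a null set carrying no singular continuous measure; and the gap $(-\mEL c^2,\mEL c^2)$ lies off the essential spectrum, so the spectrum there is purely discrete. Therefore $\sigma_{sc}(H_\theta)=\emptyset$.

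The main obstacle will be part (b), because $a(x)-1$, $b(x)$, and $kc(x)$ all decay only like $x^{-1}$ --- the long-range Coulomb-type tail coming jointly from the gravitational $GM/c^2r$ and the electric $Ze/r$ terms. This is precisely the borderline decay at which the standard Levinson theorem, which requires an $L^1$ perturbation of the constant system, breaks down and a logarithmically growing phase correction appears. I would deal with this by a preliminary diagonalizing gauge transformation that separates the non-$L^1$ diagonal part of the perturbation (absorbing it into an explicit modified oscillatory phase) and leaves an $L^1$ remainder to which Levinson's theorem applies; one then has to check that the logarithmic phase modification does not make either solution subordinate, so that pure absolute continuity indeed persists up to the thresholds.
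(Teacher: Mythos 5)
Your proposal is correct, and parts (a) and (b) follow a genuinely different route from the paper's; part (c) is argued identically in both. For (a), the paper never cuts the half-line: it splits $\widetilde{K}_k=\widetilde{K}^0_k+V$ globally, keeping the singular off-diagonal term $kc(x)$ inside the unperturbed operator, proves the inclusion $(-\infty,-1]\cup[1,\infty)\subset\sigma_{ess}(\widetilde{K}^0_{k;\theta})$ with an explicit Weyl sequence, obtains the reverse inclusion from the quadratic-form bound $\langle[\widetilde{K}^0_{k;\theta}]^2g,g\rangle\geq\|g\|^2$ (first for $\theta\in[0,\pi/2]$, then for all $\theta$ because all self-adjoint extensions share one essential spectrum), and finally shows $V$ is relatively compact by conjugating with the gauge matrix $S=\mathrm{diag}\bigl(e^{-\xi},e^{\xi}\bigr)$ so that the Cohen--Powers result (Lemma~\ref{cplem}) applies. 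Your Glazman decomposition at an interior point $x_0$ --- purely discrete spectrum on $[0,x_0]$ from the limit-circle endpoint, then a relatively compact, decaying perturbation of the free Dirac operator on $[x_0,\infty)$ --- is equally valid, avoids both the gauge transformation and the explicit Weyl-sequence/squaring computations, and is in fact the strategy the paper itself adopts later for the anomalous-magnetic-moment case (Theorem~\ref{th:esspecMU}, via Weidmann's decomposition theorem). For (b), where you plan Gilbert--Pearson subordinacy plus a Harris--Lutz-type diagonalization to tame the non-$L^1$ tails, the paper instead cites Weidmann's theorem (Theorem~\ref{th:weid}): the full matrix $P_2$ in \eqref{def:p2} is of bounded variation and converges to $\mathrm{diag}(1,-1)$ at infinity, and bounded variation --- not $L^1$ decay --- is precisely the hypothesis that absorbs the Coulomb-type $1/x$ tails you single out as the main obstacle; so that obstacle dissolves upon citing the right theorem, while your plan amounts to re-proving a special case of it (more work, but self-contained). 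One caveat shared by your write-up and the paper alike: passing from the fixed-$k$ statements to $H_\theta$ by taking unions over the infinite direct sum implicitly requires that eigenvalues from different $k$-sectors do not accumulate at an interior point of the gap.
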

\begin{proof}
 We prove Theorem~\ref{th:esspec} by a series of Lemmas, Corollaries, and Remarks below.
 In particular, Lemma~\ref{essK} establishes part $(a)$;
this together with Lemma~\ref{acK} and Corollary~\ref{cor:sc} establishes parts $(b)$ and $(c)$. 
\end{proof}

\begin{lemm} \label{essK}
$\sigma_{ess} ( K_{k;\theta}^{\text{}}) = {( - \infty , - 1] \cup [ 1, \infty)}$.
\end{lemm}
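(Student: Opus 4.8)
The plan is to exploit that the essential spectrum of a singular first-order differential system is determined entirely by the behaviour of its coefficients at the two singular endpoints $x=0$ and $x=\infty$, together with the fact that the weighted operator $K_{k;\theta}$ is unitarily equivalent to the realization of $\widetilde{K}_k$ on the \emph{unweighted} space $L^2((0,\infty),dx)$: the weight $1/f^2$ in \eqref{fnorm} is absorbed exactly by the change of variables \eqref{eq:cv}, up to the overall constant $\hbar/\mEL c$. So I would work throughout with $\widetilde{K}_k$. First I would invoke the decomposition principle for singular ODE systems (the essential spectrum is unchanged when one alters the matching/boundary condition at any \emph{regular} interior point): fixing any $x_0\in(0,\infty)$, one has
\[
\sigma_{ess}(K_{k;\theta}) \;=\; \sigma_{ess}\big(\widetilde{K}_k^{(0,x_0)}\big) \cup \sigma_{ess}\big(\widetilde{K}_k^{(x_0,\infty)}\big),
\]
where the two restrictions carry arbitrary separated self-adjoint boundary conditions at $x_0$. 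Since the data at $x_0$ is free, this also shows $\sigma_{ess}$ is independent of the extension parameter $\theta$, and it reduces the computation to the two ends in isolation.

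At the outer end I would compare $\widetilde{K}_k^{(x_0,\infty)}$ with the free massive Dirac system $\begin{pmatrix} 1 & -\frac{d}{dx} \\ \frac{d}{dx} & -1 \end{pmatrix}$, whose spectrum is $(-\infty,-1]\cup[1,\infty)$ (the gap $(-1,1)$ being free of essential spectrum). By the asymptotics recorded after \eqref{Ktilde}, $a(x)\to 1$, $b(x)\sim x^{-1}\to 0$ and $c(x)\sim x^{-1}\to 0$ as $x\to\infty$, so the difference between $\widetilde{K}_k^{(x_0,\infty)}$ and the free operator is multiplication by a bounded matrix-valued function vanishing at infinity. Because the free-Dirac resolvent maps $L^2$ into $H^1$, and multiplication by a bounded function decaying at infinity composed with the locally compact embedding $H^1\hookrightarrow L^2_{loc}$ is compact, this difference is relatively compact; Weyl's theorem then yields $\sigma_{ess}(\widetilde{K}_k^{(x_0,\infty)}) = (-\infty,-1]\cup[1,\infty)$.

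At the inner end I would show $\sigma_{ess}(\widetilde{K}_k^{(0,x_0)})=\emptyset$ by establishing that $x=0$ is in the limit-circle case, so that this restriction has compact resolvent. The key fact is already visible in the proof of Theorem~\ref{th:self}: for any $\lambda$ the two fundamental solutions of $\widetilde{K}_k g=\lambda g$ behave like $e^{\mp\mu(x)}$ with $\mu(x)=\int_0^x kc(y)\,dy \sim Cx^{(1+\alpha)/(4+2\alpha)}\to 0$, because near $0$ the off-diagonal coefficient $c$ dominates both $a$ and $b$ (its exponent $-(3+\alpha)/(4+2\alpha)$ is the most negative, and $c$ is integrable since $\alpha\geq 0$). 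As $a,b\in L^1(0,x_0)$ as well, the variation-of-constants integrals stay bounded, so both solutions are bounded and hence square-integrable on the finite interval $(0,x_0)$; thus $0$ is limit circle, consistent with the deficiency indices $(1,1)$ found above. A limit-circle endpoint contributes no essential spectrum, so $\widetilde{K}_k^{(0,x_0)}$ has purely discrete spectrum.

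Combining the two ends through the decomposition principle gives $\sigma_{ess}(K_{k;\theta}) = \emptyset \cup \big((-\infty,-1]\cup[1,\infty)\big) = (-\infty,-1]\cup[1,\infty)$, which is the claim. The main obstacle I anticipate is making the two endpoint statements fully rigorous in the matrix first-order setting: at infinity, confirming that the decaying coefficient matrix really is relatively compact with respect to the free Dirac operator (and that the decays of $a-1$, $b$, $c$ are genuine, not merely $o(1)$); and at $0$, upgrading the boundedness of solutions to a clean compact-resolvent statement for the limit-circle endpoint. Both are standard in Weidmann-type spectral theory of ordinary differential operators, but require care because the coefficients $a,b,c$ are singular as $x\searrow 0$.
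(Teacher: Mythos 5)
Your proof is correct, but it follows a genuinely different route from the paper's. The paper does not decompose the half-line at an interior point: it splits the \emph{operator} as $\widetilde{K}_k=\widetilde{K}^0_k+V$, where $\widetilde{K}^0_k$ retains the singular off-diagonal coefficient $kc(x)$ together with the constant diagonal $\pm 1$, and $V$ is the diagonal multiplication by $[a(x)-1]\mp b(x)$. The inclusion $(-\infty,-1]\cup[1,\infty)\subset\sigma_{ess}(\widetilde{K}^0_{k;\theta})$ is obtained from an explicit Weyl sequence; the reverse inclusion comes from a positivity argument on the square $[\widetilde{K}^0_{k;\theta}]^2$ (valid for $\theta\in[0,\pi/2]$, then extended to all $\theta$ by invariance of the essential spectrum under change of self-adjoint extension); finally $V$ is shown to be relatively compact using a bounded gauge matrix $S$ that absorbs the singularity of $c$ at the origin, so that the Cohen--Powers lemma (Lemma~\ref{cplem}) applies. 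Your scheme --- Glazman/Weidmann decomposition at a regular interior point, comparison with the free massive Dirac system at the far end, and limit-circle $\Rightarrow$ compact resolvent on the inner piece --- replaces all three of these steps by standard structural facts: $\theta$-independence of $\sigma_{ess}$ comes for free, the singular coefficients near $x=0$ never need to be gauged away (they are quarantined in the inner piece, whose essential spectrum is empty because the origin is limit circle, as already established en route to Theorem~\ref{th:self}), and no explicit Weyl sequence is needed. What the paper's route buys in exchange is finer information about the comparison operator (for $\theta\in[0,\pi/2]$ its \emph{whole} spectrum is exactly $(-\infty,-1]\cup[1,\infty)$) and a relative-compactness statement for $V$ on the full half-line. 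It is worth noting that your strategy is essentially the one the paper itself adopts later, in the proof of Theorem~\ref{th:esspecMU}: there, for $\beta\geq\frac{1+\alpha}{2}$, the origin is limit point rather than limit circle, so the decomposition at an interior point is forced, and discreteness of the inner piece must then be obtained from the Hinton--Shaw criterion instead of the limit-circle argument you use. The only points requiring care in your write-up are the two you already flag: the relative compactness of a bounded, vanishing-at-infinity matrix potential with respect to the half-line free Dirac operator, and the fact that a limit-circle endpoint plus a regular endpoint yields a Hilbert--Schmidt resolvent; both are in \cite{WeiC}, and the boundedness of solutions near $x=0$ that underlies your limit-circle claim is exactly the Gronwall-type content of \eqref{ga}--\eqref{gb}.
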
 
 To prove Lemma \ref{essK} we recall the following lemma from \cite{CP}. 
\begin{lemm}\label{cplem} Let 
\be
{D :}= \left[ \begin{array}{cc} 0 & - \frac{d}{dx} \\ \frac{d}{dx} & 0  \end{array} \right] 
\ee
{be defined on the $C^{\infty}$ two-component functions of compact support in the positive real half-line.
 Now take the closure of this operator in $(L^2(\Rset_+))^2$ with the boundary condition $f_1(0) \sin \theta + f_2(0) \cos \theta=0$ 
at $x=0$, {denoted} $D_\theta$.}
 Let $A$ be the operator 
\be
A = \left[ \begin{array}{cc}a_{11}(x) & a_{12}(x) \\  a_{21}(x) &  a_{22}(x)	  \end{array} \right] ,
\ee
where the $a_{ij}$ are functions in $L^2([0,x_2])$ for all $ 0 <x_2 < \infty$ and $a_{ij}(x) \rightarrow 0$ as $ x \rightarrow \infty$. 
 Then $A$ is $D_{\theta}$ compact.  
\end{lemm}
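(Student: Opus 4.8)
The plan is to show that $A$ is $D_\theta$-compact by verifying that $A$ maps every sequence which is bounded in the graph norm of $D_\theta$ to a sequence that is precompact in $(L^2(\Rset_+))^2$; equivalently, that $A(D_\theta+i)^{-1}$ is a compact operator. The one structural fact I would lean on throughout is that, since $Df=(-f_2',f_1')$, the graph norm of $D_\theta$ is equivalent to the $H^1(\Rset_+)$ norm, so that $\cD(D_\theta)=\{f\in (H^1(\Rset_+))^2:f_1(0)\sin\theta+f_2(0)\cos\theta=0\}$; and that in one dimension $H^1(\Rset_+)$ embeds continuously into the bounded continuous functions that vanish at infinity. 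This single embedding simultaneously controls the only two sources of difficulty: the coefficients $a_{ij}$ are merely locally square-integrable and may blow up as $x\searrow 0$, while the half-line $\Rset_+$ is non-compact. Throughout, $\|\cdot\|$ denotes the norm of $(L^2(\Rset_+))^2$.

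First I would check the domain inclusion $\cD(D_\theta)\subset\cD(A)$. For $u\in\cD(D_\theta)$ we have $\|u\|_\infty\les\|u\|_{H^1}<\infty$ and $u(x)\to 0$ as $x\to\infty$. On any interval $[0,R]$ the estimate $\int_0^R|a_{ij}u|^2\le\|u\|_\infty^2\int_0^R|a_{ij}|^2$ is finite because $a_{ij}\in L^2([0,R])$, while on $[R,\infty)$ the estimate $\int_R^\infty|a_{ij}u|^2\le\big(\sup_{x\ge R}|a_{ij}|\big)^2\|u\|^2$ is finite because $a_{ij}$ is bounded there (it tends to $0$). Hence $Au\in(L^2(\Rset_+))^2$.

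Next I would establish the two compactness estimates. Let $(u_n)$ satisfy $\|u_n\|+\|D_\theta u_n\|\le C$, so $(u_n)$ is bounded in $(H^1(\Rset_+))^2$ and uniformly bounded in $L^\infty$. On each fixed interval $[0,R]$ the inclusion $H^1([0,R])\hookrightarrow C([0,R])$ is compact (Morrey in one dimension gives $H^1\hookrightarrow C^{0,1/2}$, followed by Arzel\`a--Ascoli), so a subsequence of $(u_n)$ converges uniformly on $[0,R]$; by a diagonal argument over $R=1,2,\dots$ I extract one subsequence (still written $u_n$) converging uniformly on every compact subinterval. Multiplication then converges on compacts, since $\|a_{ij}(u_n-u_m)\|_{L^2([0,R])}\le\|u_n-u_m\|_{L^\infty([0,R])}\,\|a_{ij}\|_{L^2([0,R])}\to 0$. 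For the tail I use the decay of the coefficients: with $\eps_R:=\max_{ij}\sup_{x\ge R}|a_{ij}(x)|\to 0$, one has $\|Au_n\|_{L^2([R,\infty))}\le C\eps_R$ uniformly in $n$.

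Finally I would combine the two estimates in the standard way: given $\eps>0$, first pick $R$ so that the tail contributions fall below $\eps$, then pick $n,m$ large so that the truncated difference falls below $\eps$; this shows $(Au_n)$ is Cauchy in $(L^2(\Rset_+))^2$, hence convergent, and therefore $A$ is $D_\theta$-compact. Equivalently, one writes $A(D_\theta+i)^{-1}$ as the operator-norm limit of the truncated operators $\chi_{[0,R]}A(D_\theta+i)^{-1}$, each compact by the local compact embedding, with the error of order $\eps_R$. The main obstacle is precisely the interaction of the two pathologies -- the local non-integrability of $a_{ij}$ near the singularity at $x=0$ together with the non-compactness of the half-line -- but both are defused by the one-dimensional embedding $H^1\hookrightarrow L^\infty$, which converts the graph bound into a uniform sup-norm bound on the domain functions.
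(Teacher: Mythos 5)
Your proof is correct. Note, however, that the paper itself does not prove this lemma at all: it is imported verbatim from Cohen--Powers \cite{CP} (the text says ``we recall the following lemma from \cite{CP}''), so there is no in-paper argument to compare against, and what you have written is a self-contained replacement for that citation. Your route is the canonical one: identify the graph norm of $D_\theta$ with the $(H^1(\Rset_+))^2$ norm (for this operator they are literally equal, since $\|f\|^2+\|Df\|^2=\|f\|_{H^1}^2$), use the one-dimensional embedding $H^1(\Rset_+)\hookrightarrow C_0$ to get domain inclusion and a uniform sup bound, use Morrey plus Arzel\`a--Ascoli for compactness of $H^1([0,R])\hookrightarrow C([0,R])$ to handle the locally-$L^2$ (possibly unbounded near $x=0$) coefficients, and use the decay $a_{ij}(x)\to 0$ for a tail estimate uniform in $n$; the diagonal/truncation step then yields relative compactness, or equivalently exhibits $A(D_\theta+i)^{-1}$ as a norm limit of compact truncations $\chi_{[0,R]}A(D_\theta+i)^{-1}$. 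Two small points worth tightening if this were to be written up: the identification $\cD(D_\theta)=\{f\in (H^1)^2: f_1(0)\sin\theta+f_2(0)\cos\theta=0\}$ needs the (standard) density of smooth compactly supported functions obeying the boundary condition in that space, together with continuity of the trace at $0$ on $H^1$; and in the domain-inclusion step the supremum $\sup_{x\geq R}|a_{ij}|$ is only guaranteed finite for $R$ large enough, so one should fix such an $R$ first. Neither is a gap of substance; as you correctly observe, the boundary parameter $\theta$ plays no role beyond placing the domain inside $(H^1)^2$, which is exactly why the paper's subsequent remark that ``the value $x=0$ in this lemma plays no role in its proof'' is consistent with your argument.
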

\begin{rmk}
  The value $x=0$ in this lemma plays no role in its proof. 
  In particular, one can pick $x=x_1$, $0 <x_1 < x_2 < \infty$, and consider the operator $D$ with boundary condition 
 $f_1(x_1) \sin \theta + f_2(x_1) \cos \theta=0$ in $L^2([x_1,x_2])^2$.
\end{rmk}
\begin{proof}[Proof of Lemma~\ref{essK}] 
 We split the operator $\widetilde{K}_k$ in \eqref{Ktilde} as
\begin{align}
\widetilde{K}_k  = \left[ \begin{array}{cc} 1 & kc(x) - \frac{d}{dx}  \\ 
kc(x) + \frac{d}{dx}  & - 1
	  \end{array} \right] + \left[\begin{array}{cc} \Big[a(x) - 1 \Big] -b(x) &   0 \\
  0 & -  \Big[a(x) - 1\Big] -b(x) 
	\end{array}\right]  
=:  \widetilde{K}^0_k + V.
\end{align}
 Note that Theorem~\ref{th:self} is valid when $a(x)=1$ and $b(x)=0$.
 Therefore, $\widetilde{K}^0_k$ has deficiency indices $(1,1)$ and has multiple self-adjoint extensions similar to $\widetilde{K}_k$.
 We define these self-adjoint extensions as $\widetilde{K}^0_{k;\theta}$ similar to $\widetilde{K}_{k;\theta}$, see \eqref{Domdef}. 
 We define the following  Weyl sequence for $\widetilde{K}^0_{k;\theta}$, with any $\lambda\in\Rset$,
$|\lambda| > {1}$: 
\be \label{weyl}
f_{n, \lambda} (x) = \frac{1}{2 n^{\f 32}} x e^{- \frac{x}{2n} + i x \sqrt{ \lambda^2 - 1}} 
\left[\begin{array}{c} \sqrt{1 + \frac{1}{ \lambda } } \\ i \sqrt{1 - \frac{1}{ \lambda } }
	\end{array}\right];\quad n\in\Nset.
\ee
 We have that $ \| f_{n, \lambda} \|_{(L^2(\Rset_+))^2} =1$, $  f_{n, \lambda} (x) \rightarrow 0 $ weakly, and 
$\| ( \widetilde{K}^0_{k;\theta} - \lambda) f_{n, \lambda} (x) \|_{(L^2(\Rset_+))^2} \rightarrow 0 $ as $n\to\infty$.
 Hence, any $ |\lambda| > {1}$ 
is in the essential spectrum of $\widetilde{K}^0_{k;\theta}$.
 Further, since the essential spectrum is a closed subset of $\R$, one has 
$ ( - \infty , - 1] \cup [1, \infty)\subset \sigma_{ess} ( \widetilde{K}^0_{k;\theta})$.

 For the reverse inclusion, we consider the operator  $[\widetilde{K}^0_{k;\theta}]^2$. 
 Let $g \in \cD_\theta$, then one has 
\begin{align}\label{reverseA}
 \la [\widetilde{K}^0_{k;\theta}]^2 g, g \ra & =  
\la \widetilde{K}^0_{k;\theta} g, \widetilde{K}^0_{k;\theta} g \ra \\ 
\label{reverseB}
& =
 \int_0^{\infty} \left|\left( \frac{d}{dx} + {k}c(x)\right) g_1\right|^2  dx 
+ \int_0^{\infty} \left|\left(- \frac{d}{dx} + {k}c(x)\right) g_2\right|^2 dx \\ \notag
& \qquad\qquad\qquad\qquad + \|g \|^{2}_{(L^2(\Rset_+))^2} +  {\sin (2\theta)\left(|g_1(0)|^2 + |g_2(0)|^2 \right);} 
\end{align}
 For the boundary term at the r.h.s. of this equality, recall the boundary conditions in $\cD_\theta$.
 Clearly if $ 0 \leq \theta \leq \frac{\pi}{2}$, then $ \la [\widetilde{K}_{k;\theta}]^2 g,g \ra \geq  \la g,g \ra$. 
Therefore, if $ 0 \leq \theta \leq \frac{\pi}{2}$ then 
$ \sigma( \widetilde{K}^0_{k;\theta})= \sigma_{ess} ( \widetilde{K}^0_{k;\theta}) = ( - \infty , - 1] \cup [1, \infty)$. 
On the other hand, all self-adjoint extensions of $\widetilde{K}^0_{k}$ have the same essential spectrum, cf. \cite{WeiC}, p.163.
 Therefore, $\sigma_{ess} ( \widetilde{K}^0_{k;\theta}) = ( - \infty , - 1] \cup [1, \infty)$  for all $\theta\in[0,\pi)$.

 Next we will show that $V$ is $\widetilde{K}^0_{k;\theta}$  compact. 
 We define $ \xi(x) = \int_0^x {k}c(y) dy$ for $0 \leq x \leq 1$ and $ \xi(x) = \xi(1)$ for $x>1$.
 Then the matrix 
\be\label{S}
S= \left[ \begin{array}{cc} e^{-\xi(x)} &0 \\  0 &  e^{\xi(x)}
	  \end{array} \right]  
\ee
is bounded.   

Assume that $\|g_n\|_{(L^2(\Rset_+))^2} , \| \widetilde{K}^0_{k;\theta} g_n\|_{(L^2(\Rset_+))^2} $ are bounded sequences.
 Then $\| Sg_n \|_{(L^2(\Rset_+))^2}$ and  $\| D_\theta S g_n\|_{(L^2(\Rset_+))^2}$ {are also bounded,
the first one is because $S$ is bounded and the latter one is by the fact that 
$ D_\theta S = S^{-1} S D_\theta S =  S^{-1} (\widetilde{K}^0_{k;\theta} +W)$ for some bounded $W$.}
 Moreover, one can check that $VS^{-1}$ is $D_\theta$ compact by Lemma~\ref{cplem}. Hence, 
\be 
 VS^{-1} S g_n = Vg_n
\ee 
has a convergent subsequence. 
 This proves that $V$ is $D_{\theta}$ compact. 
\end{proof}

\begin{lemm}\label{acK} $\sigma(K_{k;\theta}^{\text{}})$ is purely absolutely continuous in 
$( - \infty , - 1)\cup( 1, \infty)$.
\end{lemm}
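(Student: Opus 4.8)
The plan is to read off the absolutely continuous spectrum from the asymptotic behaviour at $x=\infty$ of the solutions of the first-order system $\widetilde K_k g=\lambda g$, and then to invoke subordinacy (Gilbert--Pearson) theory for one-dimensional Dirac operators. Written out, the eigenvalue equation is
\begin{align*}
g_1' &= -kc(x)\,g_1 + \big(a(x)+b(x)+\lambda\big)g_2,\\
g_2' &= \big(a(x)-b(x)-\lambda\big)g_1 + kc(x)\,g_2 .
\end{align*}
Since (as confirmed below) the operator is in the limit-point case at infinity for every $|\lambda|>1$ and the boundary condition sits at $x=0$, the portion of the spectrum in $(-\infty,-1)\cup(1,\infty)$ is governed entirely by the endpoint $x=\infty$: it is purely absolutely continuous on precisely the set of $\lambda$ for which \emph{no} solution is subordinate at $\infty$. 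Thus the whole claim reduces to an asymptotic-integration statement, and it is automatically independent of the extension parameter $\theta$, consistent with Lemma~\ref{essK}.

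First I would freeze the leading behaviour. As $x\to\infty$ one has $a(x)\to1$, $b(x)\to0$, $c(x)\to0$, so the system is a perturbation of the constant-coefficient system with matrix $A_0=\begin{bmatrix}0 & 1+\lambda\\ 1-\lambda & 0\end{bmatrix}$, whose eigenvalues $\pm i\sqrt{\lambda^2-1}$ are purely imaginary and distinct exactly when $|\lambda|>1$; the unperturbed solutions are then the oscillatory plane waves $e^{\pm i\sqrt{\lambda^2-1}\,x}$. The obstacle is that the tails $a-1,\,b,\,c$ decay only like $1/x$ (the Coulomb/ADM corrections), hence are \emph{not} $L^1$, so plain Levinson does not apply. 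The remedy is the standard refinement: diagonalize $A_0$ by a constant matrix $P$, set $z=P^{-1}g$, and split $P^{-1}(\text{perturbation})P = D/x + R(x)$ with $R\in L^1([x_0,\infty))$ by the subleading decay of the tails. A direct computation shows the \emph{diagonal} part of $D$ to be purely imaginary: tracing it through, the $(1,1)$-entry equals $-i(a_1-\lambda b_1)/\sqrt{\lambda^2-1}=:-i\nu$ with $a_1,b_1$ the $1/x$-coefficients of $a-1$ and $b$ and $\nu\in\Rset$, while the $c$-term cancels. The \emph{off}-diagonal entries of $D/x$, after the substitution absorbing $e^{\pm i\sqrt{\lambda^2-1}\,x}$, acquire the oscillatory factor $e^{\mp 2i\sqrt{\lambda^2-1}\,x}$ and are therefore only conditionally, but still, integrable as long as $\sqrt{\lambda^2-1}\neq0$.

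A Levinson-type asymptotic-integration theorem then yields a fundamental system
\[
g_\pm(x)= x^{\mp i\nu}\,e^{\pm i\sqrt{\lambda^2-1}\,x}\big(v_\pm+o(1)\big),\qquad x\to\infty,
\]
the Coulomb-modified plane waves. Because $|x^{\mp i\nu}|=1$ and $|e^{\pm i\sqrt{\lambda^2-1}\,x}|=1$, both solutions are bounded and bounded below in amplitude; in particular neither lies in $L^2$ near $\infty$ (confirming the limit-point case), and, having comparable $L^2([0,X])$-norms as $X\to\infty$, \emph{neither is subordinate}. By Gilbert--Pearson subordinacy theory for Dirac systems, the absence of a subordinate solution for every $\lambda$ with $|\lambda|>1$ forces the spectral measure of $K_{k;\theta}$ to be purely absolutely continuous throughout $(-\infty,-1)\cup(1,\infty)$, which with Lemma~\ref{essK} is the assertion. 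I expect the main difficulty to be exactly the $1/x$ long-range tail: one must verify carefully that the off-diagonal contributions are conditionally integrable, which fails as $\lambda\to\pm1$ where $\sqrt{\lambda^2-1}\to0$ and the oscillation producing convergence disappears --- consistent with claiming pure absolute continuity only on the \emph{open} set away from the thresholds $\pm1$.
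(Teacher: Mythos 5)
Your proposal is correct in substance, and the computation at its core checks out (in particular, after diagonalizing the constant part the $c$-terms do cancel and the diagonal $1/x$-contribution $-i(a_1-\lambda b_1)/\sqrt{\lambda^2-1}$ is purely imaginary), but it follows a genuinely different route from the paper's. The paper never analyzes solution asymptotics at all: it invokes Weidmann's theorem (Theorem~\ref{th:weid}, from \cite{WeiB,WeiC}) for systems $J\frac{d}{dx}+P_1+P_2$ with $P_1\in L^1$ near infinity and $P_2$ of bounded variation converging to $\mathrm{diag}(\mu_+,\mu_-)$, and merely verifies these hypotheses for $\widetilde{K}_{k;\theta}$ with $P_1=0$ and $P_2$ the full coefficient matrix \eqref{def:p2}, whose limit is $\mathrm{diag}(1,-1)$; since that theorem covers \emph{every} self-adjoint realization, $\theta$-independence is automatic there, just as it is for your subordinacy argument. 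What your route buys is finer output: explicit Coulomb-modified plane-wave asymptotics $x^{\mp i\nu}e^{\pm i\sqrt{\lambda^2-1}\,x}(v_\pm+o(1))$, a confirmation of the limit-point property at infinity, and non-subordinacy of all solutions, after which Gilbert--Pearson theory in its Dirac-system form gives pure absolute continuity; note that you should quote it in the version valid when the left endpoint is limit-circle (not regular), e.g.\ Behncke's extension, since $x=0$ carries the $\theta$-boundary condition of a limit-circle endpoint here. What the paper's route buys is weaker hypotheses and a two-line verification: bounded variation plus convergence of $P_2$ requires no decay \emph{rates}, whereas your Levinson/Harris--Lutz step needs the splitting of the perturbation as $D/x+R$ with $R\in L^1$, i.e.\ a quantitative rate on the remainders of $a-1$, $b$, $c$. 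Assumptions~\ref{massASS} and~\ref{potASS} fix only the leading behavior ($m\to M$, $\phi\sim Ze/r$) with no rate on the error terms, so at the stated level of generality this step is a small gap in your argument; it is harmless for the actual electrovacuum family of Appendix~\ref{app:electrovac}, where the remainders decay by a full extra power, but to make your proof cover the lemma as stated you would either need to add such a decay hypothesis or replace the $1/x$-splitting by a bounded-variation-based asymptotic integration --- which is essentially how Weidmann's theorem itself is proved.
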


For the proof we utilize the following Theorem from \cite{WeiB,WeiC}.
\begin{thm}\label{th:weid} (Weidmann) Let 
\be
\tau:=  \left[\begin{array}{cc}  0  & - \frac{d}{dx} \\   \frac{d}{dx} &  0
	\end{array}\right] + P_1(x) + P_2(x) 
\ee
be defined on $ (a , \infty)$. 
 Further assume that  $|P_1(x)| \in L^1 (c, \infty)$  for some $ c \in (a , \infty)$, and  $P_2(x)$ is of 
bounded variation in $[c, \infty)$ with 
\be 
\lim_{x \rightarrow \infty} P_2(x) = \left[\begin{array}{cc}  \mu_{+}  & 0 \\   0 &  \mu_{-}
	\end{array}\right] \,\,\,\,\,\ \text{for} \,\,\,\,\,\,  \mu_{-} \leq  \mu_{+} .
\ee
Then every self-adjoint realization $A$ of $\tau$ has purely absolutely continuous spectrum in 
$(-\infty, \mu_{-}) \cup (\mu_{+}, \infty)$. 
\end{thm}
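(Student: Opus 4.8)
The plan is to read off the absolute continuity from the asymptotic behaviour at $x\to\infty$ of \emph{all} solutions of the eigenvalue equation $\tau u=\lambda u$, and then to convert boundedness of those solutions into pure absolute continuity by subordinacy theory. First I would write $J=\left[\begin{smallmatrix}0&-1\\1&0\end{smallmatrix}\right]$, so that $\tau u=Ju'+(P_1+P_2)u$, and recast $\tau u=\lambda u$ as the first-order system $u'=B(x,\lambda)u$ with $B(x,\lambda)=-J\bigl(\lambda I-P_1(x)-P_2(x)\bigr)$ (using $J^{-1}=-J$). By hypothesis the coefficient matrix has a limit $B_\infty(\lambda)=-J(\lambda I-M_\infty)$, where $M_\infty=\left[\begin{smallmatrix}\mu_+&0\\0&\mu_-\end{smallmatrix}\right]$; explicitly $B_\infty(\lambda)=\left[\begin{smallmatrix}0&\lambda-\mu_-\\-(\lambda-\mu_+)&0\end{smallmatrix}\right]$, whose eigenvalues are $\pm i\,k(\lambda)$ with $k(\lambda)=\sqrt{(\lambda-\mu_-)(\lambda-\mu_+)}$. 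These are purely imaginary and distinct exactly when $(\lambda-\mu_-)(\lambda-\mu_+)>0$, i.e. precisely for $\lambda\in(-\infty,\mu_-)\cup(\mu_+,\infty)$, so on this set the frozen-coefficient system has two bounded oscillatory solutions $e^{\pm i k(\lambda)x}$, whereas inside the gap $(\mu_-,\mu_+)$ the eigenvalues are real and one gets exponential dichotomy. This already pins down the claimed interval.

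The second step is to promote the frozen-coefficient picture to the true variable-coefficient system by asymptotic integration, using the decomposition $B(x,\lambda)-B_\infty(\lambda)=JP_1(x)+J\bigl(P_2(x)-M_\infty\bigr)$ dictated by the two hypotheses. The first summand satisfies $JP_1\in L^1(c,\infty)$ and is handled directly by the Levinson asymptotic-integration theorem, which yields a fundamental system $u_\pm(x)=(v_\pm+o(1))\,e^{\pm i k(\lambda)x}$ as $x\to\infty$, with $v_\pm$ the eigenvectors of $B_\infty(\lambda)$. The second summand is not in $L^1$ in general, but it tends to $0$ and has finite total variation; I would absorb it by a preliminary slowly varying (adiabatic) diagonalization, conjugating by a transformation $T(x)$ that diagonalizes $B_\infty(\lambda)+J(P_2(x)-M_\infty)$ pointwise, so that the new off-diagonal coupling is governed by $T'(x)$, whose contribution is controlled by the variation measure $dP_2$. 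After this conjugation the transformed remainder is integrable and Levinson applies again. The upshot is that for every $\lambda\in(-\infty,\mu_-)\cup(\mu_+,\infty)$ all solutions of $\tau u=\lambda u$ are bounded and asymptotically oscillatory; in particular none is square-integrable near $\infty$ and none is subordinate.

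Finally I would invoke subordinacy theory (Gilbert--Pearson) together with the Weyl--Titchmarsh $m$-function representation of the half-line spectral measure. For such a realization the essential support of the absolutely continuous part of the spectral measure coincides, up to a Lebesgue-null set, with the set of $\lambda$ for which no solution of $\tau u=\lambda u$ is subordinate at $\infty$, while the singular part is carried by its complement. By the previous step every $\lambda$ in $(-\infty,\mu_-)\cup(\mu_+,\infty)$ lies in the former set, so on this interval the spectrum is purely absolutely continuous. The key point for the phrase \emph{every} self-adjoint realization is that the non-subordinacy criterion depends only on the behaviour of solutions as $x\to\infty$ and is insensitive to the boundary condition at the finite endpoint $a$; since $\infty$ is in the limit-point case throughout the oscillatory regime, distinct self-adjoint realizations differ only through that endpoint condition and therefore all share the same purely absolutely continuous spectrum on $(-\infty,\mu_-)\cup(\mu_+,\infty)$.

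The main obstacle is the second step, and within it the bounded-variation summand $J(P_2-M_\infty)$: Levinson's theorem in its clean form requires an $L^1$ perturbation, so merely bounded-variation decay must be routed through the adiabatic diagonalization, and the delicate verification is that the error terms generated by $T'(x)$ are integrable against $dP_2$, so that the conjugated remainder is genuinely $L^1$ and the diagonal phases remain the undisturbed $\pm i k(\lambda)x$. This is exactly where the bounded-variation hypothesis is consumed and is the technical heart of Weidmann's argument \cite{WeiB,WeiC}; everything else is the bookkeeping of translating the resulting solution asymptotics through subordinacy theory.
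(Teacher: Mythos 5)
The paper does not prove this statement: it is quoted from Weidmann \cite{WeiB,WeiC} and used as a black box in the proof of Lemma~\ref{acK}, so the honest comparison is with Weidmann's own argument. Your first half essentially reproduces it: rewrite $\tau u=\lambda u$ as $u'=-J(\lambda I-P_1-P_2)u$, observe that the frozen coefficient matrix has distinct purely imaginary eigenvalues $\pm ik(\lambda)$, $k(\lambda)=\sqrt{(\lambda-\mu_-)(\lambda-\mu_+)}$, exactly for $\lambda\in(-\infty,\mu_-)\cup(\mu_+,\infty)$, strip off the bounded-variation part by a pointwise diagonalization whose cost is controlled by the variation measure of $P_2$, and finish the asymptotic integration with a Levinson-type theorem on the remaining $L^1$ perturbation, concluding that all solutions are bounded and oscillatory there; this is exactly how Weidmann obtains his solution estimates, and since for large $x$ the perturbed matrix is real, trace-free, with positive determinant, its eigenvalues stay purely imaginary, so the Levinson dichotomy condition is automatic. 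Where you genuinely diverge is the second half: Weidmann's 1982 paper predates Gilbert--Pearson subordinacy, and he converts solution bounds into spectral information via his own criterion --- if all solutions of $\tau u=\lambda u$ are bounded, uniformly for $\lambda$ in compact subintervals, then the spectral matrix of \emph{every} self-adjoint realization is purely absolutely continuous there --- an estimate on the Weyl--Titchmarsh $M$-matrix that is insensitive to the boundary condition at the finite endpoint (your Levinson asymptotics do supply the required local uniformity in $\lambda$). Your subordinacy route is a legitimate alternative, but it silently imports two nontrivial ingredients you should cite or prove: the extension of subordinacy theory from scalar Sturm--Liouville operators to $2\times 2$ Dirac-type systems, and its two-singular-endpoint version, since in the intended application the left endpoint is itself singular and limit circle (that is precisely why the $\theta$-family of extensions exists), so one cannot argue as if $a$ were regular. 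Two harmless slips: after the adiabatic diagonalization the phases are $\pm i\int^x k(s)\,ds$ with $k(x)\to k(\lambda)$, not literally $\pm ik(\lambda)x$, because bounded variation alone does not make $k(x)-k(\lambda)$ integrable --- but only boundedness and non-vanishing moduli enter non-subordinacy; and limit point versus limit circle at $\infty$ is a $\lambda$-independent dichotomy, so your phrase ``limit-point throughout the oscillatory regime'' should simply read that $\tau$ is in the limit point case at $\infty$, which the existence of non-$L^2$ solutions there does establish.
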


\begin{proof}[Proof of {Lemma~\ref{acK}}] 
Recall that {$\widetilde{K}^{}_{k,\theta}$} is in the form of $\tau$ with $P_1(0) =0$ and 
\begin{align} \label{def:p2}
{P_2(x) =  \left[\begin{array}{cc} f(r(x)) -\frac{e}{\mEL c^2} \phi(r(x)) & \frac{\hbar}{\mEL c} \frac{k}{r(x)} f(r(x)) \\
\frac{\hbar}{\mEL c} \frac{k}{r(x)} f(r(x))  & - f(r(x)) -\frac{e}{\mEL c^2} \phi(r(x)) 
	\end{array}\right].}
\end{align}
 By hypothesis, both $\phi(r(x))$ and $m(r(x))$ are continuously differentiable and hence of bounded variation. 
Furthermore, we can see that
\be
\lim_{x \rightarrow \infty} \phi(r(x))=0 = \lim_{x \rightarrow \infty} \frac{k}{r(x)} f(r(x){)}, 
\,\,\,\,\mbox{and} \,\,\,\, 
  \lim_{x \rightarrow \infty} f(r(x){)} = 1,
\ee 
or, equivalently,
{
\be \label{lim0}
\lim_{x \rightarrow \infty} P_2(x) = \left[\begin{array}{cc}  1 & 0 \\   0 &  - 1	\end{array}\right]. 
\ee
 Hence the spectrum of $\widetilde{K}_{k;\theta} $ is purely absolutely continuous in 
$( - \infty , - 1) \cup (1, \infty)$.}
\end{proof}

 The two lemmas imply the following.
\begin{cor}\label{cor:sc} 
  The singular continuous spectrum $\sigma_{sc}(K_{k;\theta}^{\text{}})= \emptyset$.
\end{cor}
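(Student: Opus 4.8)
The plan is to combine the two preceding lemmas, which between them control the spectrum everywhere except possibly at the two threshold points $\pm 1$, and then to invoke the defining property of the singular continuous part to eliminate those two points as well. Recall the orthogonal decomposition of the spectral measure into its absolutely continuous, singular continuous, and pure point constituents, which yields $\sigma(K_{k;\theta}) = \sigma_{ac}(K_{k;\theta}) \cup \sigma_{sc}(K_{k;\theta}) \cup \sigma_{pp}(K_{k;\theta})$; the object to be shown empty is the closed set $\sigma_{sc}(K_{k;\theta})$.

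First I would localize $\sigma_{sc}$. By Lemma~\ref{essK} we have $\sigma_{ess}(K_{k;\theta}) = (-\infty,-1]\cup[1,\infty)$, so the gap $(-1,1)$ contains no essential spectrum; hence $\sigma(K_{k;\theta})\cap(-1,1) = \sigma_{disc}(K_{k;\theta})\cap(-1,1)$ consists only of isolated eigenvalues of finite multiplicity. Such discrete spectrum is purely of pure-point type, so $\sigma_{sc}(K_{k;\theta})\cap(-1,1)=\emptyset$. On the other hand, Lemma~\ref{acK} asserts that the spectrum is purely absolutely continuous on the open set $(-\infty,-1)\cup(1,\infty)$, whence $\sigma_{sc}(K_{k;\theta})\cap\big((-\infty,-1)\cup(1,\infty)\big)=\emptyset$. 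Combining the two exclusions leaves only the two endpoints, i.e. $\sigma_{sc}(K_{k;\theta})\subseteq\{-1,+1\}$.

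It then remains to dispose of these boundary points. Here I would use that, by construction, the singular continuous spectral measure is non-atomic (it carries no point masses, those being absorbed into the pure-point part) while also being singular with respect to Lebesgue measure. A non-atomic Borel measure supported on the finite set $\{-1,+1\}$ must be the zero measure, so the restriction of $K_{k;\theta}$ to its singular continuous subspace is trivial and $\sigma_{sc}(K_{k;\theta})=\emptyset$. The only place where one must be slightly careful --- and the sole, essentially bookkeeping, obstacle --- is the treatment of the thresholds $\pm 1$: the two lemmas only cover the open complement of $\{\pm 1\}$, so one cannot conclude directly from them that $\pm 1\notin\sigma_{sc}$; it is precisely the non-atomicity of the singular continuous part that removes these last two points and completes the proof.
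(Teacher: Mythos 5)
Your proof is correct and follows essentially the same route as the paper's: both use Lemma~\ref{essK} and Lemma~\ref{acK} to confine $\sigma_{sc}(K_{k;\theta})$ to the two threshold points $\{-1,1\}$, and then rule this out because a nonzero singular continuous (hence non-atomic) spectral measure cannot be supported on a finite set. Your write-up merely makes explicit the non-atomicity argument that the paper compresses into the phrase ``which is impossible.''
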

\begin{proof}
 By Lemma~\ref{acK} and Lemma~\ref{essK} the essential spectrum is the closure of $\sigma_{ac}(K_{k,\theta})$. 
 Since the singular continuous spectrum is a subset of the essential spectrum, and since the interior of the
essential spectrum here is purely absolutely continuous, a non-empty $\sigma_{sc}(K_{k;\theta}^{\text{}})$ 
would have to consist of the discrete set $\{-1, 1\}$, which is impossible.
\end{proof}

Next we turn to the discrete spectrum.  

\begin{thm} \label{thm:eigenvalues} Under Assumptions~\ref{massASS} and \ref{potASS}, 
the eigenvalues of any self-adjoint extension $H_\theta$ of $H$ form a countably infinite set 
located in the gap of the essential spectrum. 
 The set of accumulation points of this discrete spectrum $\sigma_{disc}(H_\theta^{})$ is either $\{\mEL c^2\}$ or
$\{-\mEL c^2,\mEL c^2\}$; for the empirically known hydrogenic ion parameters, only $\mEL c^2$ is an 
accumulation point of the discrete spectrum. 
\end{thm}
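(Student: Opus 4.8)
The plan is to pass to the partial waves and to argue that the discrete spectrum in the gap is governed entirely by the long-range (Coulomb) tail of the coefficients as $x\to\infty$. Since $H_\theta=\bigoplus_{k}\mEL c^2\,K_{k;\theta}$ and, by Theorem~\ref{th:esspec}, the essential spectrum is $(-\infty,-\mEL c^2]\cup[\mEL c^2,\infty)$ with no embedded eigenvalues, every point of $\sigma_{disc}(H_\theta)$ is an isolated eigenvalue of finite multiplicity lying in the open gap $(-\mEL c^2,\mEL c^2)$; hence $\sigma_{disc}(H_\theta)$ is countable and its only possible accumulation points are the two band edges $\pm\mEL c^2$. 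It therefore suffices to work with a single $K_{k;\theta}$ on $(-1,1)$ and to decide, at each edge $\pm1$, whether infinitely many eigenvalues accumulate there.

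First I would record the tail asymptotics. Using $r\sim x$ as $x\to\infty$ together with $f\sim 1-\tfrac{GM}{c^2}\tfrac1x$ (from $m(r)\to M$) and $\tfrac{e}{\mEL c^2}\phi\sim\tfrac{Ze^2}{\mEL c^2}\tfrac1x$ (from $\phi\sim Ze/r$), the diagonal entries of $\widetilde{K}_k$ satisfy $f-\tfrac{e}{\mEL c^2}\phi\sim 1-(c_1+c_2)/x$ and $-f-\tfrac{e}{\mEL c^2}\phi\sim -1+(c_1-c_2)/x$, where $c_1:=GM/c^2$ and $c_2:=Ze^2/(\mEL c^2)$, while the off-diagonal coefficient $\tfrac{\hbar k}{\mEL c}\tfrac{f}{r}=O(1/x)$ feeds into the reduced scalar problem only at order $1/x^2$ (as a centrifugal term). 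The core step is then an effective Schr\"odinger reduction at each edge. Writing the eigenvalue system $K_{k;\theta}g=\lambda g$ as the first-order pair $(\partial_x+kc)g_1=(f+B+\lambda)g_2$ and $(\partial_x-kc)g_2=(f-B-\lambda)g_1$ with $B:=\tfrac{e}{\mEL c^2}\phi$, near the upper edge the factor $f+B+\lambda\to2$ stays bounded away from $0$, so I eliminate $g_2$ to obtain for $g_1$ the asymptotic equation $-g_1''-\tfrac{\gamma_+}{x}g_1+O(x^{-2})g_1=-(1-\lambda^2)g_1$ with $\gamma_+=(1+\lambda)(c_1+c_2)\to2(c_1+c_2)$; symmetrically, near the lower edge $f-B-\lambda\to2$ is bounded away from $0$, and eliminating $g_1$ gives $-g_2''-\tfrac{\gamma_-}{x}g_2+O(x^{-2})g_2=-(1-\lambda^2)g_2$ with $\gamma_-=(1-\lambda)(c_1-c_2)\to2(c_1-c_2)$.

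I would then invoke the classical fact that a one-dimensional Schr\"odinger operator of the form $-\tfrac{d^2}{dx^2}-\tfrac{\gamma}{x}+O(x^{-2})$ on a half-line has infinitely many eigenvalues accumulating at its continuum threshold precisely when $\gamma>0$ (the subcritical $O(x^{-2})$ remainder, which carries the $k$-dependent centrifugal term, cannot change this; the hydrogenic tail $\sim-\gamma^2/(4n^2)$ does the counting). Since $c_1+c_2>0$ always, $\gamma_+>0$ and infinitely many eigenvalues accumulate at $+\mEL c^2$ for every $k$; this already gives the first assertion that $\sigma_{disc}(H_\theta)$ is countably infinite and confined to the gap. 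At the lower edge, accumulation occurs if and only if $\gamma_->0$, i.e. $c_1>c_2$, which reads $G\,A(Z,N)\mPR\mEL>Ze^2$, equivalently $\tfrac{A(Z,N)}{Z}\gamma_{pe}>1$. As $A(Z,N)/Z\le3$ and $\gamma_{pe}\approx4.5\cdot10^{-40}$, this inequality fails for every known nucleus, so in the physical regime only $+\mEL c^2$ is an accumulation point, whereas abstractly the accumulation set is $\{\mEL c^2\}$ or $\{-\mEL c^2,\mEL c^2\}$ according to the sign of $c_1-c_2$ --- exactly the stated dichotomy.

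The \textbf{main obstacle} is making the edge reduction rigorous, i.e. controlling the $O(x^{-2})$ and integrable remainders and, above all, showing that the strong singularity of $f$ and $\phi$ at $x=0$ and the choice of self-adjoint extension $\theta$ only shift the eigenvalue count by a finite amount and so cannot affect the accumulation at the edges. I would handle this by decoupling at a large radius $x=R$: imposing an auxiliary separated boundary condition there splits $K_{k;\theta}$ (up to a finite-rank, relatively compact correction) into a piece on $[0,R]$ with purely discrete, hence finite, spectrum in the gap and a piece on $[R,\infty)$ that is a relatively bounded perturbation of the exactly solvable Coulomb--Dirac comparison operator, to which Weidmann's oscillation theory for one-dimensional Dirac systems applies and converts the signs of $\gamma_\pm$ directly into the (in)finiteness of the eigenvalue count near each edge. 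An alternative, shorter route for the accumulation at $+\mEL c^2$ is to observe that the present coefficients share the tail of the RWN spacetime, for which infinitely many gap eigenvalues were established in \cite{BMB}, and that the difference of the two radial operators is supported near $r=0$ and relatively compact, hence leaves the accumulation at the upper edge unchanged.
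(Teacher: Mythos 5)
Your main line of argument is sound and arrives at exactly the paper's dichotomy --- infinitely many gap eigenvalues always, with accumulation at $-\mEL c^2$ if and only if $GM\mEL > Ze^2$ --- by the same underlying mechanism, namely oscillation theory at infinity governed by the Coulomb tails of $f$ and $\phi$; but your implementation is genuinely different from the paper's. The paper performs neither the edge reduction nor any decoupling by hand: it invokes the Hinton--Mingarelli--Read--Shaw theorems for singular Dirac systems. Theorem~\ref{thm:Hinton} (Thm.~2.3 of \cite{HMRS}) states that the gap spectrum of \emph{any} self-adjoint extension is infinite once the associated scalar equation \eqref{eq:secord} --- whose matrix-to-scalar reduction via a positive linear functional plays the role of your hand-made elimination of one spinor component --- is oscillatory at $0$ or at $\infty$, verified there by $\lim_{x\to\infty}x^2\Gamma_+(x)<-\tfrac14$. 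For the accumulation points the paper uses HMRS Thm.~4.1 with shifted constants $c_1=1\pm\epsilon$, $c_2=1\mp\epsilon$ (Lemma~\ref{lem:accum}): an edge fails to be an accumulation point iff the corresponding scalar equation is non-oscillatory at \emph{both} endpoints, and the endpoint $0$ is disposed of by the explicit criterion $p\pm\nu\le-1$ near $0$ (HMRS Cor.~3.4), not by decoupling. Your Glazman splitting at $x=R$, with limit-circle discreteness of the $[0,R]$ piece (available here because Theorem~\ref{th:self} gives deficiency indices $(1,1)$, i.e.\ limit circle at the origin) and finite-rank-resolvent stability of eigenvalue counts, is a legitimate substitute for that step. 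What HMRS buys the paper is that extension-independence and the singular origin are handled in a single citation; what your route buys is a more self-contained and physically transparent picture (an effective Coulomb problem at each band edge), though, as you concede, converting oscillation into eigenvalue counting on $[R,\infty)$ still ultimately rests on a Weidmann/HMRS-type theorem \cite{WeiA}.

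Two caveats. First, your ``alternative, shorter route'' via comparison with the RWN operator is wrong as justified: relative compactness of the difference preserves the essential spectrum but \emph{not} the (in)finitude of gap eigenvalues near its edge --- for instance, adding the relatively compact potential $-(1+x)^{-1}$ to $-d^2/dx^2$ on the half-line turns an operator with no eigenvalues into one with infinitely many eigenvalues accumulating at the threshold. What would rescue that remark is, again, decoupling together with the fact that the coefficient differences decay faster than $1/x$ at infinity, not relative compactness; either repair it or drop it. Second, both you and the paper pass from per-$k$ non-accumulation at $-\mEL c^2$ to non-accumulation for the direct sum $H_\theta=\bigoplus_k H^{\mbox{\tiny{rad}}}_{k;\theta}$ without excluding that eigenvalues coming from \emph{different} values of $k$ pile up at $-\mEL c^2$; since the paper makes the identical leap, this is not a defect of your proposal relative to the paper, but you should be aware that it is implicit in both arguments.
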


 To validate this theorem we utilize Theorem~2.3 from \cite{HMRS}. 
\begin{thm}\label{thm:Hinton} (Hinton et al.) 
 Let 
\begin{align}
Ly:= \left[\begin{array}{cc} 0 & -1  \\ 1 & 0 	\end{array}\right] \left\{ y^{\prime} -
 \left[\begin{array}{cc} p(x)  & c_1 + V_1(x)  \\c_2 - V_2(x) & -p(x) 
	\end{array}\right]y    \right\}  =: 
J y^{\prime}  -  P y.
\end{align}
 Assume that $d > 0$.
 Let $g$ be a nontrivial positive linear functional, and assume $P$ is locally absolutely continuous.
 Let $L_1$ be any self-adjoint extension of $L$. 
 Then $ \sigma (L_1) \cap ( -d,d) $ is infinite if the scalar differential equation,
\begin{align} \label{eq:secord}
-g(I)z^{\prime \prime} + g \Big( P^2 - d^2 I + \frac{ [P^{\prime}J - J P^{\prime} ]}{2} \Big) z =0 
\end{align}
is oscillatory either at $0$ or at $ \infty$. 
\end{thm}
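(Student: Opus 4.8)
The plan is to turn the infinitude of $\sigma(L_1)\cap(-d,d)$ into a variational (min--max) count and then to feed the oscillation hypothesis into that count through an exact quadratic-form identity. The starting observation is purely spectral: for any self-adjoint $L_1$ and any $y$ in its domain one has $\|L_1 y\|^2-d^2\|y\|^2=\int_{\R}(\lambda^2-d^2)\,d\langle E_\lambda y,y\rangle$ with $E_\lambda$ the spectral resolution of $L_1$, so $\|L_1 y\|^2<d^2\|y\|^2$ forces $y$ to have a nonzero component in $\operatorname{Ran}E_{L_1}((-d,d))$. Hence, if I can produce an \emph{infinite-dimensional} space of test functions on which the strict inequality $\|L_1 y\|^2<d^2\|y\|^2$ holds, then $\dim\operatorname{Ran}E_{L_1}((-d,d))=\infty$; since $(-d,d)$ is a gap of the essential spectrum in the intended setting (each eigenvalue there having finite multiplicity), this yields infinitely many eigenvalues in $(-d,d)$. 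The point that makes the conclusion hold for \emph{any} self-adjoint extension is that my test functions will be smooth and compactly supported in $(0,\infty)$, hence lie in the domain of the minimal operator and therefore in the domain of every extension $L_1$, on which $L_1$ agrees with $L$; the form I compute is thus extension-independent.

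The heart of the matter is an identity that reproduces the potential in \eqref{eq:secord}. Writing the positive functional as $g(\cdot)=\operatorname{tr}(B\,\cdot)$ with $B\ge 0$ symmetric, I first test with a rank-one ansatz $y=z(x)\,w$, $w\in\R^2$. Using $J^*J=I$ (so $|Jw|^2=w^\top w$), the symmetry of $P$ (so $|Pw|^2=w^\top P^2 w$), the reality of $w^\top JP\,w$, one integration by parts, and the elementary antisymmetry $w^\top P'J\,w=-w^\top JP'\,w$ (valid since $P'$ is symmetric and $J^\top=-J$), a direct computation gives
\begin{align}\label{eq:rankone}
\|L(zw)\|^2-d^2\|zw\|^2=\int_0^\infty (w^\top w)\,|z'|^2\,dx+\int_0^\infty w^\top\!\Big(P^2-d^2 I+\tfrac12(P'J-JP')\Big)w\;|z|^2\,dx.
\end{align}
Summing \eqref{eq:rankone} over the two columns $w_1,w_2$ of $B^{1/2}$, and using $\sum_i w_i^\top A\,w_i=\operatorname{tr}(AB)=g(A)$ for every symmetric $A$, the right-hand side becomes exactly the Dirichlet form of the scalar operator $\ell z:=-g(I)z''+g\!\big(P^2-d^2 I+\tfrac12(P'J-JP')\big)z$ of \eqref{eq:secord}, with $g(I)=\operatorname{tr}B>0$. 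Thus the combined form $\sum_i\big(\|L(zw_i)\|^2-d^2\|zw_i\|^2\big)$ equals the scalar $\ell$-form applied to $z$.

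With this identity in hand the proof closes through classical Sturm--Glazman oscillation theory. If $\ell z=0$ is oscillatory at $0$ (resp. at $\infty$), there is a sequence $\{z_n\}$ of scalar functions, each compactly supported in a subinterval shrinking toward $0$ (resp. escaping to $\infty$) and with mutually disjoint supports, on which the $\ell$-form is strictly negative. For each $n$ the sum $\sum_i\big(\|L(z_n w_i)\|^2-d^2\|z_n w_i\|^2\big)<0$, so at least one column yields a vector $y_n:=z_n w_{i(n)}$ with $\|Ly_n\|^2<d^2\|y_n\|^2$. Because $L$ is a local operator and the $z_n$ are disjointly supported, both $\langle y_n,y_m\rangle$ and $\langle Ly_n,Ly_m\rangle$ vanish for $n\neq m$; hence the form diagonalizes over the $y_n$, is strictly negative on their infinite-dimensional span, and the variational principle of the first paragraph finishes the argument.

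I expect the principal obstacle to be twofold. First, extracting a usable scalar equation from a \emph{general} positive functional $g$ is not obvious; the resolution is the decomposition $g=\sum_i w_i^\top(\cdot)w_i$ through the columns of $B^{1/2}$, together with the antisymmetry identity that converts the cross term into precisely the commutator contribution $\tfrac12(P'J-JP')$ appearing in \eqref{eq:secord} — this is what pins down the correct scalar potential and is the only delicate algebraic step. Second, one must ensure that below-$d^2$ negativity detects genuine eigenvalues rather than essential spectrum; this needs the gap structure $\sigma_{\mathrm{ess}}(L_1)\subset(-\infty,-d]\cup[d,\infty)$, which in the application follows from the limiting behavior $P_2(x)\to\operatorname{diag}(\mu_+,\mu_-)$ governed by Theorem~\ref{th:weid}. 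An alternative route, likely the one in \cite{HMRS}, runs through Pr\"ufer-angle oscillation theory for the Hamiltonian system $Jy'=Py$ directly, comparing the rotation of the system's Pr\"ufer angle with that of $\ell z=0$; the quadratic-form route above is attractive because it is self-contained and makes the origin of every term in \eqref{eq:secord} transparent.
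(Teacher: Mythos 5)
This statement is imported by the paper from the literature --- it is Theorem~2.3 of Hinton--Mingarelli--Read--Shaw \cite{HMRS}, quoted without proof --- so there is no internal proof to compare against line by line. Your argument is correct, and it essentially reconstructs the quadratic-form method behind the original reference: the rank-one identity \eqref{eq:rankone} checks out (with $P=\bigl[\begin{smallmatrix} V_2-c_2 & p \\ p & c_1+V_1\end{smallmatrix}\bigr]$ symmetric, $J$ antisymmetric, the cross term $2\Re(z'\bar z)\,w^\top JPw$ integrates by parts to $\tfrac12\int |z|^2\, w^\top(P'J-JP')w$, using local absolute continuity of $P$), summing over the columns of $B^{1/2}$ with $g(\cdot)=\operatorname{tr}(B\,\cdot)$ matches the paper's own convention $g(B)=\sum_i \delta_i\langle Bu_i,u_i\rangle$, and the disjoint-support diagonalization plus the spectral-projection injectivity argument is sound, as is the observation that compactly supported test functions make the conclusion extension-independent.

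Two points deserve tightening. First, your claim that oscillation yields disjointly supported $z_n$ with \emph{strictly} negative $\ell$-form is classical but not immediate: restricting a solution between consecutive zeros gives form exactly $0$, and strict negativity requires the standard corner argument (the kinked restriction attains the infimum of the Rayleigh quotient on a larger interval but is not an eigenfunction, so the lowest Dirichlet eigenvalue there is strictly negative), followed by smoothing. You should say this, since it is the only place oscillation is actually converted into negativity. Second, your passage from $\dim\operatorname{Ran}E_{L_1}((-d,d))=\infty$ to infiniteness of $\sigma(L_1)\cap(-d,d)$ leans on a spectral gap ``in the intended setting,'' which would make the theorem's stated generality contingent on Theorem~\ref{th:weid}; this crutch is unnecessary. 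If $\sigma(L_1)\cap(-d,d)$ were finite, each of its points would be isolated, hence an eigenvalue whose eigenspace consists of $L^2$-solutions of the first-order $2\times2$ system $Jy'-Py=\lambda y$ and so has dimension at most $2$, giving a finite-dimensional spectral subspace --- a contradiction with no gap hypothesis needed. With these two repairs made explicit, your proof is complete and, as you surmise, is the quadratic-form route rather than the Pr\"ufer-angle comparison; its advantage is precisely that it exhibits where the commutator term $\tfrac12(P'J-JP')$ in \eqref{eq:secord} comes from.
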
 

 In \eqref{eq:secord}, $g$ is a nontrivial linear positive  functional defined on the real $n \times n$ matrices. Therefore, 
for any symmetric and positive semidefinite operator $B$, one has $ g(B) = \sum_{i=1}^n {\delta_i}\la Bu_i , u_i \ra $,
where {the $\delta_i\geq 0$ sum to 1, and the} $u_i$ are non-zero {orthonormal} $n$-vectors. 
In our case, {$n=2$, with $u_1=( 1,0)^T$ and $u_2=(0,1)^T$, and we will consider the ``$+$ case'' 
with $\delta_1^+=1$ and $\delta_2^+=0$, and the ``$-$ case'' with $\delta_1^-=0$ and $\delta_2^-=1$.}

\begin{proof}[Proof of Theorem~\ref{thm:eigenvalues}] 
 We use Theorem \ref{thm:Hinton} for $P = - P_2$, where $P_2$ is as in \eqref{def:p2} and $c_1= c_2 = d=  1$. 
 In particular, we have
\begin{align}\label{Vp}
&V_1= \frac{e}{\mEL c^2} \phi(r(x)) +  f(r(x)) - 1, \\ 
& V_2= \frac{e}{\mEL c^2} \phi(r(x)) -  f(r(x)) + 1,  \\ 
& p(x) = - \frac{\hbar}{\mEL c} \frac{k f(r(x)) }{r(x)}. 
\end{align}
Note that $\frac{r}{x} \rightarrow \frac{\hbar}{\mEL c}$ as $ x \rightarrow \infty$. 
Hence $f(r(x)) \sim 1 - \frac{GM\mEL}{\hbar c x}$ as $ x \rightarrow \infty$.

 Therefore, in the ``$+$ case'', resp. the ``$-$ case'', equation \eqref{eq:secord} yields    
\begin{align}\label{zprpr}
- z^{\prime \prime}+\Gamma_{\pm}(x)z =0 ,
\end{align}
where 
\begin{align}
\Gamma_+(x) = V_2^2(x) - 2 V_2(x) + p^2 + p^{\prime}  \label{GammaA}, \\
\Gamma_-(x) = V_1^2(x) + 2 V_1(x) + p^2 - p^{\prime} \label{GammaB}.
\end{align}
 This gives 
\begin{equation}\label{gammaasymp}
 \Gamma_{\pm} \sim - 2
 \Big[ \tfrac{GM \mEL}{\hbar c} \pm \tfrac{Ze^2}{\hbar c}\Big]\tfrac1x . 
\end{equation}
 Clearly, $ \lim_{x \rightarrow \infty} x^2  \Gamma_+(x) < - \frac{1}{4}$.
 Hence, equation (\ref{zprpr}) with ``$+$'' has solutions with oscillatory behavior at infinity, see \cite[Section~XIII]{DS}. 

 Thus by Theorem \ref{thm:Hinton}, $\sigma(\widetilde{K}_{k,\theta})\cap (- 1,1)$ is infinite,  
so $\widetilde{K}_{k,\theta}$, and therefore ${K}^{\text{}}_{k,\theta}$, have infinitely many eigenvalues in the gap of their
essential spectrum. 

 Lastly we recall that the spectrum of $H_\theta^{}$ is the union of the spectra of the radial Dirac operators obtained by the
partial wave decomposition. 
 This proves that the discrete spectrum of $H_\theta^{}$ is infinite and located in the gap $(-\mEL c^2,\mEL c^2)$.

 We next prove our statement about its accumulation points. 
 We will need the following.

\begin{lemm}\label{lem:accum} 
For each operator $K_{k,\theta}$ the set of accumulation points of its discrete spectrum is
$\{1\}$ or $\{-1,1\}$, depending on whether ${GM \mEL}<{Z e^2}$ or ${GM \mEL}>{Z e^2}$, respectively.
\end{lemm}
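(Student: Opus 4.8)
The plan is to localize the accumulation points first and then detect, edge by edge, whether eigenvalues pile up there, reusing the oscillation machinery already assembled for Theorem~\ref{thm:eigenvalues}. Since $\sigma_{ess}(K_{k,\theta}) = (-\infty,-1]\cup[1,\infty)$ by Lemma~\ref{essK}, and every point of $\sigma_{disc}(K_{k,\theta})$ is by definition an isolated eigenvalue of finite multiplicity, any accumulation point of $\sigma_{disc}$ must lie in $\sigma_{ess}$; as $\sigma_{ess}$ meets the open gap $(-1,1)$ nowhere, the only candidates adjacent to the gap are $-1$ and $+1$. Theorem~\ref{thm:eigenvalues} already furnishes infinitely many eigenvalues in the gap, so the accumulation set is a nonempty subset of $\{-1,1\}$, and it remains only to decide separately whether $+1$ and $-1$ each belong to it. Note that the tails controlling both decisions are independent of $k$ and of the boundary parameter $\theta$, since $p\sim -k/x$ enters \eqref{GammaA}--\eqref{GammaB} only at order $x^{-2}$ and $\theta$ affects only the behavior at $x=0$; hence the conclusion will be uniform in $k$ and $\theta$, as the statement requires.

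First I would show that $+1$ is always an accumulation point, which is the content of the ``$+$ case.'' The scalar reduction \eqref{zprpr} reads $-z'' + \Gamma_+ z = 0$ with the tail \eqref{gammaasymp}, $\Gamma_+(x)\sim -2\big(\tfrac{GM\mEL}{\hbar c}+\tfrac{Ze^2}{\hbar c}\big)\tfrac1x$. This coefficient is strictly negative for every $Z\geq 1$, so $x^2\Gamma_+(x)\to-\infty<-\tfrac14$ and the equation is oscillatory at $\infty$ (cf. Section~XIII in \cite{DS}). The point I would use is that oscillation of the ``$+$'' reduction signals accumulation specifically at the \emph{upper} gap edge $+1$ (heuristically, the choice $\delta_1^+=1$ evaluates the Dirac system at energy $+d=+1$). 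Hence $1$ lies in the accumulation set unconditionally.

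Next I would treat the lower edge through the ``$-$ case,'' where \eqref{zprpr} becomes $-z''+\Gamma_- z = 0$ with $\Gamma_-(x)\sim -2\big(\tfrac{GM\mEL}{\hbar c}-\tfrac{Ze^2}{\hbar c}\big)\tfrac1x$ from \eqref{gammaasymp}. Here the sign of the coefficient is decisive: if $GM\mEL>Ze^2$ then $x^2\Gamma_-\to-\infty<-\tfrac14$, the ``$-$'' reduction is oscillatory at $\infty$, accumulation occurs at $-1$, and the set is $\{-1,1\}$; if $GM\mEL<Ze^2$ then $x^2\Gamma_-\to+\infty$, the reduction is non-oscillatory near $\infty$, and I would invoke the companion non-oscillation half of the criterion to conclude that only finitely many eigenvalues approach $-1$, so $-1$ is excluded and the set is $\{1\}$. (The borderline $GM\mEL=Ze^2$ is not part of the stated dichotomy and would require the next-order term of $\Gamma_-$.) Combining the two cases yields exactly the claimed alternative.

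The main obstacle is precisely this edge resolution. Theorem~\ref{thm:Hinton} as quoted only extracts infinitude of $\sigma\cap(-1,1)$ from oscillation of one reduced equation; it neither says which endpoint the eigenvalues approach nor provides the converse needed to rule out accumulation at $-1$ when $\Gamma_-$ is non-oscillatory. Upgrading ``infinitude in the gap'' to ``accumulation at the designated edge,'' in both directions and for both functionals, is the crux of the proof. I expect to obtain it either from the sharper, endpoint-specific statements available in \cite{HMRS}, or by a direct Sturm/Pr\"ufer-angle comparison for the Coulomb-type tails $\Gamma_\pm\sim C_\pm/x$, exploiting a charge-conjugation-type symmetry $(\phi,k)\mapsto(-\phi,-k)$ which interchanges $\Gamma_+$ and $\Gamma_-$ together with the two edges, thereby reducing the lower-edge analysis to the upper-edge one.
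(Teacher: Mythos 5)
Your setup is fine as far as it goes: localizing the accumulation set to $\{-1,1\}$ through the essential spectrum, getting nonemptiness from Theorem~\ref{thm:eigenvalues}, and the Coulomb-type tail computations for $\Gamma_\pm$ at infinity all agree with the paper. But the two places you either flag as ``the crux'' or pass over silently are exactly where the paper's proof does its work, and your proposal supplies neither. First, the edge-specific criterion: your heuristic that the ``$+$'' functional of the \emph{unshifted} equation detects the edge $+1$ and the ``$-$'' functional detects $-1$ is unjustified, and it is not the mechanism used. Theorem~\ref{thm:Hinton} with $c_1=c_2=d=1$ yields only infinitude of $\sigma\cap(-1,1)$, with no endpoint information. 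The paper instead invokes Theorem~4.1 of \cite{HMRS}: $\pm 1$ fails to be an accumulation point \emph{if and only if} equation \eqref{eq:secord}, formed with the $\epsilon$-shifted constants $c_1=1\pm\epsilon$, $c_2=1\mp\epsilon$ and the shrunk gap radius $d=1-\epsilon$, is non-oscillatory at both $0$ and $\infty$. This shift produces modified potentials $\Gamma^1_\pm$ and $\Gamma^{-1}_\pm$ (see \eqref{gamma+-1}, \eqref{gamma--1}; note the extra $+4\epsilon$ terms), and it is $\Gamma^1_+$ (oscillatory at $\infty$ always) and $\Gamma^{-1}_-$ (oscillatory at $\infty$ iff $GM\mEL>Ze^2$) that decide the two edges. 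Your tail asymptotics survive the shift up to factors of $(1-\epsilon)$, so that part transfers, but without the shifted iff-criterion your argument never ties oscillation to a particular edge; you named this obstacle and then left it standing.

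Second, and this you do not acknowledge at all: excluding $-1$ when $GM\mEL<Ze^2$ requires non-oscillation at \emph{both} endpoints, and $x=0$ is a singular endpoint where the coefficients blow up ($p\sim x^{-(3+\alpha)/(4+2\alpha)}$, $b\sim x^{-\beta/(2+\alpha)}$). Your preliminary remark that ``$\theta$ affects only the behavior at $x=0$'' implicitly treats the origin as irrelevant to the dichotomy; it is not. Oscillation near the singularity would by itself generate infinitely many eigenvalues approaching an edge, independently of the behavior at infinity. The paper closes this hole explicitly: by Corollary~3.4 of \cite{HMRS}, \eqref{eq:secord} is non-oscillatory at $0$ provided $p\pm\nu\leq -1$ near $0$, and the asymptotics under Assumptions~\ref{massASS} and \ref{potASS} (with $\alpha\geq 0$, $\beta\leq 1$) give $p\pm\nu\to-\infty$. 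Without this verification (or an equivalent one) the conclusion ``$-1$ is not an accumulation point when $GM\mEL<Ze^2$'' simply does not follow from non-oscillation at infinity alone, so as written your proof is incomplete at both of its decisive steps.
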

\begin{proof}
To determine if $ \pm 1$ are cluster point of the eigenvalues, we consider the operator $L$  in 
Theorem~\ref{thm:Hinton} for $ c_1 = 1 \pm \epsilon$, $ c_2 = 1 \mp \epsilon$ for some $0 < \epsilon <1$. 
In particular,  $ \pm 1 $ is not an accumulation point if and only if  \eqref{eq:secord} for $d = 1 \pm \epsilon$, 
$c_1$ and $c_2$  is non-oscillatory at both $0$ and $ \infty$, see \cite[Theorem~4.1]{HMRS}. 
 
 We first show that $1$ is an accumulation point. Plugging $ c_1 = 1 + \epsilon$, and $d=c_2 = 1 - \epsilon$ together 
with the functions in \eqref{Vp} into \eqref{eq:secord} we obtain \eqref{zprpr} with $\Gamma_\pm^{}$ replaced by $\Gamma_\pm^1$, where
 \begin{align} \label{eq:acpoint}
 & \Gamma^1_+  :=V_2^2 - 2(1- \epsilon) V_2 + p^2 + p^\prime ,\\ 
 & \Gamma^1_-  := V_1^2 + 2(1 +\epsilon) V_1 +p^2 - p^\prime + 4 \epsilon. 
 \end{align}
 It suffices to show that equation \eqref{zprpr} with $\Gamma^1_+$ has an oscillatory solution either at $\infty$ or $0$. 
 An easy calculation shows that, by \eqref{gammaasymp}, one has
 \be 
\Gamma^1_+ \sim - 2(1-\epsilon) \Big[ \tfrac{GM \mEL}{\hbar c} +\tfrac{Z e^2}{\hbar c}\Big]\tfrac1x \,\,\,\,\,\,\,\,\  \text{as} \,\, x\rightarrow \infty.
 \ee
 Since $0 < \epsilon <1$, one has $ \lim_{x \rightarrow \infty} x^2 \Gamma^1_+ < - \frac{1}{4}$. 
Therefore, \eqref{zprpr} with $\Gamma^1_+$ has an oscillatory solution near infinity,
and thus $1$ is always an accumulation point of the set of eigenvalues.
  
 Next we show that $-1$ is a cluster point if ${GM\mEL}\!>\!{Z e^2}$ but not if ${GM\mEL}\!<\!{Z e^2}$. 
 Consider \eqref{eq:secord} with $ d=c_1 = 1 -\epsilon$, and $c_2 = 1 + \epsilon$. 
 One obtains \eqref{zprpr} with $\Gamma_\pm^{}$ replaced by $\Gamma_\pm^{-1}$, where
 \begin{align} 
 & \Gamma^{-1}_+:=  V_2^2 - 2(1+ \epsilon) V_2 + p^2 + p^\prime + 4 \epsilon, \label{gamma+-1} \\ 
 & \Gamma^{-1}_-:=   V_1^2 + 2(1- \epsilon) V_2 + p^2 - p^\prime  .\label{gamma--1}
 \end{align} 
 We now find
\be 
 \Gamma^{-1}_-\sim -2(1-\epsilon)\Big[\tfrac{GM\mEL}{\hbar c}-\tfrac{Z e^2}{\hbar c}\Big]\tfrac1x \,\,\,\,\,\,\,\,\ \text{as} \,\, x \rightarrow  \infty.
\ee
 Thus, since $0 < \epsilon <1$, one has 
$\lim_{x \rightarrow \infty} x^2 \Gamma^{-1}_{-}  < - \frac{1}{4} $ if  ${GM\mEL}\!>\!{Z e^2}$, 
and then \eqref{zprpr} with $\Gamma^{-1}_-$ has an oscillatory solution near infinity, so $-1$ is a cluster point of the discrete spectrum.

 On the other hand,  since $0 < \epsilon <1$, one has  $\lim_{x \rightarrow \infty} x^2 \Gamma^{-1}_{-} >0\; (> -\frac{1}{4})$ if 
 ${GM \mEL}\!<\!{Z e^2}$.
 Therefore, the solution to \eqref{zprpr} with  $\Gamma^{-1}_{-}$ is non-oscillatory at $\infty$ if ${GM \mEL}\!<\!{Z e^2}$.
 To see that it is also non-oscillatory at $0$, we consider
\be
p \pm \nu:= p \pm \frac{1}{2} ( V_1 + V_2 +c_1 - c_2) = - \frac{\hbar k}{m_e c} \frac{f(r(x))}{r(x)} \pm \frac{e}{m_e c^2} \phi(r(x)) - 2 \epsilon .
\ee
 By Corollary~3.4 in \cite{HMRS},  \eqref{eq:secord} is non-oscillatory at $0$ if $ p \pm \nu \leq -1$ in one neighborhood of $0$. 
Recall that $\frac{f(r(x))}{r(x)} \sim x^{- \frac{3+ \alpha}{4 + 2 \alpha}}$, and $\phi(r(x)) \sim x^{-\frac{\beta}{ 2 + \alpha}}$ around zero,
with $\alpha\geq 0$ and $\beta \leq 1$. (Or if we allow also $\alpha <0$, then $\beta < 1 + \frac{\alpha}{2}$; cf. Remark~\ref{alphabeta}.) 
 Hence, $ p \pm \nu \rightarrow - \infty$, and so $-1$ is not an accumulation point of the discrete spectrum if ${GM \mEL}<{Z e^2}$.
 \end{proof} 

 Lemma~\ref{lem:accum} and the fact that the spectrum of $H_\theta^{}$ is the union of the spectra of the radial partial wave
Dirac operators now concludes our proof about the accumulation points of the discrete spectrum, for general $M$. 
 For known nuclei $M= M_{\mbox{\tiny{ADM}}}=A(Z,N)\mPR$, 
with $A\leq 3Z$, so $\frac{GM \mEL}{Ze^2} < 2\times 10^{-39}$, and so $-\mEL c^2$ is not an accumulation point of the
discrete spectrum of $H_\theta$ for any empirical hydrogenic ions. 
 The proof of our theorem is complete.
\end{proof}

\begin{rmk}\label{meL}
{We suspect that the boundary points of the essential spectrum, $-\mEL c^2$ and $\mEL c^2$, are generally not eigenvalues of 
 $H_\theta$, but we have not tried to prove it, and the answer may depend on $\theta$ and on the value of ${GM \mEL}/{Ze^2}$.}
\end{rmk}

\begin{rmk} 
 We proved with the actual empirical values of $\frac{GM \mEL}{Z e^2}$ for physical hydrogenic ions that
$\mEL c^2$ is a limit point of the discrete spectrum while $-\mEL c^2$ is not. 
 The appearance of $-\mEL c^2$ as a limit point of the discrete spectrum for {hypothetical hyper-heavy ion} 
values $\frac{GM \mEL}{Z e^2} >1$
can be explained in physics lingo if we recall that the negative continuum is usually interpreted as being associated with
positrons, which do not bind electrically to the positively charged nuclei, but which can be bound gravitationally
if the gravitational attraction to the nucleus overcomes the electrical repulsion. 
 Incidentally, the same  critical value $\frac{GM \mEL}{Z e^2} = 1$ features also in the 
non-relativistic {treatment}, where the Newtonian gravitational attraction between a positron and a 
nucleus overpowers their Coulomb repulsion if and only if $\frac{GM \mEL}{Z e^2} >1$, in which case the
Schr\"odinger Hamiltonian also has infinitely many bound states, while there are no bound states when $\frac{GM \mEL}{Z e^2} \leq 1$. 
 Our results do not reveal whether the general-relativistic Dirac problem in the critical case $\frac{GM \mEL}{Z e^2} = 1$ features any 
bound positron states; our results only show that there are none if $\frac{GM \mEL}{Z e^2} < 1$, and infinitely many if
$\frac{GM \mEL}{Z e^2} > 1$.
 \end{rmk}
 
\subsection{Test electron with anomalous magnetic moment}

In the special-relativistic problem of  hydrogenic ions at any $Z\in\Nset$ 
it was found long ago \cite{Beh,GST} that the addition of an anomalous magnetic moment operator to the Dirac Hamiltonian 
of a test electron in the Coulomb field of the point nucleus suffices to produce an essentially self-adjoint Dirac Hamiltonian.
 For a test electron in the RWN spacetime of a point nucleus it was found in \cite{BMB} that
a \emph{sufficiently large} anomalous magnetic moment of the electron is required to obtain an essentially 
self-adjoint Hamiltonian of the hydrogenic ions; it turns out that 
the empirical electron value is large enough \emph{uniformly} for all $Z\in\Nset$.

 This suggests that adding an anomalous magnetic moment operator to the Dirac Hamiltonian of a test electron {may}
restore essential self-adjointness {also} in all situations discussed here so far where essential self-adjointness fails, 
in particular for the Dirac Hamiltonian of a test electron in the Hoffmann spacetime of a point nucleus with negative bare mass.
 Interestingly, the situation is more complicated, as shown by our next theorem.

The radial partial-wave Dirac operator $H^{\text{rad}}_{\mu_a,k}= \mEL c^2 K^{\text{}}_{\mu_a,k}$ now is given by
\begin{align} 
K_{\mu_a,k}^{\text{}}
:= 
\left[\begin{array}{cc} f(r) - \frac{e}{\mEL c^2}\phi(r)  & 
\frac{\hbar} {\mEL c}\left[\frac{k}{r} f(r)-f^2(r)\partial_r \right] -\frac{\mu_a}{\mEL c^2}  \phi^{\prime}(r)f(r)\\
\frac{\hbar} {\mEL c}\left[\frac{k}{r} f(r) + f^2(r) \partial_r\right] 
-\frac{\mu_a}{\mEL c^2} \phi^{\prime}(r)f(r) &  - f(r) - \frac{e}{\mEL c^2} \phi(r)
	\end{array}\right].
\end{align}

\begin{thm} Let $f^2(r)$ be as in \eqref{fasmp} with $ \alpha \geq 0$ and 
$ \phi(r)= {C_\beta^{\prime\prime}+}\phiNULL r^{-\beta} + O_1( r^{ \frac{3}{2} - \beta} )$ for $\beta \leq 1$,
where $f \in O_k(g) $ indicates $ \frac{d^j}{dr^j} f = O( \frac{d^j}{dr^j} g) $ for $j=0,1,..k$. 
 Then the operator $H^{\text{rad}}_{\mu_a,k}$  is essentially self-adjoint  if either $\beta >  \frac{1+\alpha}{2} $, 
or $\beta =  \frac{1+\alpha}{2}$ and $| \mu_a | \geq  \frac{ 2 + \alpha}{1+  \alpha} \hbar \sqrt{2G\mNULL}/|\phiNULL| $. 
 On the other hand, if $\beta <  \frac{1+\alpha}{2} $, then $H_{\mu_a,k}^{\text{rad}}$ 
has multiple self-adjoint extension.
\end{thm}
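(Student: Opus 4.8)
The plan is to repeat the change of variables $r\mapsto x$ of \eqref{eq:cv} used for Theorem~\ref{th:self} and to reduce essential self-adjointness to a limit-point/limit-circle count at the endpoint $x=0$. Since $\tfrac{dr}{dx}=\tfrac{\hbar}{\mEL c}f^2(r(x))$ turns $\tfrac{\hbar}{\mEL c}f^2(r)\partial_r$ into $\partial_x$, the operator $H^{\text{rad}}_{\mu_a,k}$ is (up to the scalar $\mEL c^2$) unitarily equivalent to
\begin{align*}
\widetilde{K}_{\mu_a,k}=\begin{pmatrix} a(x)-b(x) & w(x)-\frac{d}{dx}\\[2pt] w(x)+\frac{d}{dx} & -a(x)-b(x)\end{pmatrix},\qquad w(x):=kc(x)-d(x),
\end{align*}
acting on $(L^2(\Rset_+))^2$, where $a,b,c$ are exactly as in the proof of Theorem~\ref{th:self} and the only new ingredient is the anomalous-moment term $d(x):=\tfrac{\mu_a}{\mEL c^2}\phi'(r(x))f(r(x))$. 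I would first dispatch the endpoint $x=\infty$: there $a-1,b,c,d$ all decay (indeed $d\sim x^{-2}$) and are of bounded variation, so the situation is the one handled by Lemma~\ref{acK}/Theorem~\ref{th:weid} and $\infty$ is of limit-point type, contributing nothing. Everything is therefore decided at $x=0$.

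Next I would record the $x\to0^+$ asymptotics. Using $r\sim C\,x^{1/(2+\alpha)}$, $f\sim(2G\mNULL/c^2)^{1/2}r^{-(1+\alpha)/2}$ and $\phi'\sim-\beta\phiNULL\, r^{-\beta-1}$, one finds
\begin{align*}
d(x)\sim D_0\,x^{-\gamma},\qquad \gamma:=\frac{2\beta+3+\alpha}{4+2\alpha},
\end{align*}
and an exponent count shows that in the regimes $\beta\ge\tfrac{1+\alpha}{2}$ relevant for essential self-adjointness the term $d$ is strictly more singular than each of $kc,a,b$, so the singular part of $w$ at $0$ is $-d\sim-D_0x^{-\gamma}$. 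As in \eqref{ga}--\eqref{gb} I would then solve the first-order system $\widetilde K^*_{\mu_a,k}g=\lambda g$ by integrating against $e^{\pm\mu(x)}$, now with $\mu(x)=\int^x w(y)\,dy$, and count $L^2$ solutions near $0$. The two homogeneous solutions behave like $e^{\mp\mu}$, so their square-integrability is decided entirely by whether $\mu(x)$ stays bounded or diverges as $x\to0$.

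This yields the trichotomy of the theorem according to the size of $\gamma$. If $\beta>\tfrac{1+\alpha}{2}$ then $\gamma>1$, $\mu(x)\to\pm\infty$, exactly one of $e^{\pm\mu}$ is square-integrable, and the operator is limit-point at $0$ (and at $\infty$), hence essentially self-adjoint. If $\beta<\tfrac{1+\alpha}{2}$ then $\gamma<1$, $\mu(x)$ has a finite limit, both solutions are bounded and hence $L^2$ near $0$: the limit-circle case, which (exactly as in Theorem~\ref{th:self}, via the boundary form $g_2(0)\overline{h_1(0)}-g_1(0)\overline{h_2(0)}$) gives deficiency indices $(1,1)$ and a one-parameter family of self-adjoint extensions. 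At the borderline $\beta=\tfrac{1+\alpha}{2}$ one has $\gamma=1$, $\mu(x)\sim-D_0\ln x$, the solutions behave like $x^{\pm D_0}$, and $\int_0 x^{\pm2D_0}\,dx$ is finite for both signs iff $|D_0|<\tfrac12$; thus the limit-point (essentially self-adjoint) case is exactly $|D_0|\ge\tfrac12$. It then remains to evaluate $D_0$: inserting the change-of-variables constant $r^{2+\alpha}\sim(2+\alpha)\tfrac{\hbar}{\mEL c}\tfrac{2G\mNULL}{c^2}\,x$ into $d(x)$ and simplifying gives $D_0=-\tfrac{\mu_a\phiNULL(1+\alpha)}{2\hbar(2+\alpha)\sqrt{2G\mNULL}}$, so $|D_0|\ge\tfrac12$ is precisely $|\mu_a|\ge\tfrac{2+\alpha}{1+\alpha}\hbar\sqrt{2G\mNULL}/|\phiNULL|$, the stated threshold.

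The hard part will be the critical case $\gamma=1$. Away from it the leading singular term strictly dominates, so the crude asymptotics $w\sim-D_0x^{-\gamma}$ already fix $L^2$-membership and the subleading $a,b,kc$ are harmless. At $\gamma=1$, however, a genuine inverse-first-power potential is marginal, and one must show the corrections do not push the exponents $x^{\pm D_0}$ across the $L^2$ threshold. This is exactly why the hypothesis strengthens Assumption~\ref{potASS} to $\phi(r)=C_\beta''+\phiNULL r^{-\beta}+O_1(r^{3/2-\beta})$: the remainder contributes to $d(x)$ a term of relative size $O(r^{3/2})=O(x^{3/(2(2+\alpha))})$, i.e.\ an integrable $x^{-1+\delta}$ ($\delta>0$) perturbation of the $1/x$ singularity. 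A Levinson-type asymptotic-integration argument (or a direct application of the oscillation/limit-point criteria of \cite{DS,HMRS} to the perturbed equation) then shows the solutions keep their $x^{\pm D_0}$ behavior up to such integrable corrections, making the count by $|D_0|$ rigorous and the threshold sharp.
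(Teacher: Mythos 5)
Your proposal is correct and follows essentially the same route as the paper: a Weyl limit-point/limit-circle analysis at the singular endpoint, in which the trichotomy is governed by whether the anomalous-moment term $\phi'(r)f(r)$ is subcritical, critical, or supercritical relative to a $1/x$ singularity (equivalently $\beta$ versus $\tfrac{1+\alpha}{2}$), together with the same evaluation of the critical coefficient that yields the threshold $|\mu_a|\geq\tfrac{2+\alpha}{1+\alpha}\hbar\sqrt{2G\mNULL}/|\phiNULL|$. The only organizational difference is that you carry out the entire analysis in the $x$ variable with the unweighted $L^2$ inner product \eqref{inner}, whereas the paper performs the limit-point computations for $\beta\geq\tfrac{1+\alpha}{2}$ in the $r$ variable with the weighted norm \eqref{fnorm}; the two are equivalent under the change of variables \eqref{eq:cv}, and your exponents $x^{\pm D_0}$ and integrability conditions transform exactly into the paper's.
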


\begin{proof} 
 Note that the fact that $H^{\text{rad}}_{\mu_a,k}$ has multiple self-adjoint extension if $\beta <  \frac{1+\alpha}{2} $ is
 a consequence of Theorem~\ref{th:self}.
 In particular, if the change of variable {as in}
 \eqref{eq:cv} is applied to $K^{\text{}}_{\mu_a,k}$, one obtains the operator 
\begin{align}\label{Kmu-x}
\widetilde{K}_{\mu_a,k}  = \widetilde{K}_{k} +
 \mu_a\left[\begin{array}{cc}  0 &   d(x) \\   d(x) & 0
	\end{array}\right],
\end{align}
where $ d(x) \sim {C}x^{-\frac{3 + \alpha + 2 \beta}{4 + 2 \alpha}}$ as $ x \rightarrow 0$, and 
 $d(x) \sim x^{-2}$ as $ x \rightarrow \infty$.
 Therefore, $g_1$ and $g_2$ in \eqref{ga}, \eqref{gb}
 arises with $\mu(x) = \int_0^{x}[kc(y) +\mu_a d(y)] dy$.
  Since,  $ \mu(x) \rightarrow 0 $ if $\beta <  \frac{1+\alpha}{2} $, the proof follows similar to the proof of Theorem~\ref{th:self}.
 Therefore, it remains to prove the assertions for $\beta \geq  \frac{1+\alpha}{2} $.

We start the proof with the case that $\beta >  \frac{1+\alpha}{2} $. 
We will show that the limit point case (LPC) is verified in the right neighborhood of $r = 0$ if $\beta >  \frac{1+\alpha}{2} $, 
 i.e. there is at least one non-square integrable solution to $K^{\text{}}_{\mu_a,k} g = \lambda g$ 
for each $\lambda \in \mathbb{C}$, or equivalently for a fixed $\lambda$, see \cite[Theorem~5.6]{WeiC}.
 In particular, we will consider the solutions to 
\begin{align} \label{0eigen}
&\Big[-\frac{e\phi(r)}{\mEL c^2 f^2(r)}+\frac{1}{f(r)} \Big] g_1 = 
\Big[\frac{\hbar }{\mEL c}\left[ \partial_r - \frac{k}{r  f(r)} \right] + \frac{\mu_a \phi^{\prime}(r)}{\mEL c^2 f(r)} \Big] g_2 , \\
&\quad\; \Big[ \frac{e\phi(r)}{\mEL c^2 f^2(r)}  + \frac{1}{f(r)}  \Big] g_2 = 
\Big[ \frac{\hbar }{\mEL c}\left[\partial_r + \frac{k}{r  f(r)}\right] -   \frac{\mu_a \phi^{\prime}(r)}{\mEL c^2 f(r)} \Big] g_1.
\label{00eigen}
\end{align} 
Recall that $g = (g_1, g_2)^T$ is square integrable in the right neighborhood of $r = 0$ with the inner product \eqref{fnorm} iff 
for each $ 0 <R < \infty$,
\begin{align} \label{sqint}
\int_{0}^{R} \frac{1}{f^2(r)}  \Big( |g_1(r)|^2 + |g_2(r)|^2 \Big) dr < \infty .
\end{align}
Therefore, we aim to find solutions to \eqref{0eigen}, \eqref{00eigen} such that \eqref{sqint} does not hold. 

Let $A(r) = - \frac{e \phi(r) }{\mEL c^2 f^2(r)} + \frac{1}{f(r)}$,
and use the ansatz $g_2(r) = e^{{h_2}(r)}$. 
 Then, by \eqref{0eigen} we have
\be
g_1=A^{-1}(r) \Big[\frac{\hbar }{\mEL c}\left[{h_2}^\prime - \frac{k}{r f(r)}\right]+\frac{\mu_a\phi^{\prime}(r)}{\mEL c^2 f(r)}\Big] g_2. 
\ee 
 Recall that $f^2(r) \sim 2\frac{G}{c^2}\mNULL r^{-1-\alpha}$ and 
$\phi(r) \sim C^{\prime\prime}_\beta + \phiNULL r^{-\beta}$ around zero with  $ \frac{1+\alpha }{2} < \beta \leq 1$.
 Therefore, plugging $g_1$ into \eqref{00eigen} we obtain the following asympototic expansion as $r \rightarrow 0$,
\begin{align}\hspace{-.5truecm}
 \Big[ \partial_r + (A^{-1} (r))^{\prime} A(r)\Big] 
\Big[ {h_2}^{\prime} - \frac{k}{r  f(r)} +  \frac{\mu_a \phi^{\prime}}{\hbar cf(r)} \Big] + ({h_2}^{\prime}  )^2 -
 \Big[ \frac{k}{r  f(r)} -  \frac{\mu_a \phi^{\prime}}{\hbar c f(r)} \Big]^2  
                                    = O(r^{ 1+\alpha  }).\hspace{-1truecm}
 \end{align} 
 Noting also that $ \beta > \alpha \geq 0$, we can find a solution so that 
\be
 {h_2} (r) \sim C  r^{ - \beta + \frac{1+\alpha}{2}}  \,\,\,\, \mbox{as} \quad r \rightarrow 0^{+} , 
\ee
or equivalently 
\be
  g_2(r)\sim e^{ C r^{ - \beta + \frac{1+\alpha}{2}}} \,\,\,\, \mbox{as}  \quad r \rightarrow 0^{+} .
\ee
In a similar way, one can show that 
\be
 {h_1} (r) \sim -C  r^{ - \beta + \frac{1+\alpha}{2}} \Rightarrow  g_1(r)\sim
 e^{ -C  r^{ - \beta + \frac{1+\alpha}{2}}} \,\,\,\, \mbox{as}\quad r \rightarrow 0^{+}.
\ee
It is now clear that since $\beta >  \frac{1+\alpha}{2} $, \eqref{sqint} does not hold for $g = (g_1, g_2)^T$, 
and the LPC is satisfied in the right neighborhood of zero. 

 Finally, we consider the case $\beta =  \frac{1+\alpha}{2} $. 
Recall that, since $ \alpha \geq 0$,  we have  $f^2(r) \sim 2\frac{G}{c^2} \mNULL r^{-1-\alpha}$ as  $ r \rightarrow 0^{+}$.
 Therefore, equations \eqref{0eigen}, \eqref{00eigen} around zero become
\begin{align} 
& g_2^{\prime} - \frac{ (1+\alpha) \mu_a \phiNULL}{ 2 \sqrt{2 G\mNULL} \hbar r} - \frac{k c} { \sqrt{2 G\mNULL} r^{\frac{1-\alpha}{2}}} = O(r^{1/2}), \\
& g_1^{\prime} + \frac{ (1+\alpha) \mu_a \phiNULL}{ 2  \sqrt{2 G\mNULL} \hbar  r} + \frac{k c} { \sqrt{2 G\mNULL}r^{\frac{1-\alpha}{2}}} = O(r^{1/2}).
\end{align} 
Notice that if $\mu_a \phiNULL >0$ then {in a right neighborhood of $r = 0$ we have}
\be
g_1 \sim r^{ - \frac{(1+ \alpha) \mu_a \phiNULL}{ 2  \sqrt{2 G\mNULL}\hbar}} {;}
\ee
and if  $\mu_a \phiNULL  < 0$ then {in a right neighborhood of $r = 0$ we have}
\be
g_2 \sim r^{\frac{ (1+ \alpha) \mu_a \phiNULL}{ 2  \sqrt{2 G\mNULL}\hbar}}  {.}
\ee 
 Note that  \eqref{sqint} implies that local square integrability holds for $g_1$ and $g_2$ if
\be 
\int_0^{R} r^{ \pm \frac{ (1 + \alpha) \mu_a \phiNULL}{   \sqrt{2 G\mNULL}\hbar} + 1+\alpha } dr < \infty .
\ee

Hence, the LPC is satisfied in the right neighborhood of zero if 
\be
{ - \Big| \frac{ (1+ \alpha) \mu_a \phiNULL}{ \sqrt{2G \mNULL}\hbar} \Big| + 1 + \alpha \leq -1 .}
\ee

\vspace{-1truecm}\end{proof} 

 Our last theorem states that an anomalous magnetic moment can only regularize the Dirac operator for a 
test electron in the static spherically symmetric spacetime of a point nucleus with negative bare mass if
the electric field of the nucleus diverges sufficiently fast at $r\searrow 0$ to overcome the effect of the
spacetime singularity due to the negative bare mass.

Thus, somewhat unexpectedly (to us at least), the Dirac operator for a test electron in the Hoffmann 
spacetime of a point nucleus is essentially self-adjoint \emph{if and only if} the bare mass of the nucleus vanishes. 
 No anomalous magnetic moment can come to the rescue if the bare mass of the nucleus is strictly negative.

 We end this section with the analogues of Theorems~\ref{th:esspec} and \ref{thm:eigenvalues} for test electrons with anomalous magnetic moment.
 By $H_{\mu_a,\theta}$ we denote any self-adjoint extension of $H_{\mu_a}$, where it is understood that the subscript $\theta$ is mute
in all cases where $H_{\mu_a}$ is essentially self-adjoint.

 For the essential spectrum we have:
\begin{thm}\label{th:esspecMU}
 Suppose Assumptions~\ref{massASS} and \ref{potASS} hold, and furthermore assume that 
$ \phi(r)= {C_\beta^{\prime\prime}+}\phiNULL r^{-\beta} + O_1( r^{ \frac{3}{2} - \beta} )$  for $ \beta \leq 1$ around zero. 
 Then one has 
\begin{itemize}
\item[(a)]
  The essential spectrum $\sigma_{ess}(H_{\mu_a,\theta}) = (-\infty,-\mEL c^2]\cup [\mEL c^2, \infty)$;
  \item[(b)] $ H_{\mu_a,\theta}$ has purely absolutely continuous spectrum in $(-\infty,-\mEL c^2 ) \cup (\mEL c^2, \infty)$;
\item[(c)]  the singular continuous spectrum $\sigma_{sc} (H_{\mu_a,\theta})=  \emptyset$.

\end{itemize}
\end{thm}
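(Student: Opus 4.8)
The plan is to reprove the three assertions by adapting, in order, the arguments behind Lemma~\ref{acK}, Lemma~\ref{essK}, and Corollary~\ref{cor:sc} to the operator carrying the anomalous moment. After the change of variables \eqref{eq:cv} the relevant operator is $\widetilde{K}_{\mu_a,k}$ from \eqref{Kmu-x}, namely $\widetilde{K}_k$ plus the off-diagonal term $\mu_a d(x)$ with $d(x)\sim x^{-2}$ as $x\to\infty$ and $d(x)\sim C x^{-(3+\alpha+2\beta)/(4+2\alpha)}$ as $x\to 0$. I would dispatch part (b) first, essentially verbatim from Lemma~\ref{acK}. Writing $\widetilde{K}_{\mu_a,k;\theta}$ in Weidmann's form $\tau = J\frac{d}{dx} + P_1 + P_2$, the only change relative to \eqref{def:p2} is that the off-diagonal entry of $P_2$ acquires the summand $-\frac{\mu_a}{\mEL c^2}\phi'(r(x))f(r(x))$, which behaves like $x^{-2}$ at infinity by the extra hypothesis $\phi=C''_\beta+\phiNULL r^{-\beta}+O_1(r^{3/2-\beta})$. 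This term is $C^1$, hence of bounded variation on $[c,\infty)$, and tends to $0$, so the limit \eqref{lim0} is unchanged, $\lim_{x\to\infty}P_2=\mathrm{diag}(1,-1)$ (equivalently one may absorb it into $P_1$, since $x^{-2}\in L^1(c,\infty)$). Theorem~\ref{th:weid} then yields purely absolutely continuous spectrum of $\widetilde{K}_{\mu_a,k;\theta}$, and hence of $H_{\mu_a,\theta}$, in $(-\infty,-\mEL c^2)\cup(\mEL c^2,\infty)$, proving (b).

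For part (a), the inclusion $(-\infty,-\mEL c^2]\cup[\mEL c^2,\infty)\subseteq\sigma_{ess}(H_{\mu_a,\theta})$ is immediate from (b) by taking closures. The reverse inclusion, i.e. emptiness of essential spectrum in the gap, is the heart of the matter, and here the proof of Lemma~\ref{essK} does \emph{not} transfer: near $r=0$ the anomalous term fails to be locally $L^2$ once $\beta\ge\frac{1+\alpha}{2}$, and its antiderivative diverges, so neither Lemma~\ref{cplem} nor the conjugation by $S$ in \eqref{S} applies at the inner endpoint. Instead I would invoke the decomposition principle: split $(0,\infty)$ at an interior point $x=1$, imposing an arbitrary self-adjoint boundary condition there, so that $\sigma_{ess}=\sigma_{ess}(A_{(0,1)})\cup\sigma_{ess}(A_{(1,\infty)})$. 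On $(1,\infty)$ every entry of the perturbation, including $\mu_a d\sim x^{-2}$, is locally $L^2$, bounded near $x=1$, and vanishes at infinity; hence, by Lemma~\ref{cplem} together with the remark following it, it is a relatively compact perturbation of the free operator, and $\sigma_{ess}(A_{(1,\infty)})=(-\infty,-1]\cup[1,\infty)$ in the units of $K_{\mu_a,k}$.

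The crux is therefore to show $\sigma_{ess}(A_{(0,1)})=\emptyset$, i.e. that any self-adjoint realization on the bounded interval with singular endpoint $0$ has compact resolvent. I would establish this by passing to the squared operator (legitimate since $\sigma_{ess}(A)=\emptyset\iff\sigma_{ess}(A^2)=\emptyset$), which is Schr\"odinger-type $-\frac{d^2}{dx^2}+Q(x)$ with $Q$ governed near $0$ by $W^2\mp W'$, where $W\sim C x^{-\gamma}$, $\gamma:=\frac{3+\alpha+2\beta}{4+2\alpha}$, is the dominant off-diagonal coefficient (diagonalized by a constant rotation, with the subdominant $a$ and $b$ terms of lower order). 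Three regimes appear, matching the trichotomy of the preceding self-adjointness theorem: if $\beta>\frac{1+\alpha}{2}$ then $\gamma>1$, $W^2\to+\infty$ dominates $W'$, and $Q\to+\infty$ forces discrete spectrum; if $\beta=\frac{1+\alpha}{2}$ then $\gamma=1$ and $Q\sim\nu x^{-2}$ with $\nu=C^2\mp C\ge-\frac14$, an inverse-square singularity which on a bounded interval still yields discrete spectrum; and if $\beta<\frac{1+\alpha}{2}$ the endpoint $0$ is in the limit-circle case, as in the multiple-extension situation of Theorem~\ref{th:self}, so the problem is quasi-regular with compact resolvent. In every case $\sigma_{ess}(A_{(0,1)})=\emptyset$, and the union with $\sigma_{ess}(A_{(1,\infty)})$ gives (a).

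Finally, part (c) follows exactly as in Corollary~\ref{cor:sc}: $\sigma_{sc}(H_{\mu_a,\theta})\subseteq\sigma_{ess}(H_{\mu_a,\theta})$, but by (b) the interior of the essential spectrum is purely absolutely continuous, so $\sigma_{sc}$ could only be supported on the two-point set $\{-\mEL c^2,\mEL c^2\}$, which is impossible for a singular-continuous part. The step I expect to be the main obstacle is the reverse inclusion in (a), and within it the borderline case $\beta=\frac{1+\alpha}{2}$: there the contribution of the inner endpoint genuinely matters (unlike for (b), which sees only infinity), and one must track the inverse-square coefficient $\nu=C^2\mp C$ against the threshold $-\frac14$ carefully, rather than relying on the relative-compactness machinery that sufficed in the anomalous-moment-free Lemma~\ref{essK}.
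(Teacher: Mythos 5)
Your proposal is correct in substance and shares the paper's skeleton --- part (b) via Weidmann's theorem exactly as in Lemma~\ref{acK} (using that the anomalous term $\mu_a d(x)\sim x^{-2}$ vanishes at infinity, so \eqref{lim0} survives), part (c) as a consequence of (a) and (b), and part (a) via the decomposition principle --- but the crucial step at the singular inner endpoint is done by a genuinely different argument. Where you square the operator and run scalar oscillation theory on the resulting Schr\"odinger-type expression (tracking $W^2\mp W'$ with $W\sim Cx^{-\gamma}$, $\gamma=\frac{3+\alpha+2\beta}{4+2\alpha}$, and checking the inverse-square coefficient $C^2\mp C\geq -\frac14$ in the borderline case $\beta=\frac{1+\alpha}{2}$), the paper simply quotes Theorem~2 of \cite{HS}: in the limit point case, the restriction $\widetilde{K}^{[0,b]}_{\mu_a,k}$ has purely discrete spectrum provided $\int_0^{b}|kc(x)+\mu_a d(x)|\,dx=\infty$, which holds for $\beta\geq\frac{1+\alpha}{2}$ precisely because $d$ then fails to be locally integrable at $0$. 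The citation is a one-liner; your route is more self-contained but leaves nontrivial details implicit: the formal square of $J\frac{d}{dx}+P$ decouples into two scalar Schr\"odinger operators only when $P$ anticommutes with $J$, and here the electric-potential part $-b(x)I$ of $P$ produces a first-order cross term proportional to $b(x)J\frac{d}{dx}$ that must be gauged away or estimated before the scalar analysis applies; one should also note explicitly that $A^2_{(0,1)}$ extends the minimal operator of the formal square, so its essential spectrum may be computed from any self-adjoint realization. Your treatment of $\beta<\frac{1+\alpha}{2}$ (decomposition plus limit-circle discreteness on $(0,1)$) also differs from the paper, which handles that regime by the whole-line relative-compactness argument of Lemma~\ref{essK}; both work, since there $\mu(x)=\int_0^x[kc(y)+\mu_a d(y)]dy$ still converges.

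One remark in your write-up is not correct as stated: the inclusion $(-\infty,-\mEL c^2]\cup[\mEL c^2,\infty)\subseteq\sigma_{ess}(H_{\mu_a,\theta})$ is not ``immediate from (b) by taking closures.'' Purely absolutely continuous spectrum in a region --- which is all that Theorem~\ref{th:weid} asserts --- does not imply that the region is contained in the spectrum. The paper obtains the inclusion from an explicit Weyl sequence (the shifted sequence based at $x=b$). This lapse is harmless for your overall proof, because your decomposition argument independently yields $\sigma_{ess}(A_{(1,\infty)})=(-\infty,-1]\cup[1,\infty)$ via relative compactness of the perturbation against the free half-line Dirac operator; but note that the essential spectrum of that free operator is itself established by exactly the Weyl-sequence plus squared-operator computation carried out in Lemma~\ref{essK}, so that ingredient cannot be dispensed with.
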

\begin{proof}
 We use the representation \eqref{Kmu-x} to validate the claims.
 First of all, note that the proof of $(b)$ follows similarly to the proof of Lemma~\ref{acK}. In particular, we need to consider the operator 
\begin{align}
\tilde{P}_2 := P_2(x) + \mu_a d(x) \sigma_1	
\end{align}
instead of $P_2$ in \eqref{def:p2}, and validate the limit property \eqref{lim0} for $\tilde{P}_2$. Here, 
$\sigma_1 = \left[\scriptsize{\begin{array}{cc}  0 &   1 \\   1 & 0	\end{array}}\right]$. 
However, $\lim_{x \rightarrow \infty} d(x)=0$, and hence part $(b)$ of the statement holds. 
It is also clear that the claim of part $(c)$ follows from part $(a)$ and part $(b)$. 
 Therefore it remains to prove part~$(a)$. 

For this part, we need to analyze the operator \eqref{Kmu-x} separately for $ \beta < \frac{\alpha+1}{2}$ and $\beta> \frac{\alpha+1}{2}$.
 If $\beta< \frac{\alpha+1}{2}$, i.e. if the operator \eqref{Kmu-x} has multiple self-adjoint extensions, then the proof of Lemma~\ref{essK} 
is directly applicable.
 In particular, writing 
\begin{align} 
 \widetilde{K}_{\mu_a,k} = \widetilde{K}^0 + \mu_{a}d(x) \sigma_1 + V(x) 
\end{align}
one can show that the functions defined in \eqref{weyl} form a Weyl sequence also for $\tilde{K}^0 + \mu_{a} d(x) \sigma_1$. 
Furthermore, the operator $S$ in \eqref{S} is bounded and therefore, $V$ is  $\widetilde{K}^0 + \mu_{a}d(x) \sigma_1$ compact. 

For the case that $\beta \geq \frac{1 +\alpha}{2}$, we define the operators  $\widetilde{K}^{[0,b]}_{\mu_a,k}$ and
 $\widetilde{K}^{[b, \infty)}_{\mu_a,k}$ as the restriction of $\widetilde{K}_{\mu_a,k}$ to $L^2([0,b])$ and $ L^2([b,\infty])$ respectively. 
Then by Theorem~11.5 in \cite{WeiC}, we have 
\be 
\sigma_{ess} ( \widetilde{K}_{\mu_a,k} ) =   \sigma_{ess} ( \widetilde{K}^{[0,b]}_{\mu_a,k}) \cup  \sigma_{ess} ( \widetilde{K}^{[b, \infty)}_{\mu_a,k}) .
\ee
 Instead of \eqref{weyl} we now use the following Weyl sequence,
\be 
f_{n, \lambda} (x) = \frac{1}{\sqrt{2 n}} e^{- \frac{(x-b)}{2n} + i (x-b) \sqrt{ \lambda^2 - 1}} 
\left[\begin{array}{c} \sqrt{1 + \frac{1}{ \lambda } } \\ i \sqrt{1 - \frac{1}{ \lambda } }
	\end{array}\right];\quad n\in\Nset.
\ee
 Then $\xi(x) =\int_b^{x} [kc(y) + \mu_a d(y)] dy$ for $ b \leq x \leq b+1$ in \eqref{S}, and one can show that 
 $\sigma_{ess} ( \widetilde{K}^{[b, \infty)}_{\mu_a,k})= (-\infty,-1]\cup [1,\infty)$ in a similar way as in the proof of Lemma~\ref{acK}. 

 On the other hand, the operator $\widetilde{K}^{[0,b]}_{\mu_a,k}$ can only have discrete spectrum. 
 To see that, we use Theorem~2 in \cite{HS}. 
In particular, since the limit point case holds, $\widetilde{K}^{[0,b]}_{\mu_a,k}$ has discrete spectrum if also 
\be
 \int_0^{b} |k c(x) + \mu_a d(x)| dx = \infty. 
\ee
 Note that the above statement is true since for $\beta \geq \frac{1 +\alpha}{2}$, $d(x)$ is not locally integrable around zero. 
 \end{proof} 

For the discrete spectrum we have:
\begin{thm} \label{thm:eigenvaluesMU} Let Assumptions~\ref{massASS} and \ref{potASS} be valid.
 Let also $ \phi(r)= {C_\beta^{\prime\prime}+}\phiNULL r^{-\beta} + O_1( r^{ \frac{3}{2} - \beta} )$ for $\beta \leq 1$ around zero.
 Then eigenvalues of any self-adjoint extension $H_{\mu_a,\theta}$ of $H_{\mu_a}$ form a countably infinite set located in the gap of 
the essential spectrum. 
The set of accumulation points of this discrete spectrum $\sigma_{disc}(H_{\mu_a,\theta}^{})$ is either $\{\mEL c^2\}$ or
$\{-\mEL c^2,\mEL c^2\}$,
depending on whether ${GM \mEL}<{Z e^2}$ or ${GM \mEL}>{Z e^2}$, respectively.
 In particular, for {the empirically known} hydrogenic ion parameters, only $\mEL c^2$ is an 
accumulation point of the discrete spectrum if $\beta \neq \frac{\alpha+1}{2}$, or if $\beta = \frac{\alpha+1}{2}$ and 
$\frac{1+\alpha}{2+\alpha}\left| \frac{ \mu_a C^{\prime}_\beta }{ \hbar \sqrt{ 2 G C_\alpha}}\right| \neq 1$. 
\end{thm}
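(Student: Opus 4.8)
The plan is to apply the oscillation criterion of Hinton et al.\ (Theorem~\ref{thm:Hinton}) to the operator $\widetilde{K}_{\mu_a,k}$ exactly as in the proofs of Theorem~\ref{thm:eigenvalues} and Lemma~\ref{lem:accum}, treating the anomalous moment as the perturbation $\mu_a d(x)\sigma_1$ recorded in \eqref{Kmu-x}, where $\sigma_1$ is the off-diagonal Pauli matrix. The first observation is that $d(x)\sim x^{-2}$ as $x\to\infty$, so the anomalous term is subdominant to the $O(1/x)$ potentials at infinity; hence the large-$x$ asymptotics \eqref{gammaasymp} of $\Gamma_\pm$, and likewise those of the shifted $\Gamma^{\pm 1}_\pm$ from Lemma~\ref{lem:accum}, are literally unchanged. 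Consequently the ``$+$'' equation is oscillatory at infinity, $\lim_{x\to\infty}x^2\Gamma_+<-\tfrac14$, so $\widetilde{K}_{\mu_a,k}$ has infinitely many eigenvalues in $(-1,1)$, and the direct sum over $k$ yields the countably infinite discrete spectrum of $H_{\mu_a,\theta}$ in the gap. The same reasoning shows $+1$ is always an accumulation point, and that \emph{from the behavior at infinity} $-1$ is a cluster point precisely when $GM\mEL>Ze^2$. This reproduces the stated dichotomy for general $M$, with the anomalous moment playing no role at infinity.

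The genuinely new input concerns the endpoint $r=0$ (equivalently $x=0$), which decides, in the empirical regime $GM\mEL<Ze^2$, whether $-1$ is an accumulation point: since the equation is then non-oscillatory at infinity, $-1$ fails to be a cluster point iff \eqref{zprpr} (with $\Gamma^{-1}_-$) is also non-oscillatory at $0$. I would record the order of the anomalous coefficient there, $d(x)\sim C\,x^{-\frac{3+\alpha+2\beta}{4+2\alpha}}$, and note that this exponent equals $-1$ exactly when $\beta=\frac{1+\alpha}{2}$, which splits the analysis into three cases. For $\beta>\frac{1+\alpha}{2}$ the term $\mu_a d(x)$ dominates $1/x$, so the relevant diagonal entry of the Hinton matrix blows up faster than $1/x$ and its square forces $x^2\Gamma^{-1}_-\to+\infty$, giving non-oscillation at $0$. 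For $\beta<\frac{1+\alpha}{2}$ the anomalous term is subdominant to the terms already present in Lemma~\ref{lem:accum}, so the non-oscillation argument there (via $p\pm\nu\to-\infty$ and Corollary~3.4 of \cite{HMRS}) carries over verbatim. In both cases $-1$ is not an accumulation point, consistent with the claim for $\beta\neq\frac{1+\alpha}{2}$.

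The delicate case is $\beta=\frac{1+\alpha}{2}$, where $\mu_a d(x)\sim c_\mu/x$ is exactly critical. Here I would insert the leading coefficient obtained from $d(x)=-\tfrac{1}{\mEL c^2}\phi'(r(x))f(r(x))$, using $\phi'\sim-\beta\phiNULL r^{-\beta-1}$, $f\sim \sqrt{2G\mNULL}\,c^{-1}r^{-(1+\alpha)/2}$, and the Jacobian $r\sim C x^{1/(2+\alpha)}$, to pin down $c_\mu$. The point is that near $0$ the relevant $\Gamma^{-1}_-$ then behaves like $(c_\mu^2\mp c_\mu)\,x^{-2}=\big((c_\mu\mp\tfrac12)^2-\tfrac14\big)x^{-2}$, so $\lim_{x\to0}x^2\Gamma^{-1}_-\geq-\tfrac14$ with equality iff $|c_\mu|=\tfrac12$. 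Thus \eqref{zprpr} is non-oscillatory at $0$ — and $-1$ is not an accumulation point — precisely away from the borderline $|c_\mu|=\tfrac12$, which I expect to translate into $\frac{1+\alpha}{2+\alpha}\big|\tfrac{\mu_a\phiNULL}{\hbar\sqrt{2G\mNULL}}\big|=1$, matching the self-adjointness threshold of the preceding theorem.

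The step I expect to be the main obstacle is exactly this bookkeeping at the critical exponent: identifying $c_\mu$ with all its constants and the decisive factor $\tfrac12$, and confirming that the non-oscillation criterion of \cite{HMRS} reduces, as it does for the constant-coefficient $x^{-2}$ model, to the scalar comparison $(c_\mu\mp\tfrac12)^2\gtrless 0$. The sign ambiguities in $\mu_a\phiNULL$ and in $k$ should drop out because only $|c_\mu|$ enters the Kneser-type comparison, which is why the final condition is stated with an absolute value and why the threshold coincides with the borderline $|\mu_a|=\frac{2+\alpha}{1+\alpha}\hbar\sqrt{2G\mNULL}/|\phiNULL|$ of the essential self-adjointness theorem.
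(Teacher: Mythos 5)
Your strategy is the paper's own: apply Theorem~\ref{thm:Hinton} with $p$ replaced by $\tilde p = p + \mu_a d$, note that $d(x)\sim x^{-2}$ at infinity leaves the large-$x$ asymptotics \eqref{gammaasymp} (and those of $\Gamma^{\pm 1}_\pm$ from Lemma~\ref{lem:accum}) unchanged --- which yields the infinite discrete spectrum in the gap, the cluster point $+\mEL c^2$, and the dichotomy in $GM\mEL$ versus $Ze^2$ --- and then settle non-oscillation at $x=0$ (in the regime $GM\mEL<Ze^2$) by comparing $\beta$ with $\frac{1+\alpha}{2}$. Your handling of $\beta>\frac{1+\alpha}{2}$ (the positive term $\tilde p^{\,2}$ dominates and forces $x^2\Gamma^{-1}_\pm\to+\infty$) and of the critical case $\beta=\frac{1+\alpha}{2}$ (coefficient $c_\mu^2\mp c_\mu=(c_\mu\mp\tfrac12)^2-\tfrac14$ of $x^{-2}$, non-oscillation away from $|c_\mu|=\tfrac12$, which translates into $\frac{1+\alpha}{2+\alpha}\bigl|\mu_a\phiNULL/(\hbar\sqrt{2G\mNULL})\bigr|\neq 1$) both match the paper's proof, modulo the constant-tracking you explicitly defer.

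The genuine gap is in the subcritical case. For $0<\beta<\frac{1+\alpha}{2}$ your claim that ``the anomalous term is subdominant to the terms already present in Lemma~\ref{lem:accum}'' is false: near zero $\mu_a d(x)\sim C x^{-\frac{3+\alpha+2\beta}{4+2\alpha}}$ while $p(x)\sim C x^{-\frac{3+\alpha}{4+2\alpha}}$, so the anomalous term is \emph{more} singular than $p$ as soon as $\beta>0$; it is subdominant only for $\beta\leq 0$ (which is the only range where the paper reuses the old argument). Consequently the Lemma~\ref{lem:accum} argument cannot ``carry over verbatim'': the quantity $\tilde p\pm\nu$ is now dominated by $\mu_a d$, whose sign is that of $\mu_a\phiNULL$, so it may tend to $+\infty$ rather than $-\infty$, and the one-sided hypothesis $p\pm\nu\leq -1$ of Corollary~3.4 in \cite{HMRS} need not hold. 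The repair is the paper's actual argument for this range: when $0<\beta<\frac{1+\alpha}{2}$ the most singular contribution to $\Gamma^{-1}_\pm$ comes from $\tilde p^{\,\prime}$ (not from $\tilde p^{\,2}$, since $\tilde p^{\,\prime}$ wins exactly when $2\beta<1+\alpha$), with exponent $\frac{2\beta+3\alpha+7}{4+2\alpha}<2$; hence $\lim_{x\to 0}x^2\Gamma^{-1}_\pm=0>-\tfrac14$, and non-oscillation at $0$ follows from the same Kneser-type comparison you already use in the other two cases. With that one case repaired, your proposal coincides with the paper's proof.
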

\begin{proof}
For the proof of the fact that the eigenvalues form a countably infinite set located in the gap, we recall the proof of Theorem~\ref{thm:eigenvalues}.
 In particular, we need to apply Theorem~\ref{thm:Hinton} with the same $V_1$ and $V_2$, but $p(x)$ is exchanged with 
\be
 \tilde{p}(x) = - \frac{\hbar}{\mEL c} \frac{k}{r(x)} f(r(x))+ \frac{\mu_a}{\mEL c^2}  \phi^{\prime}(r(x))f(r(x)) = p(x) + \mu_a d(x) .
\ee
 Note that $d(x)$ vanishes faster than  $p(x)$ at infinity. 
 Therefore, the behavior of $\Gamma_{+}$, see \eqref{gammaasymp}, around infinity 
remains the same and, the proof now follows similar to the proof of Theorem~\ref{thm:eigenvalues}. 

 To prove the claim on accumulation points, we follow a similar method as in the proof of Lemma~\ref{lem:accum}.
 In  Lemma~\ref{lem:accum}, note that $1$ is  accumulation point because equation \eqref{zprpr} with $\Gamma^1_+$ has 
oscillatory solutions at $\infty$ or, equivalently, $ \lim_{x \rightarrow \infty} x^2  \Gamma^1_+(x) < - \frac{1}{4}$.
 However, since $\tilde{p}(x) \sim p(x)$ at infinity, the behavior of  $\Gamma^1_+(x)$ remains the same and, $ \mEL c^2 $ is an accumulation point. 

 Next, we prove the statement about $ - \mEL c^2 $. To do that, we have to consider the equation \eqref{zprpr} 
with $\Gamma^{-1}_{\pm}$, where  $p(x)$ is replaced by $\tilde{p}(x)$; see \eqref{gamma+-1}, \eqref{gamma--1}. 
  Again since $\tilde{p}(x) \sim p(x)$ at infinity, the solutions to \eqref{zprpr} with $\Gamma^{-1}_{\pm}$ are 
non-oscillatory if ${GM\mEL}<{Ze^2}$, and oscillatory if ${GM\mEL}>{Ze^2}$.
Thus, when ${GM \mEL}>{Z e^2}$, then $ -\mEL c^2 $ is an accumulation point.

 Now we need to determine if the solutions to \eqref{zprpr} with $\Gamma^{-1}_\pm$, and $p(x)$ replaced by $\tilde{p}(x)$, 
are non-oscillatory also around zero when  ${GM\mEL}<{Ze^2}$.
 This part of the proof requires more care since the behavior of $\Gamma^{-1}_\pm$ around zero is affected when $p(x)$ is replaced by $\tilde{p}(x)$.  
Note that if $\beta \leq 0$, then $p(x)$ is more singular than $d(x)$ as $x \rightarrow 0$. 
Therefore,  Corollary~3.14 in \cite{HMRS} is applicable as in the proof of Theorem~\ref{thm:eigenvalues} if $\beta \leq 0$.  
On the other hand if $\beta > 0$, then the most singular term in both $\Gamma^{-1}_\pm$ arises from $\tilde{p}^2$ with singularity 
$x^{-\frac{4 \beta  +2 \alpha +6}{ 4 + 2 \alpha}}$ if $ \beta > \frac{1+\alpha}{2}$;
 and from $\tilde{p}^{\prime}$ with singularity $x^{-\frac{2 \beta + 3 \alpha +7}{ 4 + 2 \alpha}}$ if $ \beta < \frac{1+\alpha}{2}$. 
In particular, if $ \beta > \frac{1+\alpha}{2}$ the singularity comes from the term $(\phi^{\prime} f )^2$, and if $ \beta < \frac{1+\alpha}{2}$
then the singularity comes from $(\phi^{\prime} f )^{\prime}$.
 Noting that $p^2$ has $+$ sign in both $\Gamma^{-1}_\pm$, we see that if
 $ \beta > \frac{1+\alpha}{2}$ then $ \lim_{x \rightarrow 0} x^2 \Gamma^{-1}_\pm \geq 0$.
  On the other hand,  if $ \beta < \frac{1+\alpha}{2}$ then $ {\frac{2 \beta + 3 \alpha +7}{ 4 + 2 \alpha}} <2$,
 and hence $ \lim_{x \rightarrow 0} x^2 \Gamma^{-1}_\pm =0$.
 Therefore, the solutions are non-oscillatory if  $ \beta \neq \frac{1 + \alpha}{2} $.

  To determine, the behavior in the case of the equality we need to track the exact coefficient of the term $x^{-2}$.
 We determine this coefficient as 
\begin{align}
\Big(\frac{ (1+\alpha)\mu_a  C^{\prime}_\beta }{ 2(2 + \alpha)\hbar \sqrt{ 2 G C_\alpha}} \Big)^2 \pm
 \frac{ (1+\alpha)\mu_a C^{\prime}_\beta }{ 2(2 + \alpha)\hbar \sqrt{ 2 G C_\alpha}}
\end{align} 
Hence, $ \lim_{x \rightarrow 0} x^2 \Gamma^{-1}_- > -1/4 $ as long as 
 $\Big| \frac{1+\alpha}{2 + \alpha}\frac{\mu_a C^{\prime}_\beta }{\hbar \sqrt{ 2 G C_\alpha}}\Big| \neq 1$. 

 So for $\beta \neq \frac{1+\alpha}{2} $ then $ -\mEL c^2 $ is not an accumulation point when ${GM\mEL}<{Z e^2}$. 
 On the other hand, if $ \beta = \frac{1+\alpha}{2}$ then $ -\mEL c^2$ is not accumulation point if 
$\frac{1+\alpha}{2+\alpha}\Big| \frac{\mu_aC^{\prime}_\beta }{\hbar \sqrt{ 2 G C_\alpha}}\Big| \neq 1$ holds together with ${GM\mEL}<{Ze^2}$. 
\end{proof}

\section{Summary and outlook}\label{sec:SUMM}

 We have discussed the Dirac operator for a test electron in the static spherically symmetric
spacetime of a point nucleus with negative bare mass, allowing for a large class of electromagnetic vacuum laws 
compatible with the form of the spacetime metric given in (\ref{dsSQR}). 
 We have considered test electrons without and with an anomalous magnetic moment. 
Our findings demonstrate that the theory of the Dirac operator of a test electron in
even this simple class of spherically symmetric electrostatic spacetimes is rich and full of surprises!

 Different from the essentially self-adjoint situation which prevails when the bare mass of the nucleus vanishes, 
which was considered in \cite{Moulik}, the Dirac operator is never essentially self-adjoint when the bare mass of the
nucleus is strictly negative --- unless the test electron features an anomalous magnetic moment. 
 Even then, essential self-adjointness holds only if the electric field of the nucleus diverges sufficiently fast
at the nucleus, which is not the case for a large subset {of the} electromagnetic vacuum laws considered. 
 In particular, it is not the case for the Born--Infeld vacuum law. 
 Furthermore, on spacetimes of nuclei with (possibly infinite) negative bare mass and sufficiently rapidly diverging 
electric field, if the electric field diverges precisely at the critical rate then the anomalous magnetic
moment has to be sufficiently strong to guarantee essential self-adjointness. 
 In the special case of the Reissner--Weyl--Nordstr\"om spacetime of a point nucleus our formula for the 
critical value of the electron's anomalous magnetic moment coincides with the one found previously in \cite{BMB}.

 For all self-adjoint extensions of our Dirac operators we identified the essential spectrum with the usual gap
{$(-\mEL c^2, \mEL c^2)$} and showed that the gap contains infinitely many eigenvalues.
 When $\frac{GM\mEL}{Ze^2}<1$ there is generally only one family of eigenvalues, with $\mEL c^2$ as cluster point. 
 Yet when $\frac{GM\mEL}{Ze^2}>1$ there are two families of eigenvalues, one with $\mEL c^2$ as cluster point,
and another one with $-\mEL c^2$ as cluster point. 

 However, the hyper-heavy nucleus regime $\frac{GM\mEL}{Ze^2}>1$ is not realized in nature if
the empirical formula for the nuclear masses, $M=A(Z,N)\mPR$ with $A\approx Z+N$ and $N\leq 2Z$,
continues to hold for arbitrary $Z$ and $N$.
 Namely, the hyper-heavy nucleus condition $\frac{GM\mEL}{Ze^2}>1$ implies $\frac{GM Z\mEL}{Z^2e^2}>1$, 
and since $\mPR\approx 1836\mEL$ and $M=A\mPR$ with $A\geq Z$, this implies $\frac{GM^2}{Z^2e^2}>1$, which means
we are in the black hole sector. 
 But this is impossible if also $N\leq 2Z$ continues to hold for arbitrary $N$ and $Z$,
because the empirical charge to mass ratio of the proton together with $M=A(Z,N)\mPR$ and $A\approx Z+N$  and 
$N\leq 2Z$ implies that  $\frac{GM^2}{Z^2e^2}<1$ ($\ll 1$ in fact).
 Hence we have a contradiction.

 If one drops the assumption that $N\leq 2Z$ (as in a neutron star), then the hyper-heavy nucleus condition can be made 
compatible with the black hole sector condition and the mass formula  $M=A(Z,N)\mPR$ with $A\approx Z+N$.
 However, our results do not apply to the black hole sector, and it is an interesting 
open question whether the Dirac Hamiltonian acting on bi-spinor wave functions of a
test electron supported entirely inside the Cauchy horizon of the black hole spacetime of a hyper-heavy nucleus
with mass formula  $M=A(Z,N)\mPR$ with $A\approx Z+N$ is well defined (with or without anomalous magnetic moment 
taken into account), or at least has self-adjoint extensions, and if so, whether there are two families of eigenvalues with 
cluster points $\pm\mEL c^2$. 
 While this may never be of concern to experimental physicists, for the satisfaction of intellectual curiosity we have
begun to investigate this problem \cite{KKTtwo}.

 In any case the mathematical spectra of hypothetical hyper-heavy \emph{naked} nuclei are not realized in nature according
to our analysis.

 Only if one drops the mass formula  $M=A(Z,N)\mPR$ with $A\approx Z+N$ completely and treats $M$, $\mEL$ and $e$ as parameters, 
then the family of `hyper-heavy hydrogenic ion' eigenvalues having cluster point $-\mEL c^2$ can exist on a naked singularity spacetime, 
mathematically speaking, but it would be a bit of a stretch to refer to it as a hyper-heavy hydrogenic ion. 
 Whether there is any physical scenario which could lead to such a situation in nature, or whether this is pure science fiction, we don't 
know, but it may be worth pondering.
 
 Beside the hyper-heavy hydrogenic `black hole ion' problem mentioned above, there are a number of spectral questions which we have not 
answered, such as whether the boundary points $\pm \mEL c^2$ of the essential spectrum are eigenvalues.
 We have also not attempted to determine the discrete spectra in detail, which is worth the effort only if
one has a compelling candidate for the physically correct self-adjoint $H$.

 Then there are electromagnetic vacuum laws such as the one proposed by {Bopp, Land\'e--Thomas, and
Podolsky} (BLTP) which are not compatible with the form of the spacetime metric given in (\ref{dsSQR}). 
 A similar study such as the one conducted in this paper should also be carried out for vacuum laws of the BLTP type.

 The test electron approximation can be expected to be very accurate for large $Z$ but certainly less so for
hydrogen ($Z=1$).
 Therefore it is desirable to overcome the test electron approximation. 
 This has so far only been accomplished in a fully satisfactory manner in the non-relativistic Schr\"odinger model of 
hydrogenic ions.
 We consider it to be one of the most challenging and important open problems of rigorous relativistic quantum mechanics.

\smallskip
\noindent{\textbf{Acknowledgement}: We thank Moulik Balasubramanian for interesting discussions.}

 \newpage

\begin{appendix}
\numberwithin{equation}{section}
\section{}\vspace{-5pt}
 In this appendix we show that general relativity has a regularizing effect on the 
 Bohr--Sommerfeld-type model of hydrogenic large-$Z$ ions with Coulomb interactions.
We also demonstrate this effect when Coulomb interactions are replaced by electric interactions
in a nonlinear electrostatic vacuum.
 Like Bohr we work for simplicity only with circular orbits.\vspace{-5pt}
\subsection{Coulomb interactions}
 {Following} Vallarta, we here assume that the static spacetime of a point nucleus is given by the
Reissner--Weyl--Nordstr\"om solution of the Einstein--Maxwell system.
 Then the general-relativistic Bohr--Sommerfeld-type energies $E_n^{GR}(Z,N), n\in\Nset$, of a hydrogenic ion
with a nucleus of charge $Ze$ and mass $A(Z,N)\mPR$, with $Z\leq A(Z,N) < 3Z$ for the known nuclei,
is determined by finding, for each $n\in\Nset$, the minimum w.r.t. $r$ of 
\begin{equation}
U_n^{GR}(r) := 
\mEL c^2\sqrt{1+\frac{n^2\hbar^2}{\mEL^2c^2r^2}}\sqrt{1-\frac{2G}{c^4r}\left[A(Z,N)\mPR c^2-\frac{Z^2e^2}{2r}\right]} 
- \frac{Ze^2}{r} .
\label{eq:HnrGRELz}
\end{equation}
 Switching to the dimensionless variables $\rho = r \mEL c /\hbar$ and $V_n^{GR}(\rho) = U_n^{GR}(r)/\mEL c^2$ yields
\begin{equation}
V_n^{GR}(\rho) := 
\sqrt{1+n^2\tfrac{1}{\rho^2}\Big.}\,\sqrt{1-\alphaS\gamma_{pe}\left[2A(Z,N)-\eps\alphaS Z^2\tfrac{1}{\rho}\right]\tfrac{1}{\rho}}-Z\alphaS\tfrac{1}{\rho}.
\label{eq:HnrGRELdimLESSz}
\end{equation}
 Recall that $\alphaS:={e^2}/{\hbar c}\approx 1/137.036$, that $\eps:=\mEL/\mPR \approx 1/1836$, 
and that $\gamma_{pe}:={G\mEL\mPR}/{e^2}\approx 4.5\times 10^{-40}$. 
 Asymptotically for very large $\rho$ the dimensionless general-relativistic energy function 
$V_n^{GR}(\rho)\sim 1 -  {\alphaS(Z + A(Z,N)\gamma_{pe})}/{\rho}$, 
just like the special-relativistic one with Newtonian gravity added to the Coulombian electricity.
 As $\rho$ becomes smaller and smaller, the term $\left[1-\eps\alphaS\frac{Z^2}{2A(Z,N)}\frac{1}{\rho}\right]$ changes sign,
namely for  $\rho < \frac12\frac{Z^2}{A(Z,N)}\eps\alphaS$ it is negative. 
 We see that `overall' the zero  $\rho_0(Z,N) :=\frac12\frac{Z^2}{A(Z,N)}\eps\alphaS$ grows with $Z$; more precisely, 
$\frac16 \eps\alphaS Z \leq \rho_0(Z,N) \leq \frac12 \eps\alphaS Z$. 
 Now, the smallest such $\rho(Z,N)$ where the sign switch happens is a tiny dimensionless distance, and the factor $-1/\rho$ before
the $[\quad]$-bracketed term could threaten that the whole expression under the square root becomes negative before this tiny $\rho$
is reached (starting from large $\rho$ and making $\rho$ smaller and smaller). 
 Yet, since $\gamma_{pe}$ is much tinier yet, the whole expression under the square root remains positive for all $\rho$. 

 For very small $\rho$ the general-relativistic gravitational square-root factor in (\ref{eq:HnrGRELdimLESSz}) contributes a 
factor $Z \sqrt{\gamma_{pe} \eps } \alphaS/\rho$ while the special-relativistic square-root factor in (\ref{eq:HnrGRELdimLESSz})
contributes a factor $n/\rho$ to the total square-root term.
 So for very small $\rho$ the asymptotic behavior is 
$V_n^{GR}(\rho)\sim  Z n\sqrt{\gamma_{pe} \eps}\alphaS/{\rho^2} \nearrow \infty$ as $\rho\searrow 0$, and except for the different
coefficient, this is like the behavior of the non-relativistic kinetic energy function ($\propto n^2/\rho^2$) 
 {in Bohr's model with
circular orbits.
 Thus $V_n^{GR}(\rho)$ always has a minimum at a strictly positive $\rho$ for each $Z \in\Nset$ and $n\in\Nset$.}
\newpage

 The special-relativistic version of this problem is qualitatively very different.
 Setting $G\searrow 0$ in (\ref{eq:HnrGRELz}) yields
\begin{equation}
U_n^{SR}(r) := 
\mEL c^2\sqrt{1+\frac{n^2\hbar^2}{\mEL^2c^2r^2}} - \frac{Ze^2}{r} ,
\label{eq:HnrSRELz}
\end{equation}
which is the same as setting $\gamma_{pe}\searrow 0$ in (\ref{eq:HnrGRELdimLESSz}), viz.
\begin{equation}
V_n^{SR}(\rho) := 
\sqrt{1+n^2\tfrac{1}{\rho^2}} - Z\alphaS\tfrac{1}{\rho} .
\label{eq:HnrSRELdimLESSz}
\end{equation} 
 Finding for each $n\in\Nset$ the minimum w.r.t. $r$, respectively $\rho$, will produce the principal energy values of Sommerfeld's
fine structure spectrum of a hydrogenic ion whenever $Z\leq 137$, but for each $n$ the bottom drops out when $Z> Z_*(n) 
{\geq Z_*(1)}$, with $Z_*(1) = 137$.\vspace{-5pt}

\subsection{Nonlinear electrostatic vacuum}

 It follows from the discussion in the main text that 
replacing Maxwell's ``law of the pure ether'' by a nonlinear electromagnetic vacuum law of the type 
considered in \cite{TZ}, and in this paper, amounts to replacing $\frac12{Z^2e^2}/{r}$ by $\cE(r)$ in 
(\ref{eq:HnrGRELz}) and ${Ze^2}/{r}$ by $e\phi(r)$ in (\ref{eq:HnrGRELz}) and in (\ref{eq:HnrSRELz}).
 The class of nonlinear vacuum laws considered in this paper weakens the Coulomb singularity to 
$\phi(r) \sim C_\beta^{\prime\prime} + C_\beta^{\prime}r^{-\beta}$ as $r\searrow 0$, with $\beta<1$, 
and this already removes the `large $Z$ catastrophe' from the corresponding special-relativistic Bohr--Sommerfeld type theory.
 The question thus becomes whether the general-relativistic square-root factor in (\ref{eq:HnrGRELz}) can now cause a 
spectral catastrophe, or not. 

 Since we consider only black hole-free spacetimes of nuclei, we have
$\lim_{r\searrow 0} m(r) = A(Z,N)\mPR  - \frac{1}{c^2}\cE(0)\leq 0$; cf. section 2.
 We need to distinguish $ m(0) = 0$ and $m(0) < 0$.

 Suppose first that $\lim_{r\searrow 0} m(r) < 0$ (possibly $-\infty$). 
 Then the general-relativistic square-root factor in (\ref{eq:HnrGRELz}) diverges $\propto {1}/{r^\kappa}$
as $r\searrow 0$, which together with the $1/r$ singularity of the special-relativistic square-root factor in (\ref{eq:HnrGRELz}) 
yields an overall ${1}/{r^{1+\kappa}}$ singularity, with $\kappa \geq \frac12$ (N.B. $\kappa=\frac12$ if 
$\lim_{r\searrow 0} m(r) = m(0) < 0$ exists, and $\kappa \in (\frac12,1]$ if 
$\lim_{r\searrow 0} m(r) = -\infty$). 
 This overpowers the $r^{-\beta}$ singularity with $0<\beta<1$.
 There is no `large-$Z$ catastrophe'.

 Consider next the case where $\lim_{r\searrow 0} m(r) = 0$ \emph{for all $Z$}. (Admittedly this is presumably a
purely academic situation, but it's feasible mathematically.)
 In this case we need also an assumption about how $m(0)=0$ is approached. 
 We consider power laws $m(r) = A r^\varkappa$ with $\varkappa >0$ and $A>0$.
 Then as $r\searrow 0$ the general-relativistic square-root factor in (\ref{eq:HnrGRELz}) diverges $\propto {1}/{r^{(1-\kappa)/2}}$
if $\varkappa\in(0,1)$, and otherwise converges to a positive constant $\leq 1$ if $\varkappa \geq 1$, with ``$<$'' iff
$\varkappa =1$.
 Together with the $1/r$ singularity of the special-relativistic square-root factor in (\ref{eq:HnrGRELz}) this
yields a ${r^{({\min\{\varkappa,1\}-3})/{2}}}$ singularity which once again
overpowers the $1/r^\beta$ singularity with $0<\beta<1$.
 There is no `large-$Z$ catastrophe' in this case either.

\section{The family of electrovacuum spacetimes}
\label{app:electrovac}
 In this appendix we present a large family of static spherically symmetric electromagnetic vacuum spacetimes, one that 
includes both the RWN as well as Hoffmann's with either zero or negative bare mass, and all the members of which satisfy 
the assumptions \ref{massASS} and \ref{potASS} made in Section 3. 
 We begin by recalling \cite{TZ} that all such spacetimes are characterized by the choice of a single $C^2$ function of 
one variable $\zeta: \RR_+ \to \RR_+$, called the {\em reduced electromagnetic Hamiltonian} that satisfies the following properties
\begin{itemize}
\item[\textbf{(R1)}] $\lim_{\mu \to 0} {\zeta(\mu)}/{\mu} = 1$,
\item[\textbf{(R2)}] $\zeta'>0$ and $\zeta(\mu) - \mu \zeta'(\mu) \geq 0 \qquad \forall\mu>0$.
\item[\textbf{(R3)}] $\zeta'(\mu) + 2 \mu \zeta''(\mu) \geq 0\qquad \forall \mu>0$.
\end{itemize}
The first condition ensures that the electromagnetic vacuum law agrees with Maxwell's in the weak field limit. 
 The second condition guarantees that the energy tensor of the theory satisfies the dominant energy condition, and the third 
condition is equivalent to this theory being derivable from an action principle with a single-valued Lagrangian. 
The reduced Hamiltonian corresponding to Maxwell's vacuum law is $\zeta(\mu) = \mu$, while the one corresponding 
to {Born's law is $\zeta(\mu) = \sqrt{1+2\mu} - 1$;
in the electrostatic special case this coincides with the one from Born--Infeld's vacuum law}.

It was shown in \cite{TZ} that for every such choice of $\zeta$, and {parameters} $M>0, Q\in \RR$, 
there is a corresponding static, 
spherically symmetric, asymptotically flat solution of {the} Einstein--Maxwell system with ADM mass $M_{\mbox{\tiny{ADM}}}$ 
equal to $M$, and total charge $Q$, 
as described in Sec.~\ref{sec:EST}, with metric line element (\ref{dsSQR}) defined in terms of the radial mass function $m(r)$ 
as in (\ref{mOFr}), where
\begin{equation}\label{def:massfunc}
m(r) = M - \frac{1}{c^2} \int_r^\infty \zeta\Big(\tfrac{Q^2}{2s^4}\Big)s^2 ds
\end{equation}
and the electrostatic potential
\begin{equation}\label{def:elecpot}
\phi(r) = Q \int_r^\infty \zeta'\Big(\tfrac{Q^2}{2s^4}\Big)\tfrac{1}{s^2} ds.
\end{equation}
 It was further shown in \cite{TZ} that under additional assumptions on $\zeta$, one could make sure that the singularity present 
at the center of these spacetimes (which is not shielded by a horizon) is of the mildest form possible, namely a conical singularity, 
with zero bare mass $m(0) = 0$, and that for these spacetimes the ADM mass $M_{\mbox{\tiny{ADM}}} = m(\infty)$ 
is equal to the total electrostatic energy.
  The prime example of these is the Hoffmann spacetime discussed {in section~\ref{sec:EST}}.

 Here we introduce a larger class of such spacetimes that includes, in addition to mildly singular manifolds like 
Hoffmann's, also those with much more severe singularity at the center, such as the RWN, which has negative infinite bare mass. 
 We begin by introducing a one-parameter family of reduced Hamiltonians $\zeta_1$, parametrized by a positive number $\mu_1>0$, 
as follows:
\begin{equation}
\zeta_1(\mu) := \min\{ \mu, \sqrt{\mu_1 \mu} \}.
\end{equation}
We note that $\zeta_1$ is Lipschitz continuous and satisfies assumptions {\bf (R1--R3)} away from its kink at $\mu=\mu_1$.  
The {mass function corresponding} to $\zeta_1$ is denoted  by $m_1(r)$.  It is a $C^1$ function, and can be computed from 
(\ref{def:massfunc}):
\begin{equation}
m_1(r) := \left\{\begin{array}{ll}
M - \frac{Q^{3/2}}{2^{3/4}c^2} \mu^{1/4} & \mbox{if}\quad \mu < \mu_1 ,\\
\frac{M^2 c^2}{2^{5/4} Q^{3/2}} \mu^{-1/4} & \mbox{if}\quad \mu > \mu_1;\end{array}\right.\quad \mbox{with}\quad \mu(r) := \frac{Q^2}{2r^4}.
\end{equation}
Next we show that for a particular choice of the parameter $\mu_1$ this becomes a model for the vacuum spacetime outside a 
point charge of mass $M=M_{\mbox{\tiny{ADM}}}$ and charge $Q$:  Let 
\begin{equation} r_1 := \frac{Q^2}{Mc^2}
\end{equation}
denote the ``classical radius" of this point charge (i.e. the distance at which its electrostatic self-energy equals the rest 
energy of the particle), and set $\mu_1 = \mu(r_1)$. We then obtain
\begin{equation}
m_1(r) 
:= 
\left\{\begin{array}{ll}
     \frac{M^2 c^2}{2 Q^2} r &\mbox{for}\quad  r < r_1, \\
      M - \frac{Q^2}{2c^2 r} & \mbox{for}\quad r > r_1.
         \end{array}
\right.
\end{equation}
It is easy to verify that the above mass function satisfies the assumptions \ref{massASS} that were made in Section 3, provided 
that the mass $M$ and charge $Q$ of the particle satisfy the ``no horizon" condition
\begin{equation}\label{noBH} \frac{GM^2}{Q^2} < 1. \end{equation}

We are now ready to define a whole family of electrovacuum spacetimes with mass functions and electrostatic potentials that satisfy 
the assumptions \ref{massASS} and \ref{potASS} and can serve as models for the vacuum outside a point charge of mass 
$M_{\mbox{\tiny{ADM}}}$ and charge $Q$ and arbitrary finite or infinite negative bare mass:

\begin{prop} Let $M\in \RR_+$ and $Q\in \RR$ be given, subject to (\ref{noBH}).
  Let $\zeta: \RR_+ \to \RR$ be any $C^2$ function 
satisfying assumptions {\bf (R1--R3)} above, and in addition assume
\begin{equation}\label{zetaCOND}
 \zeta(\mu) > \min\{ \mu, \sqrt{\mu_1\mu}\},\qquad \mu_1 := \frac{M^4 c^8}{2 Q^6}.
\end{equation}
Then the corresponding static, spherically symmetric, asymptotically flat solution of the
Einstein--Maxwell equations, with vacuum law 
given by $\zeta$, is characterized by the mass function $m(r)$ as in (\ref{def:massfunc}) and electrostatic potential $\phi(r)$ 
as in (\ref{def:elecpot}) that satisfy the assumptions \ref{massASS} and \ref{potASS} made in Section 3 of this paper.
\end{prop}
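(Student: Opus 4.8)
The plan is to read all required analytic properties of $m(r)$ and $\phi(r)$ off their defining integrals (\ref{def:massfunc}) and (\ref{def:elecpot}) and to compare throughout with the explicitly solvable reference model $\zeta_1\mapsto m_1$ constructed just above. As a preliminary, condition \textbf{(R1)} gives $\zeta(\mu)\sim\mu$ and $\zeta'(\mu)\to 1$ as $\mu\to 0$, so along $\mu(s)=Q^2/2s^4$ the two integrands decay like $Q^2/2s^2$ and $1/s^2$ at $s=\infty$; hence both integrals converge and define functions that are $C^1$ on $(0,\infty)$, with $m(r)\to M$ and $\phi(r)\sim Q/r=Ze/r$ as $r\to\infty$ and $\phi'(r)=-Q\zeta'(\mu(r))/r^2$. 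This already settles continuity of $m$, the $C^1$ property of $\phi$, and the two large-$r$ bullets.

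The heart of the argument is a rigidity observation. Condition \textbf{(R2)} makes $\mu\mapsto\zeta(\mu)/\mu$ nonincreasing with limit $1$ at $0$, whence $\zeta(\mu)\le\mu$ throughout; meanwhile the domination (\ref{zetaCOND}) in the strong-field regime $\mu>\mu_1$ forces $\zeta(\mu)/\mu\to$ a limit $\ge 1$ as $\mu\downarrow\mu_1$, so that $\zeta(\mu_1)=\mu_1$ and therefore $\zeta(\mu)\equiv\mu$ on $(0,\mu_1]$. Consequently $m(r)$ coincides with the Reissner--Weyl--Nordstr\"om mass $M-\tfrac{Q^2}{2c^2 r}=m_1(r)$ for every $r\ge r_1$, while on $(0,r_1)$ integrating $m'(r)=c^{-2}\zeta(\mu(r))r^2\ge c^{-2}\zeta_1(\mu(r))r^2=m_1'(r)$ downward from $m(r_1)=m_1(r_1)$ yields $m(r)\le m_1(r)$, strictly by the strict domination. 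Thus $m\le m_1$ globally, and this single comparison settles two further bullets at once: since $f^2(r)=1-\tfrac{2G}{c^2}\tfrac{m(r)}{r}\ge 1-\tfrac{2G}{c^2}\tfrac{m_1(r)}{r}=f_1^2(r)$, and since the no-horizon hypothesis (\ref{noBH}) makes $f_1^2>0$ everywhere (it equals $1-GM^2/Q^2$ for $r<r_1$ and is a quadratic in $1/r$ with negative discriminant for $r>r_1$), we obtain $m(r)/r<c^2/2G$ for all $r$; and since $m(0)\le m_1(0)=0$ with strict inequality, the bare mass $m(0)$ is strictly negative (possibly $-\infty$), $m$ being nondecreasing from $m(0)$ up to $M$.

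It remains to cast the near-zero behavior of $m$ and $\phi$ in the precise forms demanded by the third bullets of Assumptions~\ref{massASS} and \ref{potASS}. Here I would extract the leading growth exponent of $\zeta$ at $\mu=\infty$: writing $\zeta(\mu)\sim c_\infty\mu^{p}$ with $p\in[\tfrac12,1]$ (the range forced by $\sqrt{\mu_1\mu}<\zeta(\mu)\le\mu$), the integrand $\zeta(\mu(s))s^2\sim s^{2-4p}$ near $s=0$ gives $m(r)\sim-C_\alpha r^{-\alpha}$ with $\alpha=\max\{0,4p-3\}\in[0,1]$ and $C_\alpha>0$ (the case $p<\tfrac34$ producing finite bare mass, $\alpha=0$, $m(0)=-C_0$, and the RWN endpoint $p=1$ producing $\alpha=1$), while the parallel computation on (\ref{def:elecpot}) gives $\phi(r)\sim C^{\prime\prime}_\beta+C^{\prime}_\beta r^{-\beta}$ with $\beta=4p-3\le 1$, the subleading term in the expansion of $\zeta'$ supplying the $O_1(r^{3/2-\beta})$ remainder. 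This asymptotic step is the one I expect to demand the most care, and it is the only place where regularity of the vacuum law beyond \textbf{(R1--R3)} really enters: for a completely arbitrary admissible $\zeta$ the strong-field profile need not be a pure power, so to land exactly on the forms $m(r)\sim-C_\alpha r^{-\alpha}$ and $\phi(r)\sim C^{\prime\prime}_\beta+C^{\prime}_\beta r^{-\beta}$ one must assume---as RWN, Hoffmann, and $\zeta_1$ all do---that $\zeta$ carries a leading power-law behavior as $\mu\to\infty$.
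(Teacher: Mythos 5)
Your proposal is correct in substance, and its skeleton is the same as the paper's own proof: (i) read smoothness and the large-$r$ bullets directly off (\ref{def:massfunc})--(\ref{def:elecpot}); (ii) use pointwise domination of $\zeta$ over $\zeta_1$ to get $m\le m_1$, hence the no-horizon bullet (via (\ref{noBH})) and a nonpositive bare mass; (iii) posit a power law $\zeta(\mu)\sim c\,\mu^p$ as $\mu\to\infty$ with $p\in[\frac{1}{2},1]$ and convert it into the stated near-origin asymptotics of $m$ and $\phi$. Your caveat about the power-law step is exactly right but costs you nothing relative to the paper: its proof makes the identical supposition in (\ref{zetaASS}) (``Suppose now that\dots''), deriving $\frac{1}{2}\le b\le 1$ from (\ref{zetabounds}) where you use $\sqrt{\mu_1\mu}<\zeta(\mu)\le\mu$.

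Two differences deserve comment. First, your ``rigidity observation'' is a valid deduction that the paper never draws, and it exposes a defect in the hypothesis rather than in your argument: since \textbf{(R1)}--\textbf{(R2)} force $\zeta(\mu)\le\mu$ for all $\mu$, condition (\ref{zetaCOND}) is literally unsatisfiable on $(0,\mu_1)$ (where the minimum equals $\mu$), and even under your relaxed, consistent reading it forces $\zeta\equiv\mu$ on $(0,\mu_1]$, i.e.\ exactly Maxwellian weak-field behavior. That excludes Born/Hoffmann, for which $\zeta(\mu)=\sqrt{1+2\mu}-1<\mu$ strictly --- one of the two examples this appendix is expressly meant to cover. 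The paper's proof avoids confronting this only by silently using the non-strict comparison $\zeta\ge\zeta_1$ under the integral; but that comparison also needs the weak-field range, so the hypothesis would have to be reformulated, not merely restricted to $\mu>\mu_1$, to admit Hoffmann. Second, a genuine (if small) gap on your side: your formula $\alpha=\max\{0,4p-3\}$ silently includes the borderline exponent $p=\frac{3}{4}$, where $\zeta(\mu(s))s^2\sim C/s$ near $s=0$, the field-energy integral diverges logarithmically, and $m(r)\to-\infty$ like $\log r$, which is \emph{not} of the form $-C_\alpha r^{-\alpha}$; the paper excludes this case explicitly (``more subtle due to logarithmic divergence''), and your power-law supposition must likewise carry the restriction $p\neq\frac{3}{4}$. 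Apart from that edge case your verification of every bullet of Assumptions \ref{massASS} and \ref{potASS} is complete, and at two spots --- strict negativity of the bare mass, and the explicit check that $f_1^2=1-GM^2/Q^2>0$ for $r<r_1$ --- it is more careful than the paper's.
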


\begin{proof}
We first verify assumptions \ref{massASS}.  From (\ref{def:massfunc}) $m$ is clearly a $C^1$ function of $r$.  Moreover
\begin{equation}
m(r) = M - \frac{1}{c^2}\int_r^\infty \zeta(\mu(s)) s^2 ds  \leq  M - \frac{1}{c^2} \int_r^\infty \zeta_1(\mu(s)) s^2 ds = m_1(r),
\end{equation}
so that the mass function of this manifold sits below the mass function $m_1(r)$ of the model spacetime we constructed in the above, 
and hence satisfies the no-black-hole condition $\frac{m(r)}{r} \leq \frac{ c^2}{2G}$, since $m_1(r)$ is seen to satisfy this condition.

We note that $m(r) \leq m_1(r)$ allows these spacetimes to have negative bare mass that can be finite or infinite. 

Consider now the function $\zeta$.  It is a smooth and by {\bf (R1,R2)} positive and increasing function of its argument.  By 
integrating the differential inequalities in {\bf (R2,R3)} one obtains that
\begin{equation}\label{zetabounds}
\forall\ 0<\mu'<\mu:\quad \frac{\mu'}{\mu} \leq \frac{\zeta(\mu')}{\zeta(\mu)} \leq \sqrt{\frac{\mu'}{\mu}}.
\end{equation}
Suppose now that $a,b\in\RR$ and $c_a,c_b>0$ are such that
\begin{equation} \label{zetaASS}
\zeta(\mu) 
\sim 
\left\{
\begin{array}{ll} 
c_a\mu^a & \mbox{for}\quad \mu \to 0,\\
c_b\mu^b & \mbox{for}\quad \mu \to \infty.
\end{array}
\right.
\end{equation}
Then by (\ref{zetabounds}) and {\bf (R1)} we have that $a = 1$, $c_a = 1$, and $ \frac{1}{2}\leq b \leq 1$.   

Next we observe that
\beq\label{mATnull}
m(0) = M - \frac{1}{c^2}\int_0^\infty \zeta(\mu(s)) s^2 ds.
\eeq
Consider the following two cases: (F) The integral in (\ref{mATnull}) is finite, and (I) that integral is infinite.  

If that integral is finite, then by the additional assumption we have made about $\zeta$, namely (\ref{zetaCOND}), we have
\beq
m_0 := m(0) \leq m_1(0) = 0.
\eeq
Moreover, since
\beq
\int_0^\infty \zeta(\mu(s)) s^2 ds = \frac{Q^{3/2}}{2^{11/4}}\int_0^\infty \mu'^{-7/4}\zeta(\mu') d\mu',
\eeq
it is clear that Case (F) corresponds to $b<\frac{3}{4}$ and Case (I) to $b \geq \frac{3}{4}$.  

In Case (F), we can express the mass function in an alternative way,
\beq
m(r) = m_0 + \frac{1}{c^2} \int_0^r \zeta(\mu(s)) s^2 ds, 
\eeq
where $m_0 := m(0) \leq 0$ is the {\em bare mass} of the central singularity.  
 The asymptotics we have established for $\zeta$ then imply that, with $ \lambda := 3-4b$, 
\beq\label{posla}
m(r) \sim \left\{\begin{array}{ll} m_0 + A_\lambda r^\lambda \, &\mbox{as}\  r \to 0 ,\\ 
M - \frac{Q^2}{2c^2r} \, &\mbox{as}\  r \to \infty,\end{array}\right.
\qquad \lambda >0.
\eeq
In Case (I), on the other hand, we obtain
\beq\label{negla}
m(r) \sim \left\{\begin{array}{ll} B_\lambda r^{\lambda} \, &\mbox{as}\  r \to 0 ,\\
M - \frac{Q^2}{2c^2r} \, &\mbox{as}\  r \to \infty,\end{array}\right.\qquad \lambda <0.
\eeq
(The borderline case $b=3/4$ is more subtle due to logarithmic divergence.  We will not consider it here.)
{ We note that (\ref{posla}) corresponds to $\alpha=0$ in Assumptions \ref{massASS} while 
(\ref{negla}) corresponds to $\alpha > 0$ in Assumptions \ref{massASS}, and $\lambda=-\alpha$ then.}

Having established that Assumptions \ref{massASS} 
are satisfied for this family of spacetimes, we move on to the analysis of the 
electrostatic potential $\phi$.  First, we observe that by {\bf (R2,R3)} and (\ref{zetabounds}), 
\beq
\zeta'(\mu_0)\sqrt{\frac{\mu_0}{\mu}} \leq \zeta'(\mu) \leq 1,\qquad\mbox{ for all } 0<\mu_0<\mu.
\eeq
It follows that $\zeta'$ inherits the following asymptotics from $\zeta$:
\beq
\zeta'(\mu) \sim \left\{\begin{array}{ll} 1 \, &\mbox{as}\  \mu \to 0,\\ 
bc_b\mu^{b-1} \, &\mbox{as}\  \mu \to \infty,\end{array}\right.
\eeq
with $\frac{1}{2}\leq b \leq 1$ and hence, in Case (F) the potential $\phi$ satisfies, once again with $\lambda := 3-4b$,
\beq\label{posLA}
\phi(r) \sim \left\{\begin{array}{ll} \widetilde{C}''_\lambda - \widetilde{C}'_\lambda r^\lambda \, &\mbox{as}\  r \to 0,\\
\frac{Q}{r} \, &\mbox{as}\  r \to \infty, \end{array}\right. \qquad \lambda > 0,
\eeq
{with $\widetilde{C}'_\lambda >0$ and $\widetilde{C}''_\lambda >0$ if $Q>0$}, 
while in Case (I) we have 
\beq\label{negLA}
\phi(r) \sim \left\{\begin{array}{ll} \widehat{C}'_\lambda r^{\lambda} \, &\mbox{as}\  r \to 0,\\
\frac{Q}{r} \, &\mbox{as}\  r \to \infty. \end{array}\right. \qquad \lambda <0.
\eeq
{with $\widehat{C}'_\lambda >0$ if $Q>0$.
 We note that (\ref{posLA}) corresponds to $\beta\leq 0$ in Assumptions \ref{potASS} while 
(\ref{negLA}) corresponds to $\beta>0$ in Assumptions \ref{potASS}, and $\lambda=-\beta$ then.
 Note that $\lambda\geq -1$.}
\end{proof}

\end{appendix}

\newpage
\centerline{\textbf{Data availability statement}:}

Data sharing is not applicable to this article as no new data were created or analyzed in this study.

\vfill\vfill
\hrule
\end{document}